\newtheorem{theorem}{Theorem}[section]
\newtheorem{corollary}[theorem]{Corollary}
\newtheorem{lemma}[theorem]{Lemma}
\newtheorem{proposition}[theorem]{Proposition}
\newtheorem{assumption}[theorem]{Assumption}
\numberwithin{equation}{section}
\numberwithin{theorem}{section}
\newcommand{\qed}{\hfill$\Box$}
\newenvironment{proof}{\begin{trivlist}\item[]{\em Proof:}\/}{%
\qed\end{trivlist}}
\newenvironment{proofof}[1]{%
\begin{trivlist}\item[]{\em Proof of #1}\ }{\qed\end{trivlist}}
\newcommand{\Z}{{\mathbb Z}}
\newcommand{\R}{{\mathbb R}}
\newcommand{\C}{{\mathbb C\hspace{0.05 ex}}}
\newcommand{\N}{{\mathbb N}}
\newcommand{\T}{{\mathbb T}}
\newcommand{\cf}{{\mathbbm 1}} 
\newcommand{\ci}{{\rm i}}
\newcommand{\re}{{\rm Re\,}}
\newcommand{\rme}{{\rm e}}
\newcommand{\rmd}{{\rm d}}
\newcommand{\FT}[1]{\widehat{#1}}
\DeclareMathOperator*{\sign}{sign}
\newcommand{\norm}[1]{\Vert #1\Vert}
\newcommand{\defset}[2]{ \left\{ #1\left|\,
 #2\makebox[0cm]{$\displaystyle\phantom{#1}$}\right.\!\right\} }
\newcommand{\set}[1]{\{#1\}}
\newcommand{\mean}[1]{\langle #1\rangle}
\newcommand{\vep}{\varepsilon}
\newcommand{\defem}[1]{{\em #1\/}}
\newcommand{\anhH}{H^{{\rm anh}}_L}
\newcounter{jlisti}
\newcommand{\myfigure}[2]{ \includegraphics*[#1]{#2} }
\newcommand{\rphi}{r_\Phi}
\newcommand{\Tobs}{T^{(\text{obs})}}
\newcommand{\Tpred}{T^{(\text{pred})}}
\newcommand{\email}[1]{E-mail: \tt #1}
\newcommand{\emailjani}{\email{jani.lukkarinen@helsinki.fi}}
\newcommand{\addressjani}{\em University of Helsinki,
Department of Mathematics and Statistics\\
\em P.O. Box 68,
FI-00014 Helsingin yliopisto, Finland}
\title{Thermalization in harmonic particle chains with velocity flips}
\author{Jani Lukkarinen\thanks{\emailjani}\\[1em]
$^*$\addressjani}
\date{\today}
\begin{document}

\selectlanguage{english}
\maketitle

\begin{abstract} 
We propose a new mathematical tool for the study of transport properties of
models for lattice vibrations in crystalline solids.  By replication of
dynamical degrees of freedom, we aim at a new dynamical system where the
``local'' dynamics can be isolated and solved independently from the ``global''
evolution.  The replication procedure is very generic but not unique as it
depends on how the original dynamics are split between the local and global
dynamics.  As an explicit example, we apply the scheme to study thermalization
of the pinned harmonic chain with velocity flips.  We improve on the previous
results about this system by showing that after a relatively short time period
the average kinetic temperature profile satisfies the dynamic Fourier's law in a
local
microscopic sense without assuming that the initial data is close to a local
equilibrium state.  The bounds derived here prove that the above thermalization
period is at most of the order $L^{2/3}$, where $L$ denotes the number of
particles in the chain.  
In particular, even before the diffusive time scale Fourier's law becomes a
valid approximation of the evolution of the kinetic temperature profile.
As a second application of the dynamic replica method, we also briefly discuss
replacing the velocity flips by
an anharmonic onsite potential.
\end{abstract}

\vskip 1em 

\begin{center}
\textit{Dedicated to Herbert Spohn, with sincere gratitude for his support,\\
inspiration and insight.}
\end{center}

\newpage

\section{Introduction}
\label{sec:intro}

The energy transport properties of lattice vibrations in
crystalline solids have recently attracted much research activity.  In the
simplest 
case energy is the only relevant conserved quantity, and then it is expected
that for a large class of three or higher dimensional systems energy transport
is diffusive and the Fourier's law of heat conduction holds; see for instance 
\cite{bonetto00} for a discussion.  In contrast, many
one-dimensional systems exhibit anomalous energy transport
violating the Fourier's law.
Section 7 of Ref.~\cite{spohn13} offers a concise summary of the state of the
art in the results and understanding of transport properties of such systems.

If a suitable stochastic noise is added to the Hamiltonian interactions, also
one-dimen\-sional particle chains can produce diffusive
energy
transport.  A particularly appealing test case is obtained by taking a harmonic
chain, which has  ballistic energy transport \cite{lebo67}, and endowing each
of the  
particles with its own Poissonian clock whose rings will flip the velocity of
the particle.  
To our knowledge, this \defem{velocity flip model} was first
considered in \cite{FFL94}, and it is one of the simplest known particle chain
models which has a finite heat conductivity and satisfies the time-dependent
Fourier's law.  Its transport properties depend on the harmonic interactions,
most importantly on whether the forces have an on-site component
(\defem{pinning}) 
or not.  For nearest neighbor interactions, 
if there is no pinning, there are two locally conserved fields, while with
pinning there is only one, the energy density.  In addition, the thermal
conductivity, and hence the energy diffusion constant, happens to be independent
of temperature, which implies that the Fourier's law corresponds to a
\defem{linear} 
heat equation.  This allows for explicit representation of its solutions 
in terms of Fourier transform.  It also leads to the useful
simplification that for any stochastic initial state also the expectation value
of
the temperature distribution satisfies the Fourier's law, even when
the initial total energy has macroscopic variation.

The main goal of the present contribution is to introduce a mathematical method
which allows splitting the dynamics of the velocity flip model into local and
global components in a 
controlled manner.  The local evolution can then be chosen conveniently to
simplify the analysis. For instance, here we show how to apply the method to
separate the harmonic interactions and a dissipation term generated by the noise
into explicitly solvable local dynamics.  Although created with perturbation
theory in mind, the method itself is non-perturbative.
In fact, our main result will assume the exact opposite: we work in the
regime in which the noise dominates over the harmonic evolution, as then it will be
possible to neglect certain resonant terms which otherwise would require more
involved analysis.

The method of splitting is quite generic but in the end it only amounts to
reorganization of 
the original dynamics.  Therefore, it is important to demonstrate that it
can become a useful tool also in practice.  As a test case, we study here the
velocity flip model with pinning and in the regime where the noise dominates,
i.e., the flipping rate is high enough.  The ultimate goal is to prove that this
system \defem{thermalizes} for any sensible initial data: after some initial
time $t_0$ any local correlation function, i.e., an expectation value of
a polynomial of positions and velocities of particles microscopically
close to some given point, is well approximated by the corresponding
expectation taken over some 
statistical equilibrium (thermal) ensemble.  The thermalization time $t_0$ may
depend on the initial data and on the system size but for systems with
``normal'' transport this time should be less than diffusive, $t_0\ll L^2$,
where $L$ denotes the length of the chain. 

We do not have a proof of such a strong statement yet and we only indicate how
the present methods should help in arriving at such conclusions.  Instead of the
full local statistics, we focus here on the time evolution of the
\defem{average kinetic temperature profile}, the observables $\mean{p(t)_x^2}$
where $p(t)_x$ denotes the
momentum of the particle at the site $x$ at time $t$.  The momentum is a
random variable whose value depends both on the realization of the flips and on
the distribution of the initial data at $t=0$.   We use
$\mean{\cdot}$ to denote the corresponding expectation values.  We prove
here that for a large class of
harmonic interactions with pinning, the average kinetic temperature
profile does thermalize and its
evolution will follow the time evolution dictated by the dynamic Fourier's law
as soon as a thermalization period $t_0 = O(L^{2/3})\ll L^2$ has passed.

As mentioned above, the velocity flip model has been studied before, using
several different methods from numerical to mathematically rigorous analysis.
In 
\cite{FFL94}, it was proven that every translation invariant stationary state of the
infinite chain  
with a finite entropy density is
given by a mixture of canonical Gibbs states, hence with temperature
as the sole parameter.   This was shown to hold even when fairly generic
anharmonic 
interactions are included.  Since the dynamics conserves total energy, this
provides 
strong support to the idea that energy is the only ergodic variable in the
velocity flip model with pinning.  

Results from numerical analysis of the velocity flip model are described in
\cite{DKL11}. There the covariance of the nonequilibrium steady state of a chain
with Langevin heat baths attached to both ends was analyzed, and it was observed
that the second order correlations in the steady state coincide with those of a
similar, albeit more strongly stochastic, model of particles coupled everywhere
to self-consistently chosen heat baths \cite{bll02}.  Hence, in its steady state
the stationary Fourier's law is satisfied with an explicit formula for the
thermal conductivity; the full mathematical treatment of the case without
pinning is given in  \cite{BO11}.  

It was later proven that, unlike the self-consistent heat bath model, the
velocity flip model satisfies also the dynamic Fourier's law.  This was
postulated in  \cite{bkll11,bkll11b}, based on earlier mathematical work on
similar models by Bernardin and Olla (see e.g.\ \cite{bo05,B07}), and it was
later proven by Simon in  \cite{Simon12}.  (Although the details are only
given for the case without pinning, it is mentioned in the Remark after Theorem
1.2 that the proofs can be adapted to include interactions with pinning.)
Also the structure of steady state correlations and energy fluctuations are
discussed in \cite{bkll11b}
with supporting numerical evidence presented in \cite{bkll11}.  For a more
general explanatory discussion about hydrodynamic fluctuation theory, we refer
to a recent preprint by Spohn \cite{spohn13}.

The strategy for proving the hydrodynamic limit in \cite{Simon12} was based on
relative entropy methods introduced by Yau \cite{yau91} and Varadhan
\cite{var93}; see also \cite{ovy93} and \cite{kipnis99} for more references and
details.  There one begins by assuming that the initial state is close to a
local thermal equilibrium (LTE) state, which allows for unique definition of the
initial profiles of the hydrodynamic fields.  One considers
the relative entropy \defem{density} (i.e., entropy divided by the volume) 
of the
state evolved up to time $t$ with respect to a local equilibrium state
constructed from the hydrodynamic fields at time $t$, and the goal is to show
that the entropy density  
approaches zero in the infinite volume limit.  

In the present work we improve on the result proven in \cite{Simon12} in two
ways.  We prove that it is not necessary to assume that the initial state is
close to an LTE state; indeed, our main theorem is applicable for arbitrary
deterministic initial data, including those in which just one of the sites
carries energy. Instead, we allow for an
initial thermalization period---infinitesimal on the diffusive time scale---and
only after the period has passed is the evolution of the temperature profile
shown to follow a continuum heat equation. In particle systems directly coupled
to diffusion processes similar results have been obtained before: for instance,
in the references \cite{Uchi94,ShLu95} a hydrodynamic limit is proven assuming
only convergence of initial data. However, as discussed in Remarks 3.5 of
\cite{ShLu95}, even assuming a convergence restricts the choice of initial
data but we do not need to do it here.

Secondly, we show that the Fourier's law has a version involving
a lattice diffusion kernel for which the temperature profile is well
approximated by its macroscopic value at every lattice site and at every time
after the themalization period. 
This improves on
the standard estimates which only imply that the \defem{macroscopic averages} of
the two profiles coincide in the limit $L\to \infty$.  As shown in
\cite{yau94}, it is sometimes possible to use averaging
over smaller regions, of diameter $O(L^{a})$, $0<a<1$, but
it is not
easy to see how relative entropy alone could be
used to control local microscopic properties of the solution.  
The precise statements
and assumptions for our main results are given in Theorem \ref{th:maintprof} and
Corollary
\ref{th:maincoroll} in Sec.~\ref{sec:Tprofile}.

However, in two respects our result is less informative than the one in
\cite{Simon12}.  Firstly, we only describe the evolution of the \defem{average}
temperature profile, whereas the relative entropy methods produce statements
which describe the hydrodynamic limit profiles \defem{in probability}.
Secondly, we
do not prove here that the full statistics can be locally approximated by
equilibrium measures, although the estimate for the temperature profile does
indicate that this should be the case.  The thermalization of the other degrees
of freedom is only briefly discussed in Sec.~\ref{sec:LTE} where we introduce
a local version of the dynamic replica method.

The paper is organized as follows: We recall the definition of the velocity flip
model in Sec.~\ref{sec:model} and introduce various related notations there. 
The first version of the new tool, called \defem{global dynamic replica method},
is described in Sec.~\ref{sec:global} where we also discuss how it might be
applied to prove global equilibration for this model.  This discussion, as well
as the one involving the \defem{local dynamic replica method} in
Sec.~\ref{sec:LTE}, 
is not completely mathematically rigorous, for instance, due to missing regularity assumptions.
We have also included a discussion about applications of the dynamic replica method to other models
in Sec.~\ref{sec:anharm}.  To illustrate the expected differences to the present case, we briefly summarize there
the changes occurring when the velocity flips are replaced by an anharmonic onsite potential.

The main mathematical
content is contained in
Sec.~\ref{sec:Tprofile} where the global replica equations are applied to
provide a rigorous analysis of the time evolution of the kinetic temperature
profile, with the above mentioned 
Theorem \ref{th:maintprof} and Corollary \ref{th:maincoroll} as the main goals
of the section. 
We have included some related but previously known material in two Appendices. 
Appendix \ref{sec:applocal} concerns the explicit solution of the local dynamic
semigroup, and in Appendix \ref{sec:diffkernel} we derive the main properties of
the Green's function solution of the renewal equation describing the evolution
of the temperature profile.

\section{Velocity flip model on a circle}
\label{sec:model}

Mainly for notational simplicity, we consider here only one-dimensional periodic
crystals, i.e., particles on a circle.  For $L$ particles
we parametrize the sites on the circle by
\begin{align}
  & \Lambda_L := \Bigl\{-\frac{L-1}{2},\ldots,\frac{L-1}{2}\Bigr\} \, , \qquad
\text{if $L$ is odd},\\
  &\Lambda_L := \Bigl\{-\frac{L}{2}+1,\ldots,\frac{L}{2}\Bigr\}\, , \qquad
\text{if $L$ is even}.
\end{align}
Then always $|\Lambda_L|=L $ and $\Lambda_L\subset \Lambda_{L'}$ if $L\le L'$.
In addition, 
for odd $L$, we have $\Lambda_L = \defset{n\in \Z }{|n|<\frac{L}{2}}$.  We use
periodic arithmetic on $\Lambda_L$, setting 
$x'+x := (x'+x) \bmod \Lambda_L$ for $x',x\in  \Lambda_L$.  Sometimes we will
need lattices of several different
sizes simultaneously, and to stress the length of the cyclic group, we then
employ the notation $[x'+x]_L$ for $x'+x$.  Also, we use $-x$ to denote
$[0-x]_L$.

The particles are assumed to interact via linear forces with a finite range. 
The forces are determined by a map $\Phi:\Z \to \R$ which we assume to be
symmetric, $\Phi(-x)=\Phi(x)$.  We choose $\rphi$ to be odd and assume that
$\Phi(x)=0$ for all $|x| \ge \rphi/2$.  Then the support of $\Phi$ lies in
$\Lambda_{\rphi}$.  The forces are assumed to be stable and pinning, i.e., 
the discrete Fourier transform of $\Phi$ is required to be strictly positive. 
The square root
of the Fourier transform determines the  dispersion relation $\omega:\T \to \R$
which is a smooth function on the circle $\T:=\R/\Z$
with $\omega_0 := \min_{k\in \T} \omega(k) > 0 $.
We define the corresponding periodic interaction matrices $\Phi_L \in
\R^{\Lambda_L\times \Lambda_L}$ on $\Lambda_L$ by setting
\begin{align}\label{eq:defphiL}
  (\Phi_L)_{x',x} := \Phi([x'-x]_L)\, , \quad \text{for all }x',x\in 
\Lambda_L\, .
\end{align}
This clearly results in a real symmetric matrix.

Fourier transform $\mathcal{F}_L$ maps functions
$f:\Lambda_L\to\C$ to $\FT{f} : \Lambda_L^* \to
\C$, where $\Lambda^* := \Lambda_L/L \subset (-\frac{1}{2},\frac{1}{2}]$ is the
dual lattice and for $k\in \Lambda_L^*$ we set
\begin{equation} 
   \FT{f}(k) = \sum_{x\in \Lambda_L} f(x) \textrm{e}^{-\textrm{i} 2\pi k \cdot
x} \, .
   \end{equation} 
The formula holds in fact for all $k\in \mathbb{Z}/L$, in the sense that the
right hand side is then equal to $\FT{f}(k \bmod \Lambda_L^*)$, i.e., it
coincides with the periodic extension of $\FT{f}$.
The inverse transform $\mathcal{F}_L^{-1}:g\mapsto \tilde{g}$ is given by
\begin{equation} 
   \tilde{g}(x) = \int_{\Lambda_L^*} \textrm{d} k\,g(k)
     \textrm{e}^{\textrm{i} 2\pi k \cdot x}\,,
\end{equation}
where we use the convenient shorthand notation
\begin{equation} 
  \int_{\Lambda_L^*} \textrm{d} k\, \cdots =
   \frac{1}{|\Lambda_L|} \sum_{k\in \Lambda_L^*} \cdots \, .
\end{equation}
With the above conventions, for any $L\ge \rphi$ we have
\begin{align}
  (\mathcal{F}_L \Phi_L \mathcal{F}_L^{-1})_{k',k} := \omega(k)^2
\delta_L(k'-k)\,
, \quad \text{for all }k',k\in  \Lambda_L^*\, ,
\end{align}
where $\delta_L$ is a ``discrete $\delta$-function'' on $\Lambda^*_L$,
defined by 
\begin{equation}
  \delta_{L}(k) = |\Lambda_L| \mathbbm{1}(k = 0) \, , \quad \text{for }k\in 
\Lambda^*_L\, .
\end{equation}
Here, and in the following, $\cf$ 
denotes the generic characteristic function: $\cf(P)=1$ if the condition $P$ is
true, and otherwise $\cf(P)=0$.

We assume all particles to have the same mass, and choose units in which the
mass is equal to one.
The linear forces on the circle are then generated by the Hamiltonian
\begin{align}\label{eq:defHLandGL}
  & H_L(X)  := \sum_{x\in \Lambda_L} \frac{1}{2} (X_x^2)^2 
  + \sum_{x',x\in \Lambda_L} \frac{1}{2} X_{x'}^1 X_{x}^1 \Phi([x'-x]_L)
  = \frac{1}{2} X^T \mathcal{G}_L X\, ,\\
  & \mathcal{G}_L := \begin{pmatrix}
                    \Phi_L & 0 \\ 
                    0 & 1
                  \end{pmatrix} \in \R^{(2 \Lambda_L)\times(2 \Lambda_L)} \, ,
\end{align}
on the phase space $X\in \Omega:= \R^{\Lambda_L}\times \R^{\Lambda_L}$. 
The canonical pair of variables for the site $x$ are
the position $q_x := X^1_x$, and the momentum $p_x := X^2_x$.  
The Hamiltonian evolution is combined with a velocity-flip noise.  
The resulting system can be identified with a Markov process $X(t)$ and
the process generates a Feller semigroup on the space of observables vanishing
at infinity, see \cite{BO11,Simon12} for mathematical details.  
Then for $t> 0$ and any $F$ in the domain of the generator $\mathcal{L}$ of the
Feller process the expectation values of $F(X(t))$ satisfy an evolution
equation
\begin{align}\label{eq:mainevoleq}
  &\partial_t \mean{F(X(t))} = \mean{(\mathcal{L} F)(X(t))},
  %\quad \text{where }\mathcal{L} := \mathcal{A} + \mathcal{S}\, ,
    \\ \intertext{where $\mathcal{L} := \mathcal{A} + \mathcal{S}$, with}
  &\mathcal{A} := 
  \sum_{x\in \Lambda_L} \left(X_x^2 \partial_{X_x^1} - (\Phi_L X^1)_x
\partial_{X_x^2}\right)\, ,\\  
  &(\mathcal{S}F)(X) := \frac{\gamma}{2} \sum_{x_0\in \Lambda_L} \left(
F(S_{x_0} X)- F(X) \right), \quad \gamma>0\, ,\\
  &(S_{x_0} X)_x^i := \begin{cases}
                 -X_x^i\, , & \text{if }i=2 \text{ and }x=x_0\, , \\
                 X_x^i\, , & \text{otherwise}\, .
               \end{cases}
\end{align}

We consider the time evolution of the moment generating function
\begin{align}
  f_t(\xi) := \mean{\rme^{\ci \xi\cdot X(t)}}\, ,
\end{align}
where $\xi$ belongs to some fixed neighborhood of $0$.  Although the observable
$X \mapsto \rme^{\ci\xi\cdot X}$ does not vanish at infinity, $f_t(\xi)$ is
always well defined, and
we assume that it satisfies the evolution equation
dictated by (\ref{eq:mainevoleq}).  This will require some additional
constraints on the distribution of initial data, but for instance it should
suffice that all second moments of $X(0)$ are finite. (The existence of
initial second moments will also be an assumption for our main theorem.)

Ultimately, the goal is to prove thermalization,
i.e., the appearance of local thermal equilibrium.  More precisely,
we would like to prove that after a thermalization period the local restrictions
of the generating functional are well approximated by mixtures of equilibrium
expectations.  To do such a comparison, the first step is to classify the
generating functions of equilibrium states.  We start with a heuristic
argument based on ergodicity which gives a particularly appealing formulation
for the present case in which the canonical Gibbs states are Gaussian measures.

Suppose
that the evolution of our finite system is ergodic, with energy as the only
ergodic variable.  Then for any invariant measure $\bar{\mu}$ there is a Borel
probability measure $\nu$ on $\R$ such that for any $g\in L^1(\bar{\mu})$ we
have 
\begin{align}
  \int\! \bar{\mu}(\rmd X) g(X) = \int\! \nu(\rmd E) \int\!  \frac{\rmd
X}{Z^{\text{mc}}_E}\delta(E-H_L(X)) g(X)\, ,
\end{align}
where $Z^{\text{mc}}_E:= \int \! \rmd X \delta(E-H_L(X))$ denotes the
microcanonical partition function.  (Details about mathematical ergodic theory
can be found for instance from \cite{EW11}.) 
Hence, if $\mu_0$ is an  initial state
which converges towards a steady state $\bar{\mu}$, we have for any
observable $g$
\begin{align}
&  \lim_{t\to\infty} \int\! \mu_t(\rmd X) g(X) = \int\! \nu(\rmd E) \int\! 
\frac{\rmd X}{Z^{\text{mc}}_E}\delta(E-H_L(X)) g(X)\, .
\end{align}
Applying this for $g(X) = \varphi(H_L(X))$, $\varphi$ continuous with a compact
support, we find by conservation of $H_L$ that 
$\int\! \mu_0(\rmd X) \varphi(H_L(X)) = \int\! \mu_t(\rmd X) \varphi(H_L(X)) =
\int\! \nu(\rmd E) \varphi(E)$.  Therefore, we can formally identify
$\nu(\rmd E)=  \rmd E \int\! \mu_0(\rmd X)  \delta(E-H_L(X))$.

Finally, we can rewrite the somewhat unwieldy microcanonical expectations in
terms of the canonical Gaussian measures by using the representation
$\int\!\rmd X\,\delta(E-H_L(X)) g(X)= \int_{\beta-\ci \infty}^{\beta+\ci \infty}
\frac{\rmd z}{2\pi \ci} \rme^{z E}
\int\!\rmd X\,\rme^{-z H_L(X)} g(X)$ which should be valid for all
sufficiently nice $g$ 
and $\beta>0$.
Applying this representation to $g(X)=\rme^{\ci \xi\cdot X}$ yields
$\int_{\beta-\ci \infty}^{\beta+\ci \infty} \frac{\rmd z}{2\pi \ci} \rme^{z E}
Z^{\text{can}}_z \rme^{-\frac{1}{2 z} \xi^T \mathcal{G}_L^{-1} \xi}$, where
the canonical partition function is $Z^{\text{can}}_z := \int\!\rmd X\,\rme^{-z
H_L(X)}$.  Hence, we arrive at the conjecture that for all sufficiently nice
initial data $\mu_0$
\begin{align}\label{eq:ftlim}
& \lim_{t\to\infty} f_t(\xi) = \int_{\beta-\ci \infty}^{\beta+\ci \infty}
\frac{\rmd z}{2\pi \ci} 
\rme^{-\frac{1}{2 z} \xi^T \mathcal{G}_L^{-1} \xi} \int\! \mu_0(\rmd X)
\frac{\int\!\rmd X' \rme^{z(H_L(X)-H_L(X'))}}{\int\!\rmd X'
\delta(H_L(X)-H_L(X'))}
 \, .
\end{align}
By a change of variables to $z^{-1}$ and using the fact that $f_t(0)=1$, the
limit function can also be represented in the form 
$\oint\! \bar{\nu}(\rmd \lambda)\, \rme^{-\lambda \frac{1}{2} \xi^T
\mathcal{G}_L^{-1} \xi}$ where the integral is taken around a circle in the
right half of the complex plane and $\bar{\nu}$ is a complex measure
satisfying $\oint\!\bar{\nu}(\rmd \lambda)=1$.

For any fixed initial data $X_0$ with energy $E=H_L(X_0)$, there is a
natural choice for the parameter $\beta$ as the unique
solution to the equation $E=\mean{H_L}_\beta^{\text{can}}$.  This choice
coincides with 
the unique saddle point for $\xi=0$ on the positive real axis, i.e., it is the
only $\beta>0$ 
for which $\partial_z \ln g(z)|_{z=\beta}=0$ with $g(z) := \int\!\rmd X'
\rme^{z(E-H_L(X'))}/\int\!\rmd X' \delta(E-H_L(X'))$.  Then also $\partial_z^2
\ln g(\beta)= \mathrm{Var}_\beta(H_L)>0$ and hence the integration path in
(\ref{eq:ftlim}) follows the path of steepest descent through the saddle point.
As the energy variance typically is proportional to the volume, the integrand
should be
concentrated to the real axis, with a standard deviation $O(L^{1/2})$.  Hence,
for 
fixed initial data and large $L$ we would expect to have here equivalence of
ensembles in the form $\lim_{t\to\infty} f_t(\xi) \approx \rme^{-\frac{1}{2
\beta} \xi^T \mathcal{G}_L^{-1} \xi}$.

\section{Global dynamic replica method}
\label{sec:global}

In order to treat the local dynamics independently, we replicate the whole
chain at each 
lattice site, and transform the evolution equation into a new form by selecting
some terms to act on the replicated direction.  
We use a generating function with variables $\zeta \in \R^{2\times
\Lambda_L\times \Lambda_L}$, where each $\zeta^i_{x,y}$ controls the random
variable $X^i_{x+y}(t)$, and we think of $x$ as the original site
and $y$ as the position in its ``replica''.  Explicitly, we study the dynamics
of the generating function 
\begin{align}
  h_t(\zeta) := \mean{\rme^{\ci \sum_{x,y,i} \zeta^i_{x,y} X(t)^i_{x+y}}}
\end{align}
where the mean is taken over the distribution of $X(t)$ at time $t$ for some
given initial distribution $\mu_0$ of $X(0)$.  Clearly, $h_t$ depends on
$\zeta$ only via the combinations $\sum_y \zeta^i_{x-y,y}$, $x\in \Lambda_L$.
If $h_t$ is known, the
local
statistics at $x_0\in \Lambda_L$ for some given time $t$ can be obtained
directly from its restriction $f_{t,x_0}(\xi):= h_t(\zeta[\xi,x_0])$ where we
set $\zeta[\xi,x_0]_{xy}^i := \cf(x{=}x_0,\,y{\in} \Lambda_{R_0}) \xi^i_y$ for
$\xi \in \R^{2\times \Lambda_{R_0}}$.    We assume $R_0\le L$ but otherwise it can be chosen
independently of $L$.   The parameter $R_0$
determines which neighboring particles are chosen to belong to the same
``local'' 
neighborhood.  

By (\ref{eq:mainevoleq}), the generating function $h_t$ satisfies the evolution
equation
\begin{align}\label{eq:htevol1}
  & \partial_t h_t(\zeta) = 
  \sum_{x,y\in \Lambda_L} \zeta^1_{xy} \mean{\ci X^2_{x+y} \rme^{\ci
Y_t}}-\sum_{x,y,z\in \Lambda_L} \zeta^2_{xy} 
 (\Phi_L)_{x+y,x+z}\mean{\ci X^1_{x+z} \rme^{\ci Y_t}} 
      \nonumber \\ & \quad
      + \frac{\gamma}{2} \sum_{x_0\in \Lambda_L} \left( h_t(\sigma_{x_0}
\zeta)-h_t(\zeta)\right)
\, ,
\end{align}
where we use the random variable $Y_t := \sum_{x,y,i} \zeta^i_{x,y}
X(t)^i_{x+y}$ and have defined for $x_0\in \Lambda_L$
\begin{align}
  (\sigma_{x_0} \zeta)_{xy}^i := \begin{cases}
                 -\zeta_{xy}^i\, , & \text{if }i=2 \text{ and }[x+y]_L=x_0\, ,
\\
                 \zeta_{xy}^i\, , & \text{otherwise}\, .
               \end{cases}
\end{align}
The equation can  be closed by using the identity
\begin{align}\label{eq:tozetader}
  & \partial_{\zeta_{xy}^i} h_t(\zeta) = \mean{\ci X^i_{x+y} \rme^{\ci Y_t}}\, .
\end{align}
There is some arbitrariness in the resulting equation: (\ref{eq:tozetader}) is
true for
all $x,y\in \Lambda_L$, but the right hand side depends only on $x+y$.  We
choose 
here to use it as indicated by the choice of summation variables in
(\ref{eq:htevol1}).  Since in the summand always $(\Phi_L)_{x+y,x+z} =
(\Phi_L)_{yz}$ this results in the evolution equation 
\begin{align}\label{eq:htevol2}
  & \partial_t h_t(\zeta) = - (\mathcal{M}_0 \zeta)\cdot \nabla h_t  (\zeta)    
 + \frac{\gamma}{2} \sum_{x_0\in \Lambda_L} \left( h_t(\sigma_{x_0}
\zeta)-h_t(\zeta)\right)
\, ,
\end{align}
where $\nabla h$ denotes the standard gradient, i.e., it is a vector whose
$(i,x,y)$-component is $\partial_{\zeta_{xy}^i} h$, and
\begin{align}\label{eq:defMM}
  \mathcal{M}_\gamma := \bigoplus_{x_0\in \Lambda_L} M_\gamma^{(x_0)}\, , \quad
  (M_\gamma^{(x_0)}\zeta)_{xy}^i  := \cf(x{=}x_0)\,
(M_\gamma\zeta_{x_0\cdot}^{\cdot})_{y}^i\, , \quad
  M_\gamma := \begin{pmatrix}
                0 & \Phi_L \\
                -1 & \gamma 1
              \end{pmatrix} \, .
\end{align}
Since $\frac{1}{2} \sum_x (1-\sigma_x) = P^{(2)}:= \begin{pmatrix}
                0 & 0 \\
                0 & 1
              \end{pmatrix}$, 
this can also be written as 
\begin{align}
  & \partial_t h_t(\zeta) = 
  - (\mathcal{M}_\gamma \zeta)\cdot \nabla h_t  (\zeta) 
 + \frac{\gamma}{2} \sum_{x_0\in \Lambda_L} 
  \left( h_t(\sigma_{x_0} \zeta)-h_t(\zeta)-(\sigma_{x_0}-1) \zeta\cdot \nabla
h_t  (\zeta)\right)
   \,.  
%   \nonumber \\ & \quad
\end{align}
For any $h_t$ resulting from the replication procedure, we obviously have
$\partial_{\zeta^i_{xy}} h_t(\zeta) = \partial_{\zeta^i_{x'y'}}  h_t(\zeta)$
whenever $x+y=x'+y'$. 
Hence, by Taylor expansion with remainder up to second order 
\begin{align}
  & \frac{\gamma}{2} \sum_{x_0\in \Lambda_L} 
  \left( h_t(\sigma_{x_0} \zeta)-h_t(\zeta)-(\sigma_{x_0}-1) \zeta\cdot \nabla
h_t  (\zeta)\right)
    \nonumber \\ & \quad
  = 2 \gamma \sum_{x_0\in \Lambda_L} \int_0^1\! \rmd r\, (1-r) 
   \sum_{x'y'xy} \cf(x_0=x+y) \cf(x_0=x'+y')
    \nonumber \\ & \qquad \times
   \zeta^2_{xy} \zeta^2_{x'y'}  
   (\partial_{\zeta^2_{xy}}\partial_{\zeta^2_{x'y'}}
h_t)(\zeta-r(1-\sigma_{x_0})\zeta)
       \nonumber \\ & \quad
=  2 \gamma \int_0^1\! \rmd r\, (1-r) 
\sum_{x_0\in \Lambda_L} (\partial^2_{\zeta^2_{x_0,0}}
h_t)(\zeta-r(1-\sigma_{x_0}) \zeta)\, 
 \Bigl(\sum_{xy}\cf(x_0=x+y) \zeta^2_{xy}\Bigr)^2 \,.
\end{align}

Now for any continuously differentiable function $t\mapsto \zeta_t$
\begin{align}
  h_t(\zeta_0)-h_0(\zeta_t) = -\int_0^t\! \rmd s\, \partial_s (h_{t-s}(\zeta_s))
  = \int_0^t\! \rmd s\,\left(\dot{h}_{t-s}(\zeta_s)-\dot{\zeta}_s\cdot\nabla
h_{t-s}(\zeta_s)\right)\, .
\end{align}
Setting 
$\zeta_s := \rme^{-s \mathcal{M}_\gamma} \zeta$ and $Q_{r,x_0} :=
1-r(1-\sigma_{x_0})$ thus yields the ``Duhamel formula''
\begin{align}\label{eq:htiter}
  & h_t(\zeta)=h_0(\zeta_t)+\int_0^t\! \rmd s\,
  \frac{\gamma}{2} \sum_{x_0\in \Lambda_L} 
  \left( h_{t-s}(\sigma_{x_0} \zeta_s)-h_{t-s}(\zeta_s)-(\sigma_{x_0}-1)
\zeta_s\cdot \nabla h_{t-s}  (\zeta_s)\right)
      \nonumber \\ & \quad
  = h_0(\zeta_t)+2 \gamma\int_0^t\! \rmd s \int_0^1\! \rmd r\, (1-r) 
\sum_{x_0\in \Lambda_L} 
%    \nonumber \\ & \qquad \times
 \Bigl(\sum_{y\in \Lambda_L} 
(\zeta_s)^2_{x_0-y,y}\Bigr)^2 (\partial^2_{\zeta^2_{x_0,0}}
h_{t-s})( Q_{r,x_0}\zeta_s)
 \,.
\end{align}
In this formula, the replica dynamics has been exponentiated in the
operator semigroup $\rme^{-s \mathcal{M}_\gamma}$.  No approximations have been
made in the derivation of the formula, but to show that
its solutions, under some natural assumptions, are unique and correspond to LTE
states does not look straightforward. We do
not attempt to do it here.  Instead, the formula will be used in the next
section to derive a closed evolution equation for the temperature profile.

We conclude the section by showing that Eq.~(\ref{eq:htiter}) is consistent
with the discussion in Sec.~\ref{sec:model}.
Suppose $\nu$ is a complex bounded measure on $\beta+\ci \R$, for some
$\beta>0$.
Then $h_t(\zeta) := \int\!\nu(\rmd z)\,
\rme^{-\frac{1}{2 z} \xi^T \mathcal{G}_L^{-1} \xi}$, with $\xi^i_x = \sum_y
\zeta^i_{x-y,y}$, solves (\ref{eq:htiter}).  To see this, first note that the
first two terms in the integrand cancel, since $(\sum_y
(\sigma_{x_0}\zeta)^i_{x-y,y})^2 = (\sum_y \zeta^i_{x-y,y})^2$ and thus
$h_{t}(\sigma_{x_0} \zeta)=h_{t}(\zeta)$.
Therefore, the value of the integral is equal to $\int_0^t\! \rmd s\, \gamma
P^{(2)}\zeta_s\cdot \nabla h_{0}  (\zeta_s) $.  Here
$P^{(2)}\zeta_s\cdot \nabla h_{0}  (\zeta_s) = -\int\!\nu(\rmd z)\, \frac{1}{z}
\rme^{-\frac{1}{2 z} \xi_s^T \mathcal{G}_L^{-1} \xi_s} 
\sum_{x} ((\xi_s)^2_x)^2$, with $(\xi_s)^i_x = \sum_{i',y',y} (\rme^{-s
M_\gamma})^{i i'}_{y y'}\zeta_{x-y,y'}^{i'}=(\rme^{-s M_\gamma} \xi)^i_x$ where
in the second equality we have used the periodicity of $M_\gamma$.  However,
then
$\partial_s \left(\xi_s^T \mathcal{G}_L^{-1} \xi_s\right)=-\xi_s^T (M_\gamma^T
\mathcal{G}_L^{-1}+ \mathcal{G}_L^{-1} M_\gamma)\xi_s = -2 \gamma \sum_x
((\xi_s)^2_x)^2$, and thus 
$\int_0^t\! \rmd s\, \gamma P^{(2)}\zeta_s\cdot \nabla h_{0}  (\zeta_s)
=-\int_0^t\! \rmd s\, \partial_s h_0(\zeta_s)=-h_0(\zeta_t)+h_t(\zeta)$.
Hence, the functions of $\zeta$ defined by setting $\xi^i_{x}=\sum_y
\zeta^i_{x-y,y}$ on the right hand side of (\ref{eq:ftlim}) are solutions to the
equation (\ref{eq:htiter}). To
check that energy is the only ergodic variable one would need to prove
that there are no other time-independent solutions.  
We postpone the analysis of this question to a
future work, although by the results proven in \cite{FFL94} it would seem to be
a plausible conjecture.

\section{Thermalization of the temperature profile}
\label{sec:Tprofile}

Since the ``replicated'' generating function satisfies (\ref{eq:htiter}), a
direct differentiation results in an evolution equation for the kinetic
temperature profile $T_{t,x} := \mean{(X(t)^2_x)^2} = -
\partial^2_{\zeta^2_{x0}} h_t(0)$.  We obtain
\begin{align}
& T_{t,x} =  (\rme^{-t M_\gamma^T} \Gamma_x \rme^{-t M_\gamma})^{22}_{00} 
+ 2 \gamma \int_0^t\! \rmd s\,
  \sum_{y\in \Lambda_L}  ((\rme^{-s M_\gamma})^{2 2}_{y0})^2 T_{t-s,x+y}
\, ,
\end{align}
where each
$(\Gamma_x)^{i'i}_{y'y}:=-\partial_{\zeta^{i'}_{xy'}}\partial_{\zeta^i_{xy}}
h_0(0) = \mean{X(0)^{i'}_{x+y'} X(0)^i_{x+y}}$ is a symmetric matrix obtained by
a periodic translation of the matrix of the initial second moments. 
The final sum can be transformed into a standard convolution form by
changing the summation variable $y\to -y$.  This yields 
\begin{align}\label{eq:Tevoleq}
& T_{t,x} =  g_{t,x} + \int_0^t\! \rmd s\, 
  \sum_{y\in \Lambda_L} p_{s,y} T_{t-s,x-y}
\, ,
\end{align}
where for $t\ge 0$, $x\in \Lambda_L$, the ``source term'' $g$ and the ``memory kernel'' $p$ are given by
\begin{align}
& g_{t,x} := (\rme^{-t M_\gamma^T} \Gamma_x \rme^{-t M_\gamma})^{22}_{00} =
\sum_{i'iy'y}  A^i_{t,y}  A^{i'}_{t,y'}
(\Gamma_x)^{i'i}_{y'y}\, ,
  \\ &
p_{t,x} := 2 \gamma (A^2_{t,-x})^2\, ,
\end{align}
with the following shorthand
\begin{align}
  A^i_{t,x}:= (\rme^{-t M_\gamma})^{i2}_{x0}\, .
\end{align}

We will prove later that 
$p_{t,-x}=p_{t,x}\ge 0$ and that $\int_0^\infty \!\rmd t \sum_x p_{t,x}=1$.
Hence, mathematically the equation (\ref{eq:Tevoleq}) has the structure of a
generalized \defem{renewal equation}.
Renewal equations have bounded solutions in great generality 
\cite[Theorem~9.15]{Rudin:FA}.  The problem is closely connected to Tauberian
theory; the classical paper by Karlin \cite{Karlin55} contains a discussion and
detailed analysis of the standard case.
Unfortunately, most of these results are not of direct use here since
they do not give estimates for the speed of convergence towards the asymptotic
value and thus cannot be used for estimating the
$L$-dependence of the asymptotics.  Nevertheless, the standard methods can be
applied to an extent also in the present case: in Appendix
\ref{sec:diffkernel} we give the details for the existence and uniqueness
of solutions to (\ref{eq:Tevoleq}) and 
derive an explicit representation of the solutions using Laplace transforms.

The analysis relies on upper and lower bounds for the tail behavior of
$p_{t,x}$.  These follow from explicit formulae for the solutions of the
semigroup $\rme^{-t M_\gamma}$ derived in Appendix \ref{sec:applocal}.  
In particular, we have for any $k\in \Lambda^*_L$
\begin{align}\label{eq:Aiexp}
&  \FT{A}^1_t(k) = \frac{1}{2 u(k)} \sum_{\sigma=\pm 1} \sigma \omega(k)^2
\rme^{-t \mu_\sigma(k)}\, ,\quad
  \FT{A}^2_t(k) = \frac{1}{2 u(k)} \sum_{\sigma=\pm 1} \sigma \mu_\sigma(k)
\rme^{-t \mu_\sigma(k)}\, ,
\end{align}
where
\begin{align}\label{eq:defmupm}
  u(k) := \sqrt{(\gamma/2)^2-\omega(k)^2}\, ,\quad
  \mu_\sigma(k) := \frac{\gamma}{2}+ \sigma u(k)\, .
\end{align}
In principle, the formulae should only be used if $\omega(k)< \gamma/2$ which
implies $u(k)> 0$.  However, they also hold for all other values of $k$ if
extended using the following ``analytic continuation'':  if $\omega(k)>
\gamma/2$, we set $u(k) = \ci \sqrt{\omega(k)^2-(\gamma/2)^2}$ and the values
for case $\omega(k)=\gamma/2$ agree with the limit $u(k)\to 0^+$.
Since we
consider  here the case in which the noise dominates,  only the expressions in
(\ref{eq:Aiexp}) will be needed in the following.

We begin by summarizing the regularity assumptions about the free evolution,
already discussed in Sec.~\ref{sec:model}.  Without additional effort, we can
relax the assumption of $\Phi$ having a finite support to mere exponential
decay. There are then several possibilities for fixing the finite volume
dynamics; here, we set $\omega(k;L):= \sqrt{\FT{\Phi}(k)}$ for $k\in
\Lambda_L^*$. Then (\ref{eq:Aiexp}) is still pointwise valid for the Fourier
transform of the semigroup generated by $M_\gamma$.
\begin{assumption}\label{th:phiassump}
  We assume that the map $\Phi:\Z\to \R$ has all of the following properties.
  \begin{enumerate}
  \setlength{\itemsep}{0pt}
    \item {\em (exponential decay)}\/ There are $C,\delta>0$ such $|\Phi(x)|\le
C \rme^{-\delta |x|}$ for all $x\in \Z$.
    \item {\em (symmetry)}\/ $\Phi(-x)=\Phi(x)$  for all $x\in \Z$.
    \item {\em (pinning)}\/ There is $\omega_0>0$ such that $\FT{\Phi}(k)\ge
\omega_0^2$ for all $k\in \T$.
  \end{enumerate}
\end{assumption}
The assumptions imply that the Fourier transform of $\Phi$ can be extended to an
analytic map $z\mapsto \sum_{x} \rme^{-\ci 2\pi z x} \Phi(x)$ on the strip
$\R+\ci (-\delta,\delta)/(2\pi)$, and the extension is $1$-periodic.  By
continuity and periodicity of $\FT{\Phi}$, we can then find $\vep>0$ 
such that
$\re \FT{\Phi}(z)>0$ on the strip $\R+\ci (-\vep,\vep)$.  Therefore, the
infinite volume dispersion relation has the following regularity properties:
\begin{corollary}
Assume $\Phi$ satisfies the assumptions in \ref{th:phiassump}. Then
$\omega(k):=\sqrt{\FT{\Phi}(k)}$, $k\in \T$, defines a smooth function on $\T$
which satisfies $\omega(k)\ge \omega_0$ and $\omega(-k)=\omega(k)$ for all $k\in
\T$.  In addition, there is $\vep>0$ such that $\omega$ has an analytic,
$1$-periodic continuation to the region $\R+\ci (-\vep,\vep)$.
\end{corollary}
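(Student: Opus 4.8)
The plan is to handle the pointwise statements about $\omega$ on $\T$ by direct computation and then to combine the analytic extension of $\FT{\Phi}$ already produced in the discussion preceding the corollary with a holomorphic branch of the square root. For the algebraic properties implied by Assumption \ref{th:phiassump}: since $\Phi$ is real, $\overline{\FT{\Phi}(k)}=\sum_x\Phi(x)\rme^{\ci 2\pi k x}=\FT{\Phi}(-k)$, and since $\Phi$ is even, $\FT{\Phi}(-k)=\FT{\Phi}(k)$; hence $\FT{\Phi}$ is real-valued and even on $\T$. Together with the pinning bound $\FT{\Phi}(k)\ge\omega_0^2>0$ this shows that $\omega(k)=\sqrt{\FT{\Phi}(k)}$ is a well-defined nonnegative real number with $\omega(k)\ge\omega_0$ and $\omega(-k)=\omega(k)$ for every $k\in\T$. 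For smoothness on $\T$, the exponential decay makes $\FT{\Phi}$ the restriction to $\R$ of a holomorphic, $1$-periodic function on the strip $\{z:|\im z|<\delta/(2\pi)\}$, hence real-analytic on $\T$; composing with the square root, which is $C^\infty$ on $(0,\infty)\supset[\omega_0^2,\infty)$, gives that $\omega=\sqrt{\FT{\Phi}}$ is smooth (indeed real-analytic).

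For the analytic continuation, let $F$ denote the holomorphic, $1$-periodic extension of $\FT{\Phi}$. The discussion preceding the statement already supplies $\vep>0$ with $\re F(z)>0$ on $S_\vep:=\{z:|\im z|<\vep\}$; if one wants to see it directly, it follows because by $1$-periodicity it suffices to have $\re F>0$ on $[0,1]\times(-\vep,\vep)$, and $\re F$ is continuous and bounded below by $\omega_0^2>0$ on the compact set $[0,1]\times\{0\}$, hence positive on a neighborhood of it. On $S_\vep$ the function $F$ thus takes values in the open right half-plane, on which the principal branch of $\sqrt{\cdot}$ is holomorphic and coincides with the positive square root on the positive reals; therefore $z\mapsto\sqrt{F(z)}$ is a holomorphic, $1$-periodic continuation of $\omega$ to $S_\vep$, which is the assertion. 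I do not anticipate any real obstacle: the only point requiring care is keeping the image of the continued dispersion relation off the branch cut of the square root, which is precisely what positivity of $\re F$ on a small strip guarantees, and which in turn rests on the pinning hypothesis.
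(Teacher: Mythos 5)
Your proposal is correct and follows essentially the same route as the paper, which proves the corollary implicitly in the paragraph preceding it: exponential decay gives a $1$-periodic analytic extension of $\FT{\Phi}$ to a strip, pinning plus continuity and periodicity give $\re\FT{\Phi}>0$ on a (possibly smaller) strip, and composing with a holomorphic branch of the square root yields the analytic continuation of $\omega$, with the pointwise properties on $\T$ following directly from realness, evenness, and the lower bound $\FT{\Phi}\ge\omega_0^2$. Your write-up merely fills in the details the paper leaves tacit.
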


From now on we assume that $\Phi$ satisfies Assumption \ref{th:phiassump} and
$\gamma>0$ is some fixed flipping rate.  This already fixes the functions
$\FT{A}^i$ defined above.  However, here we aim at convenient exponential bounds
for the errors from diffusive evolution of the temperature profile.  This
requires to rule out resonant behavior, which can be achieved if the noise
flipping rate is high enough and the dispersion relation satisfies a certain
integral bound excluding degenerate behavior.
Explicitly, we only consider $\gamma$ and $\Phi$ satisfying the following:
\begin{assumption}\label{th:gammaassump}
Suppose that $\Phi$ satisfies the assumptions in \ref{th:phiassump}, and
$\gamma>0$ is given such that:
  \begin{enumerate}
  \setlength{\itemsep}{0pt}
    \item {\em (noise dominates)}\/ $\gamma^2>4 \max_k \FT{\Phi}(k)$.
    \item {\em (harmonic forces are nondegenerate)}\/ For any $\vep>0$ there is
$C_\vep>0$ such that
    \begin{align}\label{eq:nondegcond}
      \int_{0}^\infty\!\rmd t \int_{\T}\! \rmd k \,
\left(\FT{A}_t^2\Bigl(k+\frac{k_0}{2}\Bigr)-\FT{A}_t^2\Bigl(k-\frac{k_0}{2}
\Bigr)\right)^2 \ge C_\vep\, ,\quad
      \text{whenever } \vep\le |k_0|\le \frac{1}{2}\, .
    \end{align}
  \end{enumerate}
\end{assumption}
The nondegeneracy condition is 
satisfied by the nearest neighbor interactions, for which $\omega(k) =
\sqrt{\omega_0^2+ 4 \sin^2(\pi k)}$ with $\omega_0>0$.  This can be proven for
instance by relying on the estimate $\partial_k \FT{A}^2_t(k)\ge C t \rme^{-t
\gamma} \sin(2\pi k)$ valid for all $|k|<1/2$ and large enough $t$.  However,
the condition fails for the degenerate next-to-nearest neighbor coupling which
skips over the nearest neighbors: then $\omega(k) = \sqrt{\omega_0^2+ 4 \sin^2(2
\pi k)}$ and thus for $k_0=1/2$ we have $\omega(k+k_0/2)=\omega(k-k_0/2)$ for
all $k$, hence the integral in (\ref{eq:nondegcond}) evaluates to zero at
$k_0=1/2$.  Instead of including a formal proof of these statements, we have
depicted the values of the above integrals for one choice of parameters in
Fig.~\ref{fig:degintplots}.

\begin{figure}
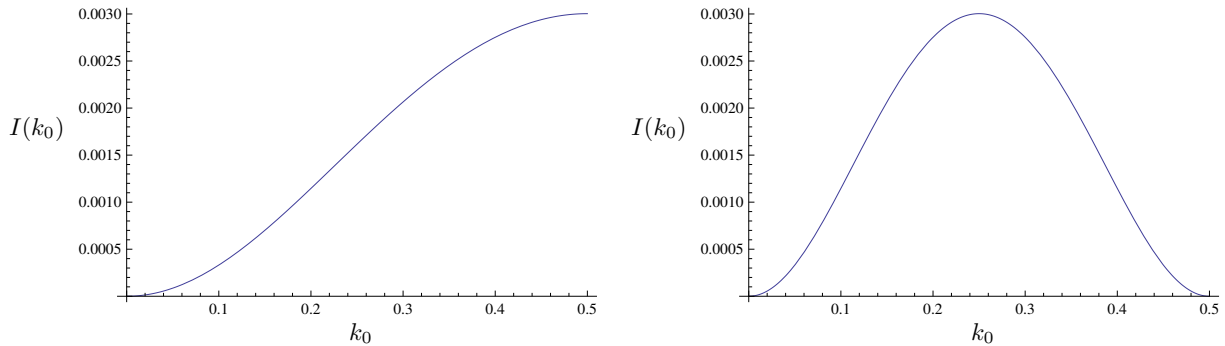

\centering
\myfigure{width=0.48\textwidth}{nondeg1}\ 
\myfigure{width=0.48\textwidth}{nondeg2}
  \caption{Plots of the function $I(k_0)$ defined by the left hand side of the
    nondegeneracy 
condition in (\ref{eq:nondegcond}) for $\gamma=6$ and two different dispersion
relations.  The left plot is for the standard nearest neighbor case, $\omega(k)
= \sqrt{1+ 4 \sin^2(\pi k)}$, while the right one depicts a degenerate
next-to-nearest 
neighbor case, with $\omega(k) = \sqrt{1+ 4 \sin^2(2 \pi k)}$.  The plots have
been generated by numerical integration using
Mathematica.\label{fig:degintplots}}
\end{figure}

We prove in this section that these assumptions suffice to have the following
pointwise behavior of the temperature profile.  
\begin{theorem}\label{th:maintprof}
  Suppose $\Phi$ and $\gamma$ satisfy the conditions in Assumption
\ref{th:gammaassump}.  Then there is $L_0\in \N$ such that 
equation (\ref{eq:Tevoleq}) has a unique 
continuous solution $T_{t,x}$ for every $L\ge L_0$
whenever all second moments of the initial field $X(0)$ exist. 
Let $\mathcal{E}:=
|\Lambda_L|^{-1}\mean{H_L(X)}<\infty$ denote the total energy density.
Then there are constants $C,d>0$, independent of the initial data and of $L$,
such that for this solution
\begin{align}\label{eq:TdiffersfromE}
\left|T_{t,x}-\mathcal{E}\right|\le C\mathcal{E}
 \frac{ \rme^{-d t L^{-2}} }{1-\rme^{-2 d t L^{-2}}}\, ,
\end{align}
for all $t\ge 0$ and $x\in\Lambda_L$.  In addition, we can choose $C$ and
define 
$\tau_x\in \R$ and $\tilde{p}_x\ge 0$, $x\in \Lambda_L$, so that for all $t>0$
and $x\in \Lambda_L$ 
\begin{align}\label{eq:Tlatticediff}
\left|T_{t,x}-(\rme^{-t D}\tau)_x\right|\le C\mathcal{E} L t^{-3/2}\, ,
\end{align}
where the operator $D$ is defined by 
\begin{align}
  (D \tau)_x := \sum_{y\in \Lambda_L} \tilde{p}_y  \left(2 \tau_x- \tau_{x+y} -
\tau_{x-y} \right)\, , \quad x\in\Lambda_L\, .
\end{align}
\end{theorem}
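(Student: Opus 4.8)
The plan is to solve the renewal equation (\ref{eq:Tevoleq}) by a joint Fourier transform in $x\in\Lambda_L$ and Laplace transform in $t$, and then read off the long-time behaviour from the zero set of the resulting symbol. Writing $\FT{p}(w,k):=\int_0^\infty\!\rmd t\,\rme^{-wt}\sum_{x\in\Lambda_L}\rme^{-\ci 2\pi kx}p_{t,x}$, and similarly $\FT{g}(w,k)$ and $\FT{T}(w,k)$, equation (\ref{eq:Tevoleq}) turns into the algebraic identity $\FT{T}(w,k)=\FT{g}(w,k)+\FT{p}(w,k)\FT{T}(w,k)$, so that
\begin{align}\label{eq:sk:resolvent}
  \FT{T}(w,k)=\frac{\FT{g}(w,k)}{F(w,k)}\,,\qquad F(w,k):=1-\FT{p}(w,k)\,,
\end{align}
away from the zeros of $F$, with $T_{t,x}$ recovered by the inverse transform. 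Since $p_{t,x}=2\gamma(A^2_{t,-x})^2\ge 0$ is even in $x$ and decays exponentially in $t$ — at a rate $\ge 2(\gamma/2-\max_k u(k))>0$ by Assumption~\ref{th:gammaassump}(1) and the formulae (\ref{eq:Aiexp})--(\ref{eq:defmupm}) — and since $g_{t,x}$ is a quadratic form in $a_t:=(A^1_{t,\cdot},A^2_{t,\cdot})$ with the translated initial second-moment matrix $\Gamma_x$, both $\FT{p}$ and $\FT{g}$ are analytic in $w$ on a half-plane $\re w>-2\delta_1$ for some $\delta_1>0$. Existence and uniqueness of a continuous $T_{t,x}$ follow as in Appendix~\ref{sec:diffkernel}: on every bounded time interval $\int_0^t\!\rmd s\sum_x p_{s,x}<1$, so the Neumann series $\sum_{n\ge 0}p^{*n}*g$ converges locally in time; the threshold $L_0$ in the statement is the one beyond which the uniform spectral bounds below hold.

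Two elementary computations fix the constants appearing in the statement. Evaluating the $t$-integrals in (\ref{eq:Aiexp}) and using $\mu_++\mu_-=\gamma$, $\mu_+\mu_-=\omega^2$ and $\gamma^2-4\omega^2=4u^2$ gives $\int_0^\infty\!\rmd t\,\FT{A}^2_t(k)^2=\tfrac1{2\gamma}$ and $\int_0^\infty\!\rmd t\,t\,\FT{A}^2_t(k)^2=\tfrac1{2\gamma^2}$, both independent of $k$; hence, with $\bar p_x:=\int_0^\infty\!\rmd t\,p_{t,x}\ge 0$, one has $\sum_x\bar p_x=1$ and $\bar t:=\int_0^\infty\!\rmd t\,t\sum_x p_{t,x}=\gamma^{-1}$. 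Next, solving the Lyapunov equation $M(k)Y+YM(k)^\top=P^{(2)}$ with $M(k):=(\begin{smallmatrix}0&\omega(k)^2\\-1&\gamma\end{smallmatrix})$ the Fourier symbol of $M_\gamma$ gives $\int_0^\infty\!\rmd t\,\FT{a}_t(k)\FT{a}_t(k)^*=\tfrac1{2\gamma}\,\mathrm{diag}(\omega(k)^2,1)$, i.e.\ $\tfrac1{2\gamma}$ times the Fourier symbol of $\mathcal{G}_L$; therefore $\FT{g}(0,0)=\int_0^\infty\!\rmd t\sum_x g_{t,x}=\tfrac1{2\gamma}\mean{X^\top\mathcal{G}_LX}=\bar t\,\mathcal{E}L$. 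This last identity is the precise manifestation of energy conservation (equipartition) that pins the asymptotic value of the temperature profile to $\mathcal{E}$.

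The crux is the analysis of the zeros of $F(w,k)$ in a strip $\re w\ge-\delta_0$, uniformly in $L$. Because $|\FT{p}(w,k)|\le\int_0^\infty\!\rmd t\,\rme^{-\re w\,t}\sum_x p_{t,x}\le 1$ for $\re w\ge 0$, with equality forcing $w=0$ and $k=0$, and $\FT{p}(w,k)\to 0$ as $|\im w|\to\infty$, the only zero of $F$ with $\re w\ge 0$ is the simple zero at the origin, at which $\partial_wF(0,0)=\bar t>0$. Along the line $\re w=0$ the behaviour of $F$ is governed by the nondegeneracy hypothesis: symmetrising the convolution integral that defines $\FT{\bar p}$ yields
\begin{align}\label{eq:sk:F0}
  F(0,k_0)=1-\FT{\bar p}(k_0)=\gamma\int_0^\infty\!\rmd t\int_\T\!\rmd k\,\Bigl(\FT{A}^2_t\bigl(k+\tfrac{k_0}2\bigr)-\FT{A}^2_t\bigl(k-\tfrac{k_0}2\bigr)\Bigr)^2\,,
\end{align}
so Assumption~\ref{th:gammaassump}(2) says exactly $F(0,k_0)\ge\gamma C_\vep>0$ for $\vep\le|k_0|\le\tfrac12$, while expanding near $k_0=0$ gives $F(0,k_0)=c_1k_0^2+O(k_0^4)$ with $c_1=\gamma\int_0^\infty\!\rmd t\int(\partial_k\FT{A}^2_t)^2\,\rmd k=(2\pi)^2\sum_x\bar p_x x^2>0$. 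These facts produce constants $r_0,\delta_0,c,c_2>0$ such that: inside a disc about the origin $F$ vanishes only at $0$; for $0<|k|<r_0$ the implicit function theorem yields a unique zero $w=w(k)=-F(0,k)/\bar t+O(k^4)$, so $\re w(k)\le-c_2k^2$; and $|F(w,k)|\ge c$ on the rest of $\{\re w\ge-\delta_0\}$. The one genuinely delicate point — and the part I expect to cost the most effort — is that (\ref{eq:sk:F0}) and the nondegeneracy bound live on the continuum $\T$, whereas the symbols in (\ref{eq:sk:resolvent}) involve the discrete measures $\FT{p}(w,k)$, $k\in\Lambda_L^*$. Since $\omega(k;L)=\omega(k)$ for $k\in\Lambda_L^*\subset\T$, these finite-volume quantities are Riemann sums of smooth, $t$-integrable continuum functions, so the three properties transfer to all $L\ge L_0$ with $L_0$ depending only on $\Phi$ and $\gamma$; making these lower bounds \emph{uniform} in $L$ and in $\im w$ (invoking Riemann--Lebesgue for the large-$|\im w|$ regime) is the technical heart of the argument.

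Granted this spectral picture, one deforms the Laplace inversion contour in (\ref{eq:sk:resolvent}) from $\re w=\beta$ to $\re w=-\delta_0$. The mode $k=0$ then contributes the residue $\FT{g}(0,0)/(\bar tL)=\mathcal{E}$ plus an $O(\mathcal{E}\rme^{-\delta_0t})$ line integral; each $k\ne 0$ with $|k|<r_0$ contributes a residue of size $\le C|\FT{g}(w(k),k)|\rme^{-c_2k^2t}/(\bar tL)$, and for every mode the leftover line integral is $O(\mathcal{E}\rme^{-\delta_0t})$, where $|\FT{g}(w,k)|\le C\mathcal{E}L$ on $\re w\ge-\delta_0$ follows from $\Gamma_x\ge 0$ and the pinning bound $\mean{q\cdot q}\le 2\omega_0^{-2}\mathcal{E}L$. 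Summing the $k\ne 0$ residues over $k=m/L$ and using $m^2\ge 2m-1$ gives $\sum_{m\ge 1}\rme^{-c_2tm^2/L^2}\le\rme^{-c_2t/L^2}/(1-\rme^{-2c_2t/L^2})$, which dominates the summed-up exponential remainder; renaming the constants to $C$ and $d$ yields (\ref{eq:TdiffersfromE}). For the sharper bound (\ref{eq:Tlatticediff}) one sets $\FT{\tau}(k):=\FT{g}(0,k)/\bar t$ and $\tilde p_y:=\bar p_y/(2\bar t)\ge 0$, so that the Fourier symbol of $D$ is exactly $F(0,k)/\bar t$ and $(\rme^{-tD}\tau)_x$ is the inverse transform of $\FT{g}(0,k)\rme^{-tF(0,k)/\bar t}/\bar t$; its difference from the $k$-th residue is, mode by mode, $O\bigl(\mathcal{E}(k^2+k^4t)\rme^{-c_2k^2t}\bigr)$ — the three contributions coming from $w(k)=-F(0,k)/\bar t+O(k^4)$, $\partial_wF(w(k),k)=\bar t+O(k^2)$, and $|\FT{g}(w(k),k)-\FT{g}(0,k)|\le C|w(k)|\mathcal{E}L$ — and summing over $k=m/L$ is a Riemann sum of the form $L\int_0^\infty(u^2+u^4t)\rme^{-c_2tu^2}\,\rmd u=O(Lt^{-3/2})$, which is precisely where the factor $L$ in (\ref{eq:Tlatticediff}) originates. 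The $\re w=-\delta_0$ remainder contributes $O(\mathcal{E}L\rme^{-\delta_0t})\le O(\mathcal{E}Lt^{-3/2})$ for $t\ge 1$, and for $t\le 1$ the bound is trivial since $T_{t,x}$ and $(\rme^{-tD}\tau)_x$ are both $O(\mathcal{E}L)$. The whole construction is routine once the spectral analysis of $F$ is in hand, which is therefore where I would concentrate.
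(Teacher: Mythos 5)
Your proposal is correct in outline and follows essentially the same route as the paper: Laplace--Fourier analysis of the renewal equation, study of the zeros of $1-\FT{p}(\lambda,k)$ using the nondegeneracy condition for $|k|$ bounded away from zero and a quadratic expansion of $F(0,k)$ near $k=0$, contour deformation with residues, and comparison to the lattice diffusion semigroup through the $O(k^4)$ discrepancy between the zero $R(k)$ and the symbol $\FT{D}(k)$, exactly as in Corollary \ref{th:ptxcoroll}, Proposition \ref{th:gtxprop}, Proposition \ref{th:Greenprop} and the proof of Theorem \ref{th:maintprof}. The only deviations are cosmetic (your $\tau$ uses $1/\bar t$ in place of the paper's $a(k)$ with a cutoff, and you invoke the implicit function theorem where the paper uses hand-crafted difference-quotient bounds to keep constants $L$-independent), and the uniform-in-$L$, uniform-in-$\im\lambda$ lower bounds you correctly flag as the technical heart are supplied in the paper by the quantitative estimates of Lemma \ref{th:Amainlemma} (item \ref{it:Alowerb}) and Lemma \ref{th:largeklemma}, rather than by Riemann--Lebesgue, which by itself gives no $L$-uniform rate.
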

Both $\tau_x$ and $\tilde{p}_x$ in the statement have explicit
definitions which can be found in the beginning of the proof of the
Theorem.  To summarize in words, the first of the bounds implies that the
temperature profile equilibrates, and the relative error is exponentially
decaying on the diffusive time scale, i.e., as $t L^{-2}$ becomes large.  The
second 
statement says that solving the ``lattice diffusion equation'' $\partial T_{t,x}
= - (DT_t)_x$ with initial data $T_0=\tau$ provides an approximation to the
temperature profile which is accurate even before the diffusive time scale, for
$t\gtrsim L^{2/3}$.  

A closer inspection of the proof of the Theorem reveals that
the main contribution to the error bound given in (\ref{eq:Tlatticediff}) comes
from ``memory effects'' of the original time evolution.  These corrections
can estimated using a bound which for large $t$ and $L$ behaves as 
$\int_{-\infty}^\infty \rmd k\, k^2 \rme^{-t k^2}=O(t^{-3/2})$.  The rest of
the factors can be uniformly bounded using the total energy, resulting in the
bound in (\ref{eq:Tlatticediff}).  We do not know if the bound is optimal,
although this could well be true for generic initial data.  The
worst case scenario for thermalization should be given by initial data in
which all energy is localized to one site.  It would thus be of
interest
to study the solution of (\ref{eq:Tevoleq}) with initial data $q(0)_x=0$
and $p(0)_x=\sqrt{2 L}\, \cf(x{=}0)$ in more detail to settle the issue.

\begin{figure}
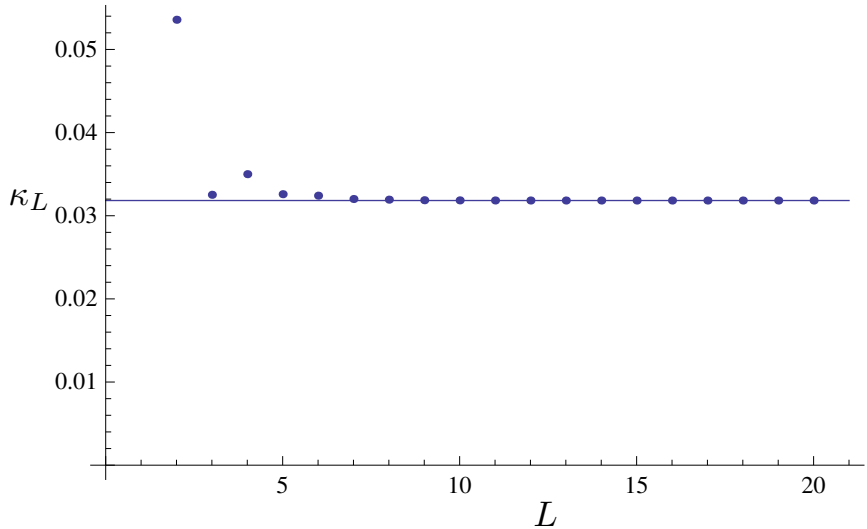

\centering
  \myfigure{width=0.7\textwidth}{kappaplot}
  \caption{Plot of $\kappa_L$ for nearest neighbor
interactions with $\omega_0=1$ and using $\gamma=6$ for $L=2,3,\ldots,20$.  The
horizontal line depicts the corresponding predicted infinite volume value
$\kappa_\infty$, as explained in the text.\label{fig:kappalim}} 
\end{figure}

The following corollary makes the connection to diffusion more
explicit. Its physical motivation is to show that the Fourier's law can here
be used 
to predict results from temperature measurements, as soon as these are not
sensitive to the 
lattice structure.  Explicitly, 
we assume that the measurement device detects only the cumulative effect of
thermal movement of the particles, say via thermal radiation, and thus can only
measure a smeared temperature profile.   The smearing is assumed to be linear
and given by a convolution with some fixed function $\varphi$ which for
convenience we assume to be smooth and rapidly decaying, i.e., that it should
belong to the Schwartz space.  The Corollary 
implies that then for large systems it is possible to obtain excellent
predictions for future measurements of the temperature profile by first
waiting a time $t_0\gg L^{2/3}$, measuring the temperature profile, and then
using the profile as initial data for the time-dependent Fourier's
law.
The diffusion constant $\kappa_L$ of the Fourier's law depends on the harmonic
dynamics and is given by (\ref{eq:defkappa}) below.  We prove later, in
Corollary \ref{th:ptxcoroll} and Proposition \ref{th:Greenprop},
that the constant remains uniformly bounded away from $0$ and infinity.  Hence,
the present assumptions are sufficient to guarantee normal heat conduction.   
We have also computed the values of $\kappa_L$ numerically for a nearest
neighborhood interaction and plotted these in Fig.~\ref{fig:kappalim}.  The
results indicate that the limit $L\to\infty$ exists and
agrees with $\kappa_\infty =
\gamma^{-1}/(2+\omega_0^2+\omega_0\sqrt{\omega_0^2+4})$ which is the value
obtained in previous works on this model \cite{bkll11,bkll11b,Simon12}.

\begin{corollary}\label{th:maincoroll}
Suppose the assumptions of Theorem \ref{th:maintprof} hold, $L\ge L_0$, and
all second moments of the initial field exist. 
Let $T_{t,x}$ denote the corresponding solution to (\ref{eq:Tevoleq}) and 
$\mathcal{E}:= |\Lambda_L|^{-1}\mean{H_L}$ the energy density.
For any kernel function $\varphi\in \mathcal{S}(\R)$ and initialization
time $t_0>0$, define the corresponding observed temperature profile by
\begin{align}
  \Tobs (t,\xi) := \sum_{y\in \Z} \varphi(\xi-y) T_{t_0+t,y \bmod \Lambda_L}\,
,\quad t\ge 0,\ \xi\in L\T.
\end{align}
Let the predicted temperature profile be defined as the solution of the
diffusion equation on the circle $L\T$ with initial data $\Tobs (0,\cdot)$,
i.e., let $\Tpred\in C^{(2)}([0,\infty)\times L\T)$ be the unique solution to
the Cauchy problem
\begin{align}\label{eq:heateq}
 \partial _t \Tpred(t,\xi) = \kappa_L \, \partial_\xi^2 \Tpred(t,\xi)\, ,\quad
\Tpred(0,\xi)=\Tobs (0,\xi)
\end{align}
where $t\ge 0$ and $\xi\in L\T$, and the diffusion constant is defined by
\begin{align}\label{eq:defkappa}
  \kappa_L:= \sum_{y\in \Lambda_L} y^2 \tilde{p}_{y} >0\, .
\end{align}

Then there is a constant $C>0$, independent of the initial data and of $L$,
$\varphi$ and $t_0$, such that
\begin{align}\label{eq:Taccuracy}
& \left|\Tpred(t,\xi)-\Tobs (t,\xi)\right|\le C \mathcal{E} L t_0^{-3/2}
\Bigl(\norm{\varphi}_1 + \sup_\xi \sum_{y\in \Z} |\varphi(\xi-y)| \Bigr)
+ C \mathcal{E} \sum_{|n| \ge L/2} \left|\FT{\varphi}(n/L)\right| \, ,
\end{align}
for all $t\ge 0$ and $\xi\in L\T$.  In particular, if $\varphi$ is a
``macroscopic averaging kernel'' and satisfies additionally $\varphi\ge 0$,
$\int_\R\!\rmd x\, \varphi(x)=1$ and $\FT{\varphi}(p)=0$ for all $|p|\ge
\frac{1}{2}$, then for the same $C$ as above
\begin{align}\label{eq:Taccuracy2}
& \left|\Tpred(t,\xi)-\Tobs (t,\xi)\right|\le 2 C \mathcal{E} L t_0^{-3/2}\, .
\end{align}
\end{corollary}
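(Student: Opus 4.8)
The plan is to derive Corollary \ref{th:maincoroll} from Theorem \ref{th:maintprof} by a Fourier analysis argument on the discrete circle, comparing the diffusion semigroup on $L\T$ with the lattice diffusion semigroup $\rme^{-tD}$ after smearing with $\varphi$. First I would observe that both $\Tpred(t,\cdot)$ and the smeared profile $\sum_y \varphi(\cdot-y)(\rme^{-tD}\tau)_y$ solve linear evolution equations with the \emph{same} initial data $\Tobs(0,\cdot)$ at $t=0$, so the whole estimate splits into two pieces: (i) the error between the true $T_{t_0+t,x}$ and $(\rme^{-tD}\tau)_x$ coming from \eqref{eq:Tlatticediff} of Theorem \ref{th:maintprof}, applied at time $t_0+t$ and bounded uniformly in $t\ge0$ by $C\mathcal{E}L t_0^{-3/2}$; and (ii) the error between the continuum heat flow on $L\T$ and the lattice diffusion flow, both acting on the same smooth-ish function obtained by convolving the lattice profile with $\varphi$. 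For piece (i), smearing against $\varphi$ costs at most a factor $\sup_\xi\sum_{y\in\Z}|\varphi(\xi-y)|$ in sup norm (or $\norm{\varphi}_1$ in an $\ell^1$-type estimate), giving the first term on the right of \eqref{eq:Taccuracy}.

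For piece (ii), the natural tool is the Fourier transform on $\Z/L$. Writing $\tau$ in terms of its discrete Fourier coefficients, the lattice operator $D$ acts as multiplication by the symbol $\widehat D(k) = \sum_y \tilde p_y\,(2 - 2\cos(2\pi k y)) = \sum_y \tilde p_y\, 4\sin^2(\pi k y)$ on $\Lambda_L^*$, while the continuum Laplacian on $L\T$ acts on the mode $n\in\Z$ (i.e.\ $k=n/L$) as multiplication by $-\kappa_L (2\pi n/L)^2$. Since $\kappa_L = \sum_y y^2\tilde p_y$, a Taylor expansion of $4\sin^2(\pi k y)$ around $k=0$ gives $\widehat D(n/L) = \kappa_L(2\pi n/L)^2 + O((n/L)^4)$, with the error controlled by $\sum_y y^4\tilde p_y$, which is finite and $L$-uniform by the exponential tail bounds on $p_{t,x}$ (hence on $\tilde p_y$) established in Appendix \ref{sec:diffkernel}. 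The key point is that after convolving with $\varphi\in\mathcal S(\R)$, the Fourier coefficients of $\Tobs(0,\cdot)$ carry a factor $\FT{\varphi}(n/L)$, which is rapidly decaying; combined with the elementary inequality $|\rme^{-a}-\rme^{-b}|\le|a-b|$ for $a,b\ge 0$ (applied to $a=t\widehat D(n/L)$, $b=t\kappa_L(2\pi n/L)^2$), the per-mode error is bounded by $t\,|a-b|\,|\FT{\varphi}(n/L)||\FT{\tau}(n/L)|$, and $t|a-b| = t\,O((n/L)^4)$ while the available decay is $t(n/L)^2$-type from $\rme^{-ct(n/L)^2}$ applied to one of the two semigroups — so one gets a clean bound after summing over $|n|<L/2$. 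However, one must also account for modes $|n|\ge L/2$: the continuum profile $\Tpred$ lives on all of $\Z$ but the lattice data only determines modes in $\Lambda_L^*$, and the aliasing tail $\sum_{|n|\ge L/2}|\FT{\varphi}(n/L)|$ (times $\mathcal{E}$, using $|T_{t,x}|\le C\mathcal{E}$ from \eqref{eq:TdiffersfromE}) is exactly the second term in \eqref{eq:Taccuracy}.

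The main obstacle I anticipate is bookkeeping the interplay of the two time scales: the error in piece (i) is $O(t_0^{-3/2})$ but \emph{uniform in $t$}, whereas piece (ii) naively grows like $t$ (from the factor $t|a-b|$) unless one uses the decaying exponential to absorb it — so one needs the bound $t(n/L)^2\rme^{-ct(n/L)^2}\le C$ together with the residual $\rme^{-c't(n/L)^2}|\FT{\varphi}(n/L)|$ being summable uniformly in $t$, and one must check the positivity/lower bound $\widehat D(n/L)\ge c(n/L)^2$ so that the surviving exponential really decays (this uses $\tilde p_y\ge 0$, $\kappa_L>0$, and that $\sum_y\tilde p_y\sin^2(\pi k y)$ does not vanish for $k\ne0$, which follows from the nondegeneracy built into Assumption \ref{th:gammaassump} and is quantified in Corollary \ref{th:ptxcoroll} / Proposition \ref{th:Greenprop}). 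Once these uniform-in-$t$ estimates are in hand, the triangle inequality assembles \eqref{eq:Taccuracy}. Finally, for \eqref{eq:Taccuracy2} one simply notes that a macroscopic averaging kernel has $\FT{\varphi}(p)=0$ for $|p|\ge\frac12$, so the aliasing term vanishes, and $\norm{\varphi}_1 = \int\varphi = 1$ together with $\sup_\xi\sum_y|\varphi(\xi-y)| = \sup_\xi\sum_y\varphi(\xi-y)$, which for a nonnegative Schwartz function integrating to $1$ is bounded by an absolute constant (by Poisson summation, $\sum_y\varphi(\xi-y) = \sum_n\FT{\varphi}(n)\rme^{2\pi\ci n\xi} = \FT{\varphi}(0) = 1$ under the stated support condition), so the prefactor $\norm{\varphi}_1 + \sup_\xi\sum_y|\varphi(\xi-y)| \le 2$, yielding the constant $2C$.
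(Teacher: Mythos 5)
Your overall strategy is the same as the paper's (compare $\Tobs$ with the smeared lattice-diffusion profile via Theorem \ref{th:maintprof}, then compare the continuum and lattice heat flows mode by mode in Fourier space, with the $|n|\ge L/2$ modes giving the aliasing term), but there is a genuine gap in the bookkeeping of the low-mode comparison, i.e.\ in your piece (ii). The symbol mismatch $0\le \kappa_L(2\pi k)^2-\FT{D}(k)\le c' k^4$ contributes, per mode $k=n/L$ with $|n|<L/2$, an error of the form $t\,|a-b|\,\rme^{-\min(a,b)t}$ times the amplitude, and your plan is to absorb the factor $t$ using $t k^2\rme^{-ctk^2}\le C$ and then sum the residual $\rme^{-c'tk^2}|\FT{\varphi}(n/L)|$ ``uniformly in $t$''. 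That only uses decay in $t$, and it cannot produce the factor $t_0^{-3/2}$: taking the supremum over $t\ge 0$ of $t k^4\rme^{-ctk^2}$ gives $k^2/(c\rme)$ per mode, so summing over $|n|<L/2$ yields a bound of order $\mathcal{E}L\int p^2|\FT{\varphi}(p)|\,\rmd p$, which is uniform in $t$ but has no smallness in $t_0$ at all (and is much larger than the target $C\mathcal{E}Lt_0^{-3/2}\norm{\varphi}_1$ whenever $t_0\gg L^{2/3}$; the worst times are $t\sim k^{-2}\ll t_0$, where neither the factor $t$ nor the exponential is small). You have also attributed the entire first term of (\ref{eq:Taccuracy}) to piece (i), which hides the problem: the low-mode mismatch must itself be $O(\mathcal{E}Lt_0^{-3/2}\norm{\varphi}_1)$, and in your accounting nothing forces this.

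The missing ingredient is the $t_0$-smoothing of the data on which the two semigroups are compared. In the paper one compares $f(t,\cdot)$, the continuum heat flow started from $\tilde T(0,\xi)=\sum_y\varphi(\xi-y)(\rme^{-t_0 D}\tau)_{y}$, with $\tilde T(t,\cdot)$ itself; the Fourier coefficients of this common initial datum carry the factor $\rme^{-t_0\FT{D}(k)}$, so the per-mode error is $t\,c'k^4\,\rme^{-(t+t_0)c_3'k^2}$ (using $\FT{D}(k)\ge c_3'k^2$ where $a(k)\ne 0$), and after $tk^2\rme^{-tc_3'k^2}\le C$ the surviving factor $k^2\rme^{-t_0c_3'k^2}$ sums, via Lemma \ref{th:nsqsum}, to $O(Lt_0^{-3/2})$ — this is the sole source of $t_0^{-3/2}$ in piece (ii), and it is exactly the elapsed initialization time, not the running time $t$, that provides it. Your own setup almost contains this (your comparison data is ``the convolved lattice profile'', which should be the profile already evolved to time $t_0$), but your per-mode amplitude $|\FT{\varphi}(n/L)||\FT{\tau}(n/L)|$ and your residual $\rme^{-c't(n/L)^2}$ both omit the $\rme^{-t_0\FT{D}(k)}$ factor, so as written the estimate fails. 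Relatedly, your opening claim that $\Tpred$ and the smeared lattice profile have \emph{the same} initial data $\Tobs(0,\cdot)$ is not exact: they differ by the piece-(i) error at $t=0$, and one needs the sup-norm contraction (maximum principle) of the heat semigroup on $L\T$ to propagate that discrepancy — this is the paper's auxiliary function $f$ and is where the second factor $\sup_\xi\sum_y|\varphi(\xi-y)|$ in (\ref{eq:Taccuracy}) comes from; your argument should state this step explicitly. The Poisson-summation argument for (\ref{eq:Taccuracy2}) is fine and matches the paper.
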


The rest of this section is used for proving the above statements.  However,
as the 
arguments get somewhat technical and will not be used in the remaining
sections, it is possible to skip over the details in the first reading. We begin
with a Lemma collecting the main consequences of our assumptions.
\begin{lemma}\label{th:Amainlemma}
  Suppose $\Phi$ and $\gamma$ satisfy the conditions in Assumption
\ref{th:gammaassump}.  
  Use (\ref{eq:Aiexp}) to define $\FT{A}^i:[0,\infty)\times \T\to \R$ for
$i=1,2$, and set
$\delta_0 = \omega_0^2/\gamma$.  Then we can find constants
$c_0,c_1,c_2,\gamma_2>0$, $t_1\ge 0$, such that
  \begin{enumerate}
    \item The functions $\FT{A}^i$, $i=1,2$, belong to $C^{(1)}([0,\infty)\times
\T)$. 
    \item \label{it:Abounds} $|\FT{A}_t^i(k)|,|\partial_t \FT{A}_t^i(k)|\le
c_0\rme^{-\delta_0 t}$ and $|\partial_k \FT{A}_t^i(k)|\le c_0\rme^{-\delta_0
t/2}$ for every $i=1,2$, $t\ge 0$ and $k\in \T$.
    \item\label{it:Alowerb} $-\FT{A}_t^2(k)\ge c_1\rme^{-\gamma t/2}$ for all
$t\ge t_1$ and $k\in \T$.
    \item\label{it:Atildeb} The functions $\tilde{A}^i_{t,x}:= \int_{\T}\! \rmd
k\, \rme^{\ci 2\pi x\cdot k} \FT{A}_t^i(k)$, $x\in \Z$,
    satisfy $|\tilde{A}^i_{t,x}|\le c_2 \rme^{-\delta_0 t/2 - \gamma_2 |x|}$ for
all $i=1,2$, $t\ge 0$, $x\in \Z$.
  \end{enumerate}
\end{lemma}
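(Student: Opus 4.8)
The plan is to deduce all four assertions directly from the explicit formulas (\ref{eq:Aiexp})--(\ref{eq:defmupm}), the only structural input being that the ``noise dominates'' hypothesis of Assumption \ref{th:gammaassump} forces $u(k)$ to be real and bounded away from zero. Indeed, $\omega(k)^2=\FT{\Phi}(k)\le\max_k\FT{\Phi}(k)<\gamma^2/4$, so $u(k)^2=(\gamma/2)^2-\omega(k)^2\ge u_{\min}^2:=(\gamma/2)^2-\max_k\FT{\Phi}(k)>0$ uniformly in $k\in\T$. Consequently $\mu_\pm(k)=\gamma/2\pm u(k)$ are the two distinct real roots of $\lambda^2-\gamma\lambda+\omega(k)^2$, so that $\mu_+(k)+\mu_-(k)=\gamma$ and $\mu_+(k)\mu_-(k)=\omega(k)^2$; in particular $0<\mu_-(k)\le\gamma/2\le\mu_+(k)\le\gamma$ and, crucially, $\mu_-(k)=\omega(k)^2/\mu_+(k)\ge\omega_0^2/\gamma=\delta_0$. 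Since $\omega$ is smooth on $\T$ (as noted after Assumption \ref{th:phiassump}) and $u(k)\ge u_{\min}>0$, the right-hand sides of (\ref{eq:Aiexp}) are smooth in $(t,k)$, which already yields part~1; I also record that $\partial_k u=-\omega\,\partial_k\omega/u$ is bounded on $\T$.

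For part~\ref{it:Abounds} I would rewrite $\FT{A}^1_t(k)=\frac{\omega(k)^2}{2u(k)}\bigl(\rme^{-t\mu_+(k)}-\rme^{-t\mu_-(k)}\bigr)$ and $\FT{A}^2_t(k)=\frac{1}{2u(k)}\bigl(\mu_+(k)\rme^{-t\mu_+(k)}-\mu_-(k)\rme^{-t\mu_-(k)}\bigr)$, and then use $|\rme^{-t\mu_\sigma(k)}|\le\rme^{-t\mu_-(k)}\le\rme^{-\delta_0 t}$ together with the uniform bounds $\omega(k)^2\le\gamma^2/4$, $\mu_\sigma(k)\le\gamma$, $1/(2u(k))\le 1/(2u_{\min})$; the $t$-derivatives are of the same form (differentiate directly), hence obey the same $\rme^{-\delta_0 t}$ bound. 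For $\partial_k$ the Leibniz rule produces bounded-prefactor multiples of $\rme^{-t\mu_\sigma(k)}$ plus terms of the shape (bounded prefactor)$\,\cdot\,t\,\partial_k\mu_\sigma(k)\,\rme^{-t\mu_\sigma(k)}$; since $\partial_k\mu_\sigma=\sigma\partial_k u$ is bounded and $\sup_{t\ge0}t\rme^{-\delta_0 t/2}<\infty$, these are $\le C t\rme^{-\delta_0 t}\le c_0\rme^{-\delta_0 t/2}$, which is the single place the decay rate halves. Part~\ref{it:Alowerb} then follows from the factorization $-\FT{A}^2_t(k)=\bigl(2u(k)\bigr)^{-1}\rme^{-t\mu_-(k)}\bigl(\mu_-(k)-\mu_+(k)\rme^{-2tu(k)}\bigr)$: since $\mu_-(k)\ge\delta_0$, $\mu_+(k)\le\gamma$ and $u(k)\ge u_{\min}$, the last factor is $\ge\delta_0-\gamma\rme^{-2tu_{\min}}\ge\delta_0/2$ as soon as $t\ge t_1:=(2u_{\min})^{-1}\ln(2\gamma/\delta_0)$, and combining this with $\rme^{-t\mu_-(k)}\ge\rme^{-\gamma t/2}$ and $1/(2u(k))\ge1/\gamma$ gives the bound with $c_1=\delta_0/(2\gamma)$.

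For part~\ref{it:Atildeb} I would use a Paley--Wiener contour shift. Because $u\mapsto-u$ interchanges $\mu_+\leftrightarrow\mu_-$ and changes the sign of $1/(2u)$, both $\FT{A}^i_t$ are actually functions of $u(k)^2=(\gamma/2)^2-\FT{\Phi}(k)$; by Assumption \ref{th:phiassump} the map $\FT{\Phi}$ extends to a $1$-periodic analytic function on a strip $\R+\ci(-\vep,\vep)$, and after shrinking $\vep$ we may assume that on the closed strip $\R+\ci[-\vep,\vep]$ one has $|u(z)|\ge u_{\min}/2$ and $\re\mu_\pm(z)\ge\delta_0/2$, both of which hold on the real axis and persist by continuity on the compact set $\T\times[-\vep,\vep]$. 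There $\FT{A}^i_t$ is analytic and, by the estimate of part~\ref{it:Abounds}, bounded by $C\rme^{-\delta_0 t/2}$. Shifting the contour in $\tilde{A}^i_{t,x}=\int_{-1/2}^{1/2}\rmd k\,\rme^{\ci2\pi xk}\FT{A}^i_t(k)$ to $\im z=\sign(x)\vep$ --- the vertical pieces at $k=\pm\frac12$ cancelling by $1$-periodicity --- then produces a factor $\rme^{-2\pi\vep|x|}$, so that $|\tilde{A}^i_{t,x}|\le C\rme^{-\delta_0 t/2}\rme^{-2\pi\vep|x|}$; with the trivial bound at $x=0$ this is the claim, with $\gamma_2=2\pi\vep$ and $c_2=\max(c_0,C)$. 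The one genuinely delicate point is this last part: one must fix a \emph{single} strip on which $u$ stays bounded away from $0$ while both $\re\mu_\pm$ stay above $\delta_0/2$, so that the shift simultaneously delivers the spatial decay $\rme^{-\gamma_2|x|}$ and the temporal decay $\rme^{-\delta_0 t/2}$; everything else is bookkeeping with the explicit formulas (\ref{eq:Aiexp}).
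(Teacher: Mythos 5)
Your proposal is correct and follows essentially the same route as the paper: parts 1--3 by direct computation from the explicit formulas, using $\mu_-(k)=\omega(k)^2/\mu_+(k)\ge\omega_0^2/\gamma=\delta_0$, the same factorization of $-\FT{A}^2_t$ (with matching choices of $t_1$ and $c_1$), and the trick $t\rme^{-\delta_0 t}\le C\rme^{-\delta_0 t/2}$ for the $k$-derivative; part 4 by analytic $1$-periodic continuation of $u$ (equivalently of $\FT{A}^i_t$) to a thin strip where $\re\mu_\pm\ge\delta_0/2$, followed by the same Cauchy contour shift to $\im z=\sign(x)\vep$ yielding $\gamma_2=2\pi\vep$. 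The only cosmetic difference is your observation that $\FT{A}^i_t$ is even in $u$, hence a function of $u^2=(\gamma/2)^2-\FT{\Phi}$, which slightly streamlines the continuation argument the paper phrases as a composition with the branch $u(z)$.
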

\begin{proof}
The first item follows straightforwardly from the definitions.  As in the
statement, set  $\delta_0 := \omega_0^2/\gamma>0$ and recall the functions $u$
and $\mu_\pm$ defined in (\ref{eq:defmupm}).
Since $\omega(k)^2/\mu_+(k)= \mu_-(k)< \mu_+(k)\le \gamma$, we have
$\mu_+(k)>\mu_-(k)\ge \delta_0$ for all $k$ and thus
a direct computation shows that $c_0$ for the first two   upper bounds in item \ref
{it:Abounds} can be found.  The bound for $|\partial_k \FT{A}_t^i(k)|$ follows
similarly, using the estimate $t \rme^{-t\delta_0/2}\le \rme^{-1} 2/\delta_0$
and possibly increasing $c_0$ to accommodate the extra factors resulting from
taking the derivative, such as $\max_k |\omega'(k)|<\infty$.  The lower bound in
item \ref{it:Alowerb} is a direct consequence of the identity $-\FT{A}_t^2(k) =
(2 u)^{-1} \mu_- \rme^{- t \mu_-} \left(1-\rme^{-t 2 u} \mu_+/\mu_-\right)$
where
$u=u(k)\ge u_0:=\sqrt{(\gamma/2)^2-\max_k \FT{\Phi}(k)}>0$ and $\mu_-\le
\gamma/2$.  (We may define, for instance, $t_1:=(2 u_0)^{-1} \ln (2
\gamma^2/\omega_0^2)$ and $c_1:=\omega_0^2/(2\gamma^2)$.)

All of the maps $k\mapsto \FT{A}_t^i(k)$ can be represented as a composition of
a function analytic on $\C\setminus\{0\}$ and the function $u$ (note that
$\omega(k)^2=(\gamma/2)^2-u(k)^2$).  Then, 
by assumption,  $0<u_0\le u(k)\le \sqrt{(\gamma/2)^2-\omega_0^2}\le
\frac{1}{2}\gamma- \delta_0$ for real $k$,  and there is a strip on which
$(\gamma/2)^2-\omega(z)^2$ is an analytic, $1$-periodic function.  Therefore, we
can find $\vep_0>0$ such that $u(z)$ is an analytic, $1$-periodic continuation
of $u$ to  a neighborhood of $U:=\R+\ci [-\vep_0,\vep_0]$ with
$u_0/2\le \re u(z)\le \frac{1}{2}\gamma- \delta_0/2$.  Hence, $\FT{A}_t^i|_{u\to
u(z)}$ is also analytic and $1$-periodic on $U$ and it is bounded there by $4
u_0^{-1} (1+(\gamma/2)^2+\max_{z\in U} |u(z)|^2)\rme^{-\delta_0 t/2}$ for
$i=1,2$, $t\ge 0$.  
Therefore, Cauchy's theorem can be used to change the integration contour in the
definition of $\tilde{A}^i_{t,x}$ from $[-1/2,1/2]$ to $[-1/2,1/2]+\ci \sign(x)
\vep_0$ without altering the value of the integral.  Thus we can define
$\gamma_2 := 2 \pi\vep_0$ and find $c_2>0$ independent of $x,t,i$ such that
$|\tilde{A}^i_{t,x}|\le  c_2 \rme^{-\delta_0 t/2 - \gamma_2 |x|}$ for all
$i=1,2$, $t\ge 0$, $x\in \Z$.  This concludes the proof of the Lemma.
\end{proof}

\begin{corollary}\label{th:ptxcoroll}
  Suppose $\Phi$ and $\gamma$ satisfy the conditions in Assumption
\ref{th:gammaassump} and let $\delta_0,\gamma_2,t_1$
  be constants for which Lemma \ref{th:Amainlemma} holds.  For each $L\ge 1$
define
  \begin{align}
    p_{t,x} := 2 \gamma \left(\int_{\Lambda_L^*} \!\rmd k\, \rme^{-\ci 2\pi
k\cdot x} \FT{A}^2_t(k)\right)^2\, ,\quad
    \rho_t := \sum_{y\in\Lambda_L} p_{t,y}\, ,
  \end{align}
  for $t\ge 0$ and $x\in \Lambda_L$.
 Then there are constants $C_0,C_1,C_2>0$, all independent of $L$, such that
  \begin{enumerate}
    \item $t\mapsto p_{t,x}$ belongs to $C^{(1)}([0,\infty))$ for all $x\in
\Lambda_L$.
    \item\label{it:psymm} $p_{t,-x} = p_{t,x}$ for all $x,t$.
    \item\label{it:ptxbounds} $p_{t,x}\ge 0$ and $|\partial_t p_{t,x}|,
p_{t,x}\le C_0 \rme^{-\gamma_2 |x|-\delta_0 t}$ for all $x\in \Lambda_L$ and
$t\ge 0$.
    \item\label{it:ptxnormalization}  $\int_0^\infty \!\rmd t \sum_{x\in
\Lambda_L} p_{t,x} =1$.
    \item\label{it:bint}  $\int_0^\infty \!\rmd t \sum_{x\in \Lambda_L} t
p_{t,x} =\gamma^{-1}$.    
\item\label{it:rhotbounds} $0\le \rho_t\le C_1 \rme^{-2 \delta_0 t}$, for all
$t$, and $\rho_t\ge C_2 \rme^{-\gamma t}$ for $t\ge t_1$.
  \end{enumerate}
\end{corollary}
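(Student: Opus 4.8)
The plan is to express everything through the single Fourier multiplier $\FT{A}^2_t$, whose pointwise properties are already recorded in Lemma~\ref{th:Amainlemma}, and through the finite-volume inverse transform $B_{t,x}:=\int_{\Lambda_L^*}\rmd k\,\rme^{-\ci2\pi k\cdot x}\FT{A}^2_t(k)$, so that $p_{t,x}=2\gamma B_{t,x}^2$ and $\rho_t=2\gamma\sum_{x\in\Lambda_L}B_{t,x}^2$. By the choice of the finite-volume dynamics the values of $\FT{A}^2_t$ on $\Lambda_L^*$ are samples of the infinite-volume multiplier in (\ref{eq:Aiexp}); hence the discrete inverse transform is the $L\Z$-periodization of the continuous one, $B_{t,x}=\sum_{n\in\Z}\tilde A^2_{t,x+nL}$ with $\tilde A^2$ as in Lemma~\ref{th:Amainlemma}(\ref{it:Atildeb}). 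With this setup items (1) and (2) are immediate: $t\mapsto\FT{A}^2_t(k)$ is $C^{(1)}$ for each of the finitely many $k\in\Lambda_L^*$, so $B_{t,x}$ and hence $p_{t,x}=2\gamma B_{t,x}^2$ are $C^{(1)}$ in $t$; and since $u$ and $\mu_\pm$ in (\ref{eq:defmupm}) depend on $k$ only through $\omega(k)=\omega(-k)$, the multiplier $\FT{A}^2_t$ is even and real, which, substituting $k\to-k$ in the defining sum, shows both that $B_{t,x}$ is real and that $B_{t,-x}=B_{t,x}$, i.e.\ $p_{t,-x}=p_{t,x}$.

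For item (3), $p_{t,x}=2\gamma B_{t,x}^2\ge0$ is trivial. For the decay bound I would feed Lemma~\ref{th:Amainlemma}(\ref{it:Atildeb}) into the periodization identity: for $x\in\Lambda_L$ one has $|x|\le L/2$, so $|x+nL|\ge|x|+(|n|-1)L$ for $n\neq0$, giving $|B_{t,x}|\le c_2\rme^{-\delta_0 t/2}\sum_{n\in\Z}\rme^{-\gamma_2|x+nL|}\le c_2\rme^{-\delta_0 t/2-\gamma_2|x|}\bigl(1+2\sum_{m\ge0}\rme^{-\gamma_2 mL}\bigr)$, and the last sum is bounded uniformly in $L\ge1$; squaring and using $\rme^{-2\gamma_2|x|}\le\rme^{-\gamma_2|x|}$ yields $p_{t,x}\le C_0\rme^{-\gamma_2|x|-\delta_0 t}$. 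For $\partial_t p_{t,x}=4\gamma B_{t,x}\,\partial_t B_{t,x}$ I need the same kind of estimate for $\partial_t B_{t,x}=\sum_n(\partial_t\tilde A^2)_{t,x+nL}$: since $\partial_t\FT{A}^2_t(k)=-\tfrac{1}{2u}\sum_{\sigma}\sigma\mu_\sigma^2\rme^{-t\mu_\sigma}$ extends, exactly like $\FT{A}^2_t$ itself, to an analytic $1$-periodic function on a strip $\R+\ci[-\vep_0,\vep_0]$ with a bound $O(\rme^{-\delta_0 t/2})$ there (because $\re\mu_-\ge\delta_0/2$ and $\re\mu_+\ge\gamma/2$ on the strip and $|\mu_\sigma|$ stays bounded), the contour-shift argument used in the proof of Lemma~\ref{th:Amainlemma}(\ref{it:Atildeb}) gives $|(\partial_t\tilde A^2)_{t,y}|\le c\,\rme^{-\delta_0 t/2-\gamma_2|y|}$ with the same $\gamma_2$, and the geometric summation above then produces $|\partial_t p_{t,x}|\le C_0\rme^{-\gamma_2|x|-\delta_0 t}$ after possibly enlarging $C_0$.

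Items (4) and (5) I would obtain from Parseval's identity on $\Lambda_L$, $\sum_{x\in\Lambda_L}B_{t,x}^2=\int_{\Lambda_L^*}\rmd k\,|\FT{A}^2_t(k)|^2$, which gives $\rho_t=2\gamma\int_{\Lambda_L^*}\rmd k\,\FT{A}^2_t(k)^2$ and, analogously, $\sum_{x\in\Lambda_L}t\,p_{t,x}=2\gamma\int_{\Lambda_L^*}\rmd k\,t\,\FT{A}^2_t(k)^2$; since the integrands are nonnegative and $\Lambda_L^*$ is a finite set, Tonelli lets me integrate in $t$ first. Writing $\FT{A}^2_t(k)=\tfrac{1}{2u}(\mu_+\rme^{-t\mu_+}-\mu_-\rme^{-t\mu_-})$ and using $\int_0^\infty\rme^{-at}\rmd t=a^{-1}$, $\int_0^\infty t\rme^{-at}\rmd t=a^{-2}$ together with the relations $\mu_++\mu_-=\gamma$, $\mu_+\mu_-=\omega(k)^2$ and $\gamma^2-4\omega(k)^2=4u(k)^2$, both $k$-integrands collapse to $k$-independent constants, $\int_0^\infty\FT{A}^2_t(k)^2\rmd t=\tfrac{1}{2\gamma}$ and $\int_0^\infty t\,\FT{A}^2_t(k)^2\rmd t=\tfrac{1}{2\gamma^2}$; multiplying by $2\gamma$ and using $\int_{\Lambda_L^*}\rmd k\,1=1$ gives $\int_0^\infty\rho_t\,\rmd t=1$ and $\int_0^\infty t\,\rho_t\,\rmd t=\gamma^{-1}$. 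Item (6) follows from the same representation $\rho_t=2\gamma\int_{\Lambda_L^*}\rmd k\,\FT{A}^2_t(k)^2$ and $\int_{\Lambda_L^*}\rmd k\,1=1$: by Lemma~\ref{th:Amainlemma}(\ref{it:Abounds}) one has $\FT{A}^2_t(k)^2\le c_0^2\rme^{-2\delta_0 t}$, hence $0\le\rho_t\le C_1\rme^{-2\delta_0 t}$ with $C_1:=2\gamma c_0^2$; and by Lemma~\ref{th:Amainlemma}(\ref{it:Alowerb}), for $t\ge t_1$ and every $k$, $\FT{A}^2_t(k)^2\ge c_1^2\rme^{-\gamma t}$, hence $\rho_t\ge C_2\rme^{-\gamma t}$ with $C_2:=2\gamma c_1^2$; all constants are independent of $L$.

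The main obstacle is item (3): upgrading the infinite-volume exponential-decay estimate of Lemma~\ref{th:Amainlemma}(\ref{it:Atildeb}) to a bound on the finite-volume quantity $B_{t,x}$ that is uniform in $L$, which is what forces the periodization identity and the use of $|x|\le L/2$, and also requires re-running the contour-shift argument for $\partial_t\FT{A}^2_t$. Everything else reduces either to the pointwise estimates already established for $\FT{A}^2_t$ (together with the normalization $\int_{\Lambda_L^*}\rmd k\,1=1$) or to the two elementary $t$-integrals in items (4)--(5), which become transparent once one notices the cancellation $\gamma^2-4\omega^2=4u^2$.
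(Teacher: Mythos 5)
Your proof is correct and follows essentially the same route as the paper: the periodization identity $A^2_{t,x}=\sum_{n\in\Z}\tilde{A}^2_{t,x+nL}$ with the estimate $|x+nL|\ge (|n|-1)L+|x|$ for items (2)--(3), and Parseval together with the explicit integrals $\int_0^\infty\!\rmd t\,(\FT{A}^2_t(k))^2=(2\gamma)^{-1}$, $\int_0^\infty\!\rmd t\,t\,(\FT{A}^2_t(k))^2=(2\gamma^2)^{-1}$ for items (4)--(6). The only divergence is the step you single out as the main obstacle, the bound on $\partial_t p_{t,x}$: the paper avoids your re-run of the contour-shift argument for $\partial_t\FT{A}^2_t$ by writing $\partial_t p_{t,x}=4\gamma A^2_{t,x}\int_{\Lambda_L^*}\!\rmd k\,\rme^{\ci 2\pi k\cdot x}\,\partial_t\FT{A}^2_t(k)$ and bounding the second factor simply by $c_0\rme^{-\delta_0 t}$ from item (2) of Lemma \ref{th:Amainlemma}, so that the spatial decay $\rme^{-\gamma_2|x|}$ comes entirely from the already-established periodized bound on $A^2_{t,x}$; your additional analyticity argument is sound but unnecessary.
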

\begin{proof}
The first item follows directly from the corresponding item in Lemma
\ref{th:Amainlemma}.  Consider then a fixed $L\ge 1$
and denote $A^2_{t,x}:=\int_{\Lambda_L^*} \!\rmd k\, \rme^{\ci 2\pi k\cdot x}
\FT{A}^2_t(k)$ for $x\in \Lambda_L$, $t\ge 0$.  
Since $\FT{A}^2_t(k)\in \R$ with $\FT{A}^2_t(-k)=\FT{A}^2_t(k)$, we have
$A^2_{t,x}\in \R$ with $A^2_{t,-x} = A^2_{t,x}$.
Hence, $p_{t,-x} = p_{t,x}$ and $p_{t,x}\ge 0$.  In particular, item
\ref{it:psymm} holds.

By Lemma \ref{th:Amainlemma}, the Fourier-transform of $\FT{A}^2_t(k)$, denoted
$\tilde{A}^2_{t,x}$, is absolutely summable, and 
thus $\FT{A}^2_t(k) = \sum_{y\in \Z}\rme^{-\ci 2\pi y\cdot k} \tilde{A}^2_{t,y}$
for every $k\in \T$.  Inserting the formula in the definition of $A^2_{t,x}$
yields
\begin{align}\label{eq:defAtx}
& A^2_{t,x}=
\sum_{n\in \Z} \tilde{A}^2_{t,x+n L}\, ,
\end{align}
for all $x,t$.
In the above sum, the definition of $\Lambda_L$ implies that $|x+n L|\ge L(
|n|-1) + L/2\ge  L( |n|-1) + |x|$ if $n\ne 0$.
The exponential bound in item \ref{it:Atildeb} of Lemma \ref{th:Amainlemma} thus
shows that
$|A^2_{t,x}| \le c_2 \rme^{-\delta_0 t/2 - \gamma_2 |x|} (1+2/(1-\rme^{-\gamma_2
L}))$.  As mentioned above, $A^2_{t,x}\in \R$ and thus $0\le p_{t,x}\le 2 \gamma
c_0 c_2 (1+2/(1-\rme^{-\gamma_2}))  \rme^{-\delta_0 t 3/2 - \gamma_2 |x|}$. 
Since $\partial_t  p_{t,x} =
4\gamma A^2_{t,x}\int_{\Lambda_L^*} \!\rmd k\, \rme^{\ci 2\pi k\cdot
x}\partial_t  \FT{A}^2_t(k)$ we find using item \ref{it:Abounds} in Lemma
\ref{th:Amainlemma} that $|\partial_t  p_{t,x}|\le 4 \gamma c_0 c_2
(1+2/(1-\rme^{-\gamma_2}))\rme^{-\delta_0 t 3/2 - \gamma_2 |x|}$.  Choosing
$C_0:=4 \gamma c_0 c_2 (1+2/(1-\rme^{-\gamma_2}))$ thus implies that item
\ref{it:ptxbounds} holds.

Using the discrete Parseval's theorem in the definition of $\rho_t$ yields an
alternative representation $\rho_t = 2 \gamma \int_{\Lambda_L^*} \!\rmd k\,
(\FT{A}^2_t(k))^2$.  Hence, the bounds in Lemma \ref{th:Amainlemma} imply that
the bounds in item \ref{it:rhotbounds} hold for the choices $C_1:= 2 \gamma
c_0^2$ and $C_2:=2 \gamma c_1^2$.   A direct computation using the definition of $\FT{A}^2_t(k)$
shows that $\int_0^\infty\!\rmd t\, (\FT{A}^2_t(k))^2=(2\gamma)^{-1}$ independently
of $k$.  Therefore,
$\int_0^\infty \!\rmd t \sum_{x} p_t(x)=\int_0^\infty \!\rmd t \rho_t=1$ and
item \ref{it:ptxnormalization} holds.  The equality $\int_0^\infty
\!\rmd t \sum_{x\in \Lambda_L} t p_{t,x} =\gamma^{-1}$ can be checked
analogously.
This concludes the proof of the Corollary. 
\end{proof}

The normalization in item  \ref{it:ptxnormalization} is a crucial identity which
makes the structure of (\ref{eq:Tevoleq}) to be that of a renewal equation. 
Instead of the explicit computation referred to in the above proof, the identity
can also be inferred by noting that the integral is equal to a $(2,2)$-diagonal
component of
$\int_0^\infty \!\rmd t\, \rme^{-t M_\gamma^T} \begin{pmatrix}
                                               0 & 0\\ 0 & 1
                                             \end{pmatrix}
 \rme^{-t M_\gamma}$.  By the results proven in \cite{bll02} the integral yields
$\mathcal{G}_L^{-1}$, and thus its $(2,2)$-component has only ones on the
diagonal.   

The main properties of the ``source term'', $g_{t,x}$, are summarized in the
following Proposition.
\begin{proposition}\label{th:gtxprop}
  Suppose $\Phi$ and $\gamma$ satisfy the conditions in Assumption
\ref{th:gammaassump} and let $\delta_0=\omega_0^2/\gamma $ as in Lemma
\ref{th:Amainlemma}.  If $L\ge 1$ and all second moments of the initial field
$X(0)$ exist, we define
for $t\ge 0$ and $x\in\Lambda_L$
\begin{align}
g_{t,x} := \sum_{i',i=1,2}\sum_{y',y\in \Lambda_L}  A^i_{t,y}  A^{i'}_{t,y'}
\mean{X(0)^{i'}_{x+y'}  X(0)^{i}_{x+y}}  \, ,
\end{align}
where $A^2_{t,x}:=\int_{\Lambda_L^*} \!\rmd k\, \rme^{\ci 2\pi k\cdot x}
\FT{A}^i_t(k)$.
Then there is a constant $C'$, independent of $L$ and the choice of initial
state,  such that all of the following statements hold with $E_L :=
\mean{H_L(X(0))}<\infty$: 
\begin{enumerate}
  \item $g_{t,x}\ge 0$ for all $t,x$.
  \item\label{it:gtxbound} $\sum_x g_{t,x}\le C'\rme^{-\delta_0 t} E_L$ for all
$t$.
\item\label{it:gtxnorm} $\int_0^\infty\!\rmd t\, \sum_{x\in\Lambda_L} g_{t,x} =
\gamma^{-1} E_L$.
\end{enumerate}
\end{proposition}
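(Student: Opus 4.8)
The plan is to recognise $g_{t,x}$ as a genuine second moment and then evaluate everything on the Fourier side using the explicit formulae (\ref{eq:Aiexp}). Write $q_z:=X(0)^1_z$ and $p_z:=X(0)^2_z$. By Lemma \ref{th:Amainlemma} each $\FT{A}^i_t(k)$ is real, and since it depends on $k$ only through $\omega(k)^2=\FT{\Phi}(k)$ it is also even in $k$; hence $A^i_{t,x}\in\R$ with $A^i_{t,-x}=A^i_{t,x}$. First I would introduce the real random field $W_{t,x}:=\sum_{i,y}A^i_{t,y}X(0)^i_{x+y}$. Since all second moments of $X(0)$ are finite, Cauchy--Schwarz shows that every term of the double sum defining $g_{t,x}$ is integrable, so the expectation may be moved inside and $g_{t,x}=\mean{W_{t,x}^2}\ge0$. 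This gives item 1.

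For items 2 and 3 I would sum over $x$ and pass to Fourier variables. Changing the summation variable $y\mapsto z=x+y$ shows $\FT{W}_t(k)=\FT{A}^1_t(k)\FT{q}(k)+\FT{A}^2_t(k)\FT{p}(k)$, and the discrete Parseval identity then gives $\sum_x g_{t,x}=\mean{\norm{W_t}^2}=\int_{\Lambda_L^*}\!\rmd k\,\mean{|\FT{W}_t(k)|^2}$ (interchanging the finite $k$-sum with $\mean{\cdot}$ is mere linearity). For item 2, bound $|\FT{A}^i_t(k)|\le c_0\rme^{-\delta_0 t}$ by item \ref{it:Abounds} of Lemma \ref{th:Amainlemma}, combine with $|a+b|^2\le2|a|^2+2|b|^2$, and apply Parseval once more to obtain $\sum_x g_{t,x}\le2c_0^2\rme^{-2\delta_0 t}(\mean{\norm{q}^2}+\mean{\norm{p}^2})$. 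The pinning assumption then finishes it: $\FT{\Phi}\ge\omega_0^2$ gives $\mean{q^T\Phi_L q}\ge\omega_0^2\mean{\norm{q}^2}$, hence $\mean{\norm{q}^2}+\mean{\norm{p}^2}\le2(1+\omega_0^{-2})E_L$; since $\rme^{-2\delta_0 t}\le\rme^{-\delta_0 t}$, item 2 follows with $C'=4c_0^2(1+\omega_0^{-2})$ (in fact with the faster rate $2\delta_0$).

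For item 3 I would integrate the Fourier representation in $t$ and interchange $\int_0^\infty\!\rmd t$, the sum over $k\in\Lambda_L^*$, and $\mean{\cdot}$; this is legitimate by Tonelli, as the integrand $|\FT{W}_t(k)|^2$ is nonnegative and $\Lambda_L^*$ is finite. Expanding $|\FT{A}^1_t\FT{q}+\FT{A}^2_t\FT{p}|^2$ and using $\int_0^\infty\!\rmd t\,(\FT{A}^2_t(k))^2=(2\gamma)^{-1}$ (obtained in the proof of Corollary \ref{th:ptxcoroll}) together with the companion identities $\int_0^\infty\!\rmd t\,(\FT{A}^1_t(k))^2=\omega(k)^2/(2\gamma)$ and $\int_0^\infty\!\rmd t\,\FT{A}^1_t(k)\FT{A}^2_t(k)=0$, all of which come out of (\ref{eq:Aiexp}) by the same kind of short computation (equivalently, from the Lyapunov identity $\int_0^\infty\!\rmd t\,\rme^{-tM_\gamma}P^{(2)}\rme^{-tM_\gamma^T}=(2\gamma)^{-1}\mathrm{diag}(\Phi_L,1)$, in the spirit of the discussion following the proof of Corollary \ref{th:ptxcoroll} and of \cite{bll02}), I would get $\int_0^\infty\!\rmd t\sum_x g_{t,x}=(2\gamma)^{-1}\int_{\Lambda_L^*}\!\rmd k\,\bigl(\omega(k)^2\mean{|\FT{q}(k)|^2}+\mean{|\FT{p}(k)|^2}\bigr)$. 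By Parseval the right-hand side equals $(2\gamma)^{-1}(\mean{q^T\Phi_L q}+\mean{\norm{p}^2})=\gamma^{-1}\mean{H_L}=\gamma^{-1}E_L$, which is item 3.

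The argument is largely routine. The only points requiring care are the three interchanges of summation, $t$-integration and expectation --- each controlled by the exponential decay in Lemma \ref{th:Amainlemma} together with $E_L<\infty$ --- and keeping the Fourier conventions consistent; the single calculational fact one must pin down is the value of $\int_0^\infty\!\rmd t\,\rme^{-tM_\gamma}P^{(2)}\rme^{-tM_\gamma^T}$, since it is precisely the $\omega(k)^2$ weight produced there, combined with $\omega(k)^2\ge\omega_0^2$, that converts the Fourier sums back into $\mean{H_L}=E_L$ and supplies the constants in items 2 and 3. I do not foresee a genuine obstacle.
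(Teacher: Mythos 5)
Your proposal is correct, and its skeleton matches the paper's: recognize $g_{t,x}$ as the second moment $\mean{W_{t,x}^2}$ (item 1 is identical), and obtain item 3 from the fact that $\int_0^\infty \rmd t\, \rme^{-tM_\gamma}P^{(2)}\rme^{-tM_\gamma^T}=\tfrac{1}{2\gamma}\,\mathrm{diag}(\Phi_L,1)$. The execution differs in a way worth noting. For item 2 the paper works in position space: it bounds $|A^i_{t,y}|\le c_2\rme^{-\delta_0 t/2-\gamma_2|y|}(1+2/(1-\rme^{-\gamma_2}))$ using the analyticity-based decay of $\tilde A^i_{t,x}$ (item 4 of Lemma \ref{th:Amainlemma}) and then applies Cauchy--Schwarz to the second moments, which yields the rate $\rme^{-\delta_0 t}$; you instead pass to Fourier space, use only the uniform bound $|\FT{A}^i_t(k)|\le c_0\rme^{-\delta_0 t}$ together with Parseval and the pinning bound $\FT{\Phi}\ge\omega_0^2$, which is slightly leaner (no spatial-decay input needed) and even gives the faster rate $\rme^{-2\delta_0 t}$. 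For item 3 the paper argues with the matrix Lyapunov equation $M_\gamma Q+QM_\gamma^T=P^{(2)}$ and verifies $Q=\tfrac{1}{2\gamma}\mathrm{diag}(\Phi_L,1)$, whereas you compute the same object mode by mode from (\ref{eq:Aiexp}); your three scalar integrals are indeed $\int_0^\infty(\FT{A}^1_t)^2=\omega(k)^2/(2\gamma)$, $\int_0^\infty(\FT{A}^2_t)^2=(2\gamma)^{-1}$ and $\int_0^\infty\FT{A}^1_t\FT{A}^2_t=0$ (using $\mu_++\mu_-=\gamma$, $\mu_+\mu_-=\omega^2$), which is exactly the Fourier-diagonalized form of the paper's $Q$, as you yourself remark. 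Your interchanges of sum, $t$-integral and expectation are justified as you say, so there is no gap; the two routes buy essentially the same thing, with yours marginally more self-contained on item 2 and the paper's marginally shorter on item 3 since it quotes the Lyapunov identity once rather than computing three integrals.
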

\begin{proof}
The assumptions imply that Lemma \ref{th:Amainlemma} holds.  In the following,
the constants $c_0,c_1,c_2,\gamma_2,t_1$ refer to those appearing in the Lemma.

Since it follows from the definition that $g_{t,x} = \mean{(\sum_{i,y} A^i_{t,y}
X(0)^{i}_{x+y} )^2}$, obviously $g_{t,x}\ge 0$.  As in the proof of Corollary
\ref{th:ptxcoroll}, Lemma \ref{th:Amainlemma} implies that now $| A^i_{t,y}|\le
c_2 \rme^{-\delta_0 t /2-\gamma_2|y|} (1+2/(1-\rme^{-\gamma_2}))$.
By the Schwarz inequality $\sum_{i',i,x}  \mean{|X(0)^{i'}_{x+y'} 
X(0)^{i}_{x+y}|}\le 2\mean{\norm{X(0)}^2_2}$.  Since $\norm{X}^2_2 \le H_L(X)
2/\min(1,\omega^2_0)$, we find that item \ref{it:gtxbound} holds for $C'=4 c_2^2
\max(1,\omega_0^{-2}) (1+2/(1{-}\rme^{-\gamma_2}))^4 $.

For the final item, we return to the matrix formulation of $g$.  Since
$A^i_{t,y-x}= (\rme^{-t M_\gamma})^{i2}_{yx}$ by periodicity, we have $g_{t,x} =
\mean{((\rme^{-t M^T_\gamma}X(0))^{2}_{x} )^2}$ and thus 
\begin{align}
  \sum_{x\in\Lambda_L} g_{t,x}=\mean{X(0)^T\rme^{-t M_\gamma}P^{(2)}\rme^{-t
M^T_\gamma}X(0)}\, ,\quad \text{where}\quad P^{(2)}:= \begin{pmatrix}
                                               0 & 0\\ 0 & 1
                                             \end{pmatrix}\, .
\end{align}
Therefore, $\int_0^\infty \!\rmd t\, \sum_{x\in\Lambda_L} g_{t,x} = \mean{X(0)^T
Q X(0)}$ where $Q:=\int_0^\infty \!\rmd t\,\rme^{-t M_\gamma}P^{(2)}\rme^{-t
M^T_\gamma} $ is the unique real symmetric matrix satisfying $M_\gamma
Q+QM_\gamma^T=P^{(2)}$.  Using the definition of $M_\gamma$ in (\ref{eq:defMM}),
we can then verify that $Q=\frac{1}{2\gamma}\begin{pmatrix}
                                               \Phi_L & 0\\ 0 & 1
                                             \end{pmatrix}$.  Thus $X^T Q
X=H_L(X)/\gamma$, and we can conclude that item \ref{it:gtxnorm} holds.
\end{proof}

Renewal equations can conveniently be studied via Laplace transforms.  We have
included a proof in Appendix \ref{sec:diffkernel} how the above bounds allow
for an 
explicit representation of the solution of (\ref{eq:Tevoleq}) for any continuous
``initial data'' $g_{t,x}$.
The solution is also unique, at least in the class of continuous functions. We
denote the solution by $T_{t,x}$ and conclude that  
\begin{align}
& T_{t,x} =  g_{t,x} + \int_0^t \!\rmd s\, \sum_{y\in \Lambda_L} G(t-s,x-y)
g_{s,y}
\, ,
\end{align}
where for any $\vep>0$
\begin{align}\label{eq:defAkernel}
    G(t,x):= p_{t,x} + \int_{\Lambda_L^*} \!\rmd k\, \rme^{\ci 2\pi k\cdot x}
\int_{\vep-\ci \infty}^{\vep+\ci \infty}\! \frac{\rmd \lambda}{2\pi \ci} \,
\rme^{\lambda t} \frac{\FT{p}(\lambda,k)^2}{1-\FT{p}(\lambda,k)} \, ,
  \end{align}
and  $\FT{p}(\lambda,k) := \int_0^\infty\! \rmd s \, \sum_{y\in \Lambda_L}
p_{s,y} \rme^{-s\lambda}\rme^{-\ci 2 \pi k \cdot y}$ is analytic for $\re
\lambda>-\delta_0$.  The estimates proven in Proposition \ref{th:renewalprop}
also imply that $|\FT{p}(\lambda,k)|\le C'/(1+|\lambda|)$ for all $k$ if $\re
\lambda\ge -\delta_0/2$, where $C'$ can be chosen independently of $L$.  In
particular, the above integral is absolutely convergent for any choice of
$\vep>0$.

Many of the properties below could be derived more easily by relying on standard
results, such as the implicit function theorem.  However, for such bounds to be
useful here, it is crucial to obtain them with $L$-independent constants.  
To convince the reader that no
$L$-dependence is sneaking in, we provide here detailed estimates with
examples of such 
$L$-independent constants albeit  at the cost of some repetition of standard 
computations.  No claim is made that the given choices for the constants
would be optimal.

\begin{proposition}\label{th:Greenprop}
Suppose $\Phi$ and $\gamma$ satisfy the conditions in Assumption
\ref{th:gammaassump}, and set $\delta_0 = \omega_0^2/\gamma$.
Then we can find constants $c'_0,\delta,\vep_0,\beta>0$  and $L_0\ge 1$, such
that $\beta\le 
\delta_0/2$ and
for all $L\ge L_0$, $x\in\Lambda_L$, $t\ge
0$,
\begin{align}
  G(t,x) = \int_{\Lambda_L^*} \!\rmd k\, \rme^{\ci 2\pi k\cdot x} a(k) \rme^{- t
R(k)} + \Delta(t,x)\, ,
\end{align}
where 
\begin{align}
  |\Delta(t,x)|\le c'_0 \rme^{-\delta t}\, .
\end{align}
Here $a(k)$ and $R(k)$ are defined  for $|k|> \vep_0$ by $a(k)=0$ and $R(k)=\beta$, and for $|k|\le \vep_0$
there is a unique $R(k)\in[0, \beta]$, such that 
\begin{align}\label{eq:defRk}
  1 = \int_0^\infty\! \rmd s \, \rme^{s R(k)} \sum_{y\in \Lambda_L} p_{s,y}
\cos(2 \pi k \cdot y)\, ,
\end{align}
and then
\begin{align}\label{eq:defak}
  \frac{1}{a(k)} = \int_0^\infty\! \rmd s \, s \rme^{s R(k)} \sum_{y\in
\Lambda_L} p_{s,y} \cos(2 \pi k \cdot y) \, .
\end{align}

In addition, we can choose the constants so that there are
$c'_1,c'_2,c'_3,c'_4,\kappa'>0$, all independent of $L$, such that 
for all $L\ge L_0$ and $k\in \Lambda_L^*$ with $|k|\le \vep_0$ all of the
following estimates hold:
\begin{enumerate}
  \item $0<a(k)\le c'_0$.
  \item\label{it:kappapbound} $\int_0^\infty\! \rmd s \, \sum_{y\in \Lambda_L} y^2 p_{s,y}\ge
\kappa'$.
  \item $c'_1 k^2 \le R(k)\le c'_2 k^2$ and 
$ |\FT{D}(k;L)-R(k)|\le c'_4 k^4$
with 
  \begin{align}\label{eq:defDk}
    \FT{D}(k';L):= \gamma \sum_{y\in \Lambda_L} (1-\cos(2\pi k'\cdot y))
\int_0^\infty \!\rmd s \, p_{s,y} \, ,\quad  k' \in \Lambda_L^*\, .
  \end{align}
  In addition, we may assume $\FT{D}(k';L)\ge c'_3 \min (|k'|,\vep_0)^2$ for
  all $k' \in 
\Lambda_L^*$.
\end{enumerate}
\end{proposition}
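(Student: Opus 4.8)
The plan is to read off $G(t,x)$ from the Laplace representation (\ref{eq:defAkernel}) by moving the $\lambda$-integration from the line $\re\lambda=\vep$ to a line $\re\lambda=-\beta$ that lies just to the left of the imaginary axis but still inside the strip of analyticity of $\FT{p}(\cdot,k)$, collecting on the way the residue of the one pole that can survive near the origin. By Proposition \ref{th:renewalprop}, $\FT{p}(\lambda,k)=\int_0^\infty\!\rmd s\,\rme^{-s\lambda}c_s(k)$ with $c_s(k):=\sum_{y\in\Lambda_L}p_{s,y}\cos(2\pi k\cdot y)$ is analytic for $\re\lambda>-\delta_0$ and obeys $|\FT{p}(\lambda,k)|\le C'/(1+|\lambda|)$, uniformly in $k$ and $L$, for $\re\lambda\ge-\delta_0/2$; since $|\FT{p}|<\tfrac12$ for $|\lambda|$ large and $|\FT{p}(\lambda,k)|\le\int_0^\infty\!\rmd s\,|c_s(k)|<1$ for $\re\lambda>0$, every zero of $1-\FT{p}(\cdot,k)$ with $\re\lambda\ge-\delta_0/2$ lies in a fixed compact set $K$ independent of $L$. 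On vertical lines $\rme^{\lambda t}\FT{p}^2/(1-\FT{p})=O(|\lambda|^{-2})$, so the contour shift is legitimate once the pole structure is understood; it produces at each surviving zero $\lambda=-R(k)$ the residue $a(k)\rme^{-R(k)t}$, with $a(k)$ as in (\ref{eq:defak}) because $\FT{p}(-R(k),k)=1$ forces the residue to be $1/(-\partial_\lambda\FT{p}(-R(k),k))$ and $-\partial_\lambda\FT{p}(-R(k),k)=\int_0^\infty\!\rmd s\,s\,\rme^{sR(k)}c_s(k)=1/a(k)$. What remains is the $\re\lambda=-\beta$ integral, which together with the term $p_{t,x}$ appearing in (\ref{eq:defAkernel}) — bounded by $C_0\rme^{-\delta_0 t}$ thanks to item \ref{it:ptxbounds} of Corollary \ref{th:ptxcoroll} — forms $\Delta(t,x)$; the estimate $|\Delta(t,x)|\le c'_0\rme^{-\delta t}$ with $\delta=\beta$ then follows from the overall factor $\rme^{-\beta t}$, the $O(|\tau|^{-2})$ decay of the numerator, and the uniform lower bound $|1-\FT{p}(-\beta+\ci\tau,k)|\ge\eta_*>0$ established below.

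The core is the zero-count of $1-\FT{p}(\cdot,k)$ in $\{\re\lambda>-\beta\}$, uniform in $L$. For $|k|\le\vep_0$ small: using $p_{s,-y}=p_{s,y}$ and items \ref{it:ptxnormalization}, \ref{it:bint} of Corollary \ref{th:ptxcoroll} one has $\FT{p}(0,k)=1-\FT{D}(k;L)/\gamma\le1$ and $\FT{p}(-\beta,0)-1=\int_0^\infty\!\rmd s\,(\rme^{s\beta}-1)\rho_s\ge\beta\gamma^{-1}$, while $1-\cos(2\pi k\cdot y)\le2\pi^2k^2y^2$ and the exponential bounds on $p_{s,y}$ give $|\FT{p}(-\beta,k)-\FT{p}(-\beta,0)|\le Ck^2$ with $C$ independent of $L$; hence $\FT{p}(-\beta,k)>1\ge\FT{p}(0,k)$ and the real map $\lambda\mapsto\FT{p}(\lambda,k)$ crosses $1$ at some $-R(k)\in[0,\beta)$. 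To see this is the unique zero in $\{\re\lambda>-\beta\}$ and that it is simple, I would first check $k=0$: there $\FT{p}(\cdot,0)$ is real and strictly decreasing on $(-\delta_0,\infty)$, so $\lambda=0$ is its unique real zero, and by the strict triangle inequality for $\int_0^\infty\!\rmd s\,\rme^{-\ci s\tau}\rho_s$ (using $\rho_s>0$ on a set of positive measure) it is the only zero with $\re\lambda\ge0$; taking $\beta\le\delta_0/2$ small enough, the remaining finitely many zeros of $1-\FT{p}(\cdot,0)$ in $K$ then have $\re\lambda<-\beta$. Since the number of zeros of $1-\FT{p}(\cdot,k)$ inside a fixed rectangle containing $K$ is an argument-principle integral, it depends continuously on $k$ wherever $1-\FT{p}(\cdot,k)$ is zero-free on the rectangle's boundary — true for $|k|$ small by continuity — hence is locally constant and equal to $1$; shrinking $\vep_0$ accordingly makes the zero found above the unique simple one. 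The Taylor expansion $\FT{D}(k;L)=2\pi^2\gamma k^2\sum_y y^2\tilde{p}_y+O(k^4)$, the bounds on $\int_0^\infty\!\rmd s\,s\,p_{s,y}$, and $-\partial_\lambda\FT{p}(\lambda,k)=\gamma^{-1}+O(k^2)+O(|\lambda|)$ on $[-\beta,0]$ then yield $c'_1k^2\le R(k)\le c'_2k^2$, $|\FT{D}(k;L)-R(k)|\le c'_4k^4$, and $a(k)=(\gamma^{-1}+O(k^2))^{-1}\in(0,c'_0]$.

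For $|k|$ bounded away from $0$ the statement uses Assumption \ref{th:gammaassump}(2). Rewriting the integrand of (\ref{eq:nondegcond}) by Parseval identifies its left-hand side with $\gamma^{-2}$ times the infinite-volume analogue $\FT{D}^\infty(k_0)$ of $\FT{D}(k_0;L)$ ($\Z$-sums, $A^2_{s,y}$ replaced by $\tilde{A}^2_{s,y}$), so the hypothesis says exactly that $\FT{D}^\infty(k_0)\ge\gamma^2C_\vep$ for $\vep\le|k_0|\le\tfrac12$. Because the $\Lambda_L^*$-Riemann sums defining $\FT{p}(\lambda,k;L)$ converge to $\FT{p}^\infty(\lambda,k)$ uniformly for $\re\lambda\ge-\beta$, $k\in\T$, at an exponential rate $O(\rme^{-cL})$ (Poisson summation and the exponential decay in Lemma \ref{th:Amainlemma}, item \ref{it:Atildeb}), it suffices to show $1-\FT{p}^\infty(\lambda,k)$ is bounded away from $0$ on $\{\re\lambda\ge-\beta\}$ for $\vep_0\le|k|\le\tfrac12$ and then take $L_0$ large. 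The crucial point is the strict inequality $\int_0^\infty\!\rmd s\,|c^\infty_s(k)|<\int_0^\infty\!\rmd s\,\rho^\infty_s=1$ there: equality would force $|c^\infty_s(k)|=\rho^\infty_s$ for a.e.\ $s$, hence $\cos(2\pi k\cdot y)=\pm1$ on $\supp p^\infty_{s,\cdot}$; but $p^\infty_{s,0}=2\gamma(\tilde{A}^2_{s,0})^2$ with $s\mapsto\tilde{A}^2_{s,0}$ real-analytic and equal to $1$ at $s=0$, so $p^\infty_{s,0}>0$ for a.e.\ $s$, which excludes the sign $-1$ and leaves $c^\infty_s(k)=\rho^\infty_s$ a.e., i.e.\ $\FT{D}^\infty(k)=0$, contradicting nondegeneracy. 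By continuity and compactness $\eta(k):=1-\int_0^\infty\!\rmd s\,|c^\infty_s(k)|\ge\eta_0>0$ for $\vep_0\le|k|\le\tfrac12$, and since $\rme^{s\beta}-1$ adds at most $\beta C$ to the integral, for $\beta$ small enough $|\FT{p}^\infty(\lambda,k)|\le1-\eta_0+\beta C<1$ throughout, which gives the uniform lower bound on $|1-\FT{p}^\infty|$, and then on $|1-\FT{p}(\cdot,k;L)|$ for $L\ge L_0$. The same argument (with the known pole $-R(k)$, which has $\re\lambda=R(k)\ge0>-\beta$, removed) yields $|1-\FT{p}(-\beta+\ci\tau,k)|\ge\eta_*>0$ for all $k\in\Lambda_L^*$, $\tau\in\R$, $L\ge L_0$, completing the $\Delta$-bound.

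The remaining quantitative items are then routine. Item \ref{it:kappapbound} follows from infinite-volume Parseval and the $\times y\leftrightarrow\partial_k$ duality: $\int_0^\infty\!\rmd s\sum_{y\in\Lambda_L}y^2p_{s,y}\to\frac{\gamma}{2\pi^2}\int_0^\infty\!\rmd s\int_\T\!\rmd k\,(\partial_k\FT{A}^2_s(k))^2>0$, the positivity holding because $\FT{A}^2_s$ is non-constant in $k$ on a positive-measure set of $s$ (otherwise the integrand in (\ref{eq:nondegcond}) would vanish identically); hence $\int_0^\infty\!\rmd s\sum_y y^2p_{s,y}\ge\kappa'$ for $L\ge L_0$, which also fixes $\kappa'$ and $c'_1$, and, combined with $\FT{D}(k';L)\ge c'_1k'^2$ for $|k'|\le\vep_0$ and $\FT{D}(k';L)\ge\frac12\gamma^2C_{\vep_0}$ for $\vep_0\le|k'|\le\tfrac12$, gives $\FT{D}(k';L)\ge c'_3\min(|k'|,\vep_0)^2$ on all of $\Lambda_L^*$. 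I expect the main obstacle to be precisely the uniform-in-$L$ exclusion of zeros of $1-\FT{p}(\cdot,k)$ inside the compact region $K$ for $|k|$ of order one: there is no perturbative control around the real axis, and one is forced through the infinite-volume limit and the analyticity-in-$s$ argument that $\tilde{A}^2_{s,0}$ does not vanish identically, which is where Assumption \ref{th:gammaassump}(2) is genuinely needed.
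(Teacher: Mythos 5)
Your overall route coincides with the paper's (Laplace--Fourier representation, contour shift, residue at the zero $-R(k)$ of $1-\FT{p}(\cdot,k)$ giving $a(k)\rme^{-tR(k)}$), and your treatment of $|k|$ bounded away from zero is a legitimate and arguably cleaner alternative to the paper's imaginary-axis estimates and Lemma~\ref{th:largeklemma}: passing to the $L$-independent function $\FT{p}^\infty$, proving the strict inequality $\int_0^\infty\rmd s\,|c^\infty_s(k)|<1$ by the sign/support argument (using that $s\mapsto\tilde{A}^2_{s,0}$ is analytic and equals $1$ at $s=0$) together with the identification of the nondegeneracy integral with $\gamma^{-2}\FT{D}^\infty(k_0)$, and transferring to finite $L$ by the exponential closeness, is sound. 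The gap is in the small-$k$ region and in the $\Delta$-bound. Your error estimate rests on a uniform lower bound $|1-\FT{p}(-\beta+\ci\tau,k)|\ge\eta_*>0$ on a \emph{single fixed} line $\re\lambda=-\beta$, ``with the known pole removed''. But for $|k|\le\vep_0$ the pole sits at $-R(k)$ with $R(k)$ of order $k^2$, and the Proposition requires $R(k)\le\beta$ for all such $k$; hence for $|k|$ near $\vep_0$ the pole approaches (or reaches) the line $\re\lambda=-\beta$, no uniform $\eta_*$ exists there, and ``removing the pole'' is not an operation performed in your contour integral. The paper resolves exactly this by a per-$k$ case distinction (shift past the pole only when $R(k)\le\beta_0/2$, otherwise stop at $-\beta_0/4$ and absorb the then exponentially small term $a(k)\rme^{-tR(k)}$ into $\Delta$), together with the quantitative near-pole estimate $|1-\FT{p}(\lambda,k)|\ge\tfrac{b_0}{4}|\lambda+R(k)|$, which comes from the $L$-uniform bound $\re\partial_\lambda\bigl(1-\FT{p}\bigr)\ge b_0/4$ on a full complex ball. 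Your derivative estimate $-\partial_\lambda\FT{p}=\gamma^{-1}+O(k^2)+O(|\lambda|)$ is invoked only on the real segment $[-\beta,0]$ and is never used to control $1/(1-\FT{p})$ off the real axis, so the bound $|\Delta(t,x)|\le c'_0\rme^{-\delta t}$ with constants independent of $L$ and $k$ is not established.

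A second, related uniformity problem: your uniqueness/simplicity argument at small $k$ is anchored at $k=0$ for the \emph{finite-$L$} function. ``Taking $\beta$ small enough'' so that the remaining zeros of $1-\FT{p}(\cdot,0;L)$ lie to the left of $-\beta$, and ``shrinking $\vep_0$ accordingly'' so that the argument-principle count stays equal to one, both depend on where those finite-$L$ zeros sit, hence a priori on $L$; nothing in your argument prevents a zero with $\re\lambda<0$ from creeping up to the imaginary axis as $L\to\infty$. The paper excludes this quantitatively via the lower bound (\ref{eq:imaxisb}) for $\re\bigl(1-\FT{p}(\ci\alpha,k)\bigr)$ combined with the shift estimate (\ref{eq:Fshift1}); alternatively you could run the whole small-$k$ zero analysis on $\FT{p}^\infty$, whose zero set is $L$-independent, and transfer by Rouch\'e using the exponential closeness you already invoke for large $k$ --- but that has to be done explicitly, and it still leaves you needing the quantitative near-pole control of the previous point. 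The remaining ingredients (the identity $\FT{D}(k;L)=\gamma\bigl(1-\FT{p}(0,k)\bigr)$, the residue formula for $a(k)$, item~\ref{it:kappapbound} via Parseval and nondegeneracy, and the $k^2$ and $k^4$ estimates for $R$ and $\FT{D}-R$) are essentially correct as sketched.
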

\begin{proof}  
The goal is to use Cauchy's theorem to move the integration contour in
(\ref{eq:defAkernel}) to the left half-plane, in which case the factor
$\rme^{\lambda t}$ produces exponential decay in time.  To do this, it is
crucial to study the zeroes of $1-\FT{p}$ since these will correspond to poles
of the integrand determining the dominant modes of decay.
For notational simplicity, let us for the moment consider some fixed $k_0\in
\Lambda_L^*$ and set $F(\lambda) := 1-\FT{p}(\lambda,k_0)$.  As proven in
Appendix \ref{sec:diffkernel}, $|\FT{p}(\lambda,k_0)|<1$ if $\re \lambda>0$ and
thus $F$ is an analytic function for $\re \lambda>-\delta_0$ which has no zeroes
in the right half plane.  It turns out that under the present assumptions, in
particular, when the nondegeneracy condition in Assumption
\ref{th:gammaassump} holds, only the case with small 
$k_0$ and $\lambda$ will be relevant, and we begin by considering that case.

Suppose first that $\lambda,\lambda_0\in \C$ with $\re \lambda, \re\lambda_0>
-\delta_0$, and $n\in \N$, with $n=0$ also allowed.  The derivatives of $\FT{p}$
can be computed by differentiating the defining integrand. Therefore, the
$n$:th derivative of $F$ is equal to $\cf(n=0)+(-1)^{n+1} \int_0^\infty\! \rmd s
\, \sum_{y\in \Lambda_L} p_{s,y} s^n \rme^{-s\lambda}\rme^{-\ci 2 \pi k_0 \cdot
y} $.
Thus by item \ref{it:rhotbounds} in Corollary \ref{th:ptxcoroll}, for any $n \ge
0$, we have 
\begin{align}\label{eq:Fshift1}
 & |F^{(n)}(\lambda)- F^{(n)}(\lambda_0)|
  \le \int_0^\infty\! \rmd s \, \rho_s s^{n} |\rme^{-s \lambda_0}-\rme^{-s
\lambda}|
   \nonumber \\ & \quad
   \le  |\lambda-\lambda_0| C_1 \int_0^\infty\! \rmd s \, s^{n+1}
\rme^{-\delta_0 s} = |\lambda-\lambda_0| n! C_1\delta_0^{-(n+2)} \, .
\end{align}
Here, the second bound can be derived for instance from the representation
$\rme^{-s \lambda_0}-\rme^{-s \lambda} = 
\int_0^1\!\rmd r\, s (\lambda-\lambda_0)
\rme^{-s(\lambda+r(\lambda_0-\lambda))}$ where in the exponent for any $r$ the
real part is bounded by $\delta_0 s$.  Since $p_{t,-y}=p_{t,y}$, we also have
\begin{align}
  & F'(0) = \int_0^\infty\! \rmd s \, \sum_{y\in \Lambda_L} s p_{s,y} \cos(2 \pi
k_0 \cdot y)
  = \int_0^\infty\! \rmd s \, s \rho_s -
  \int_0^\infty\! \rmd s \, \sum_{y\in \Lambda_L} s p_{s,y} 2 \sin^2(\pi k_0
\cdot y)
   \nonumber \\ & \quad
  \ge  C_2 \int_{t_1}^\infty\! \rmd s \, s \rme^{-\gamma s} -
  2 \pi^2 |k_0|^2 \int_0^\infty\! \rmd s \, \sum_{y\in \Lambda_L} y^2 s p_{s,y}
     \nonumber \\ & \quad
  \ge  C_2 (1+\gamma t_1) \gamma^{-2} \rme^{-\gamma t_1} -
  4 \pi^2 |k_0|^2 C_0 \delta_0^{-2} \sum_{n=1}^\infty n^2 \rme^{-\gamma_2 n}\, ,
\end{align}
where we have used that $|\sin x|\le |x|$, for any $x\in\R$, and applied the
bounds in Corollary \ref{th:ptxcoroll}.
Here the constants $b_0 :=   C_2 (1+\gamma t_1) \gamma^{-2} \rme^{-\gamma t_1}
$ and 
$C_4 := 4 \pi^2 C_0 \delta_0^{-2} \sum_{n=1}^\infty n^2 \rme^{-\gamma_2 n}$ are
strictly positive and independent of $L$.  Therefore, so is
$\vep_1:=\sqrt{b_0/(2 C_4)}$ and 
we can conclude that whenever $|k_0|\le \vep_1$, we have $F'(0)\ge b_0/2$.  

Consider then the case $|k_0|\le \vep_1$, with $\vep_1>0$ defined above.  Let
$r_0>0$ be given such that $r_0<\delta_0$ and suppose that 
$\lambda$ satisfies $|\lambda|\le r_0$.  Then $F'(0)\ge b_0/2$ and
by (\ref{eq:Fshift1}) we have 
$|F'(\lambda)- F'(0)|  \le r_0 C_1\delta_0^{-3} $.  Hence, $\re F'(\lambda) \ge
b_0/2- r_0 C_1\delta_0^{-3} $.  
We set $r_0:= \min(\delta_0/2,b_0 \delta_0^3/(4 C_1))$ which is $L$-independent
and strictly positive, and conclude that then we have a lower bound $\re
F'(\lambda) \ge b_0/4>0$ for all $|\lambda|\le r_0$.  On the other hand, if
$|\lambda|,|\lambda_0|\le r_0$ with $\lambda\ne \lambda_0$, then 
the identity $F(\lambda) =  F(\lambda_0) + (\lambda-\lambda_0) \int_0^1 \!\rmd
r\, F'(\lambda_0+r(\lambda-\lambda_0))$ implies a bound
\begin{align}\label{eq:implicitbnd}
  \re \frac{F(\lambda)-F(\lambda_0)}{\lambda-\lambda_0} \ge \frac{b_0}{4}>0\, .
\end{align}

Suppose that $\lambda_0$ is a zero of $F$ in the closed ball of radius $r_0$. 
Since $\re F'(\lambda_0)>0$, then $\lambda_0$ has multiplicity one.  Also, by
(\ref{eq:implicitbnd}), we have $|F(\lambda)|\ge |\lambda-\lambda_0| b_0/4$ for
all $|\lambda|\le r_0$, and thus there can then be no other zeros of $F$ in the
ball.  
Since $F(\lambda^*)=F(\lambda)^*$ and $F(r)>0$ for all $r>0$, we can also
conclude that then necessarily $\lambda_0=-R_0$ with $0\le R_0\le r_0$. 
Therefore, if $\lambda= -\beta + \ci \alpha $, with $\beta\ne R_0$, $0\le
\beta\le r_0/2$ and $\alpha$ is real and satisfies $|\alpha|\le r_0/2$,  we may
always use the estimate $|1/F(\lambda)|\le 4 b_0^{-1}/|R_0-\beta|$.

Consider then the case in which there are no zeros of $F$ in the closed ball of
radius $r_0$.  The map $r\mapsto F(r)$ is continuous, it maps real values to
real values, and $F(r_0)>0$.  Hence now $F(-r)>0$ for all $0\le r\le r_0$.  We
apply (\ref{eq:implicitbnd}) with $\lambda_0=-r_0$ to conclude that for all
$|\lambda|\le r_0$ with $\lambda\ne -r_0$
\begin{align}
\re \frac{F(\lambda)}{\lambda+r_0} \ge  \frac{b_0}{4} + \re \frac{F(-r_0)
}{\lambda+r_0}  \ge  \frac{b_0}{4}>0\,.
\end{align}
Therefore, 
\begin{align}
  |F(\lambda)| \ge |\lambda+r_0| \left|\re \frac{F(\lambda)}{\lambda+r_0}\right|
\ge |\lambda+r_0| \frac{b_0}{4}\, .
\end{align}
We can then conclude that $|1/F(\lambda)|\le 8/(r_0 b_0)$ whenever $\lambda=
-\beta + \ci \alpha $ with $0\le \beta\le r_0/2$ and $\alpha$ is real and
satisfies $|\alpha|\le r_0/2$.

The above estimates are sufficient to control the $r_0$-neighborhood of zero for
small $k_0$.  Coming back to general $k_0$ we 
next study the properties of $F$ on the imaginary axis, for $\lambda = \ci
\alpha$ with $\alpha\in \R$.  Using item
\ref{it:ptxnormalization} in Corollary \ref{th:ptxcoroll} and the notations
introduced in 
the proof of the Corollary shows that
\begin{align}\label{eq:reFia}
  & \re F(\ci \alpha) = \int_0^\infty\! \rmd s \,  \sum_{y\in \Lambda_L} p_{s,y}
\left(1-\cos(s \alpha+2\pi k_0 \cdot y)\right)
  \nonumber \\ & \quad
  = 4 \gamma \int_0^\infty\! \rmd s \,  \sum_{y\in \Lambda_L} \left|A^2_{s,y}
\sin\Bigl(\frac{1}{2}s \alpha+\pi k_0 \cdot y\Bigr)\right|^2
    \nonumber \\ & \quad
  = 4 \gamma \int_0^\infty\! \rmd s \,  \int_{\Lambda_L^*}\! \rmd k \, \left|
  \sum_{y\in \Lambda_L}\rme^{-\ci 2 \pi k \cdot y} A^2_{s,y}
\sin\Bigl(\frac{1}{2}s \alpha+\pi k_0 \cdot y\Bigr)\right|^2\, .
\end{align}
Since $k_0\in \Lambda_L^*$, there is $n_0\in \Lambda_L$ such that $k_0 = n_0/L$.
To derive lower bounds for (\ref{eq:reFia}), it suffices to consider the case
in which $n_0\ge 0$, since then for $n_0<0$ we 
can use the symmetry of cosine and apply the bounds derived for the case where
the signs of 
$\alpha$ and $k_0$ are reversed.  

Consider first the case in which $n_0$ is even.  Then there is $n_1\in
\Lambda_L$ such that $0\le n_1 \le L/4$ and $n_0=2 n_1$.  In this case, $k_0/2
\in \Lambda_L^*$,
the Fourier-transform of $y\mapsto A^2_{s,y}$ equals
$\FT{A}^2_s(k)$ and thus by using 
$\sin x = (\rme^{\ci x}-\rme^{-\ci  x})/(2\ci)$ in  (\ref{eq:reFia}) yields 
\begin{align}\label{eq:reFiabounds}
  & \re F(\ci \alpha) 
  =\gamma \int_0^\infty\! \rmd s \, \int_{\Lambda_L^*}\! \rmd k \, 
  \left|\rme^{\ci s\alpha/2} \FT{A}^2_s(k-k_0/2)-
  \rme^{-\ci s\alpha/2}\FT{A}^2_s(k+k_0/2)\right|^2
     \nonumber \\ & \quad
  \ge \gamma \int_{t_1}^\infty\! \rmd s \, \int_{\Lambda_L^*}\! \rmd k \,
\left(\FT{A}^2_s(k-k_0/2)-\FT{A}^2_s(k+k_0/2)\right)^2 
    \nonumber \\ & \qquad
 +2\gamma \int_{t_1}^\infty\! \rmd s \, (1- \cos(s \alpha)) \int_{\Lambda_L^*}\!
\rmd k \, \FT{A}^2_s(k-k_0/2)\FT{A}^2_s(k+k_0/2) 
 \end{align}
where in the last step we used the fact that $\FT{A}^2_s$ are real.  Applying
the lower bound in item 
\ref{it:Alowerb} of Lemma \ref{th:Amainlemma} thus proves that
\begin{align}
  & \re F(\ci \alpha) \ge 2\gamma \int_{t_1}^\infty\! \rmd s \, (1- \cos(s
\alpha)) c_1^2 \rme^{-\gamma s}\, .
\end{align}
For instance by representing the cosine as a sum of two exponential terms, we
find that  $\int_{t}^\infty\! \rmd s \, (1- \cos(s \alpha)) \rme^{-\gamma s} =
\gamma^{-1} \rme^{-\gamma t} \alpha^2/(\alpha^2+\gamma^2)$ if $|\alpha| t \in
2 \pi \Z$.
Therefore, now
\begin{align}\label{eq:imaxisb}
  & \re F(\ci \alpha) \ge 2 c_1^2  \rme^{-\gamma t_1} \rme^{-2 \pi
|\gamma/\alpha|} \frac{1}{1+|\gamma/\alpha|^2}\, .
\end{align}
For any $r>0$, set $\tilde{C}(r)$ to be equal to the right hand side at
$\alpha=r$.  Then 
$\tilde{C}(r)>0$, it is independent of $L$, and we can conclude that 
$\re F(\ci \alpha) \ge \tilde{C}(r)$ 
whenever $|\alpha|\ge r$ and $Lk_0$ is even.

In the remaining cases $n_0$ is odd and positive.  Then there is $n_1\in
\Lambda_L$ such that $0\le n_1 \le L/4$ and $n_0=2 n_1+1$.
Thus we can apply the above estimate for $\re F(\ci \alpha)$ at $k_0-1/L=2
n_1/L$.  On the other hand,
\begin{align}\label{eq:cosdiff}
\left|\cos(s \alpha+2\pi k_0 \cdot y) - \cos(s \alpha+2 \pi (k_0-1/L)\cdot
y)\right| \le \frac{2 \pi |y|}{L}\, ,
\end{align}
and we can conclude that for odd $n_0$ and every $|\alpha|\ge r>0$
\begin{align}
  \re F(\ci \alpha) \ge \tilde{C}(r) - \int_0^\infty\! \rmd s \,  
\sum_{y\in \Lambda_L}
p_{s,y} \frac{2 \pi |y|}{L}
   \ge \tilde{C}(r) - L^{-1} \frac{4 \pi C_0}{\delta_0} \sum_{n=1}^\infty n
\rme^{-\gamma_2 n}\, ,
\end{align}
where in the second inequality we have applied the bounds in item
\ref{it:ptxbounds} of Corollary \ref{th:ptxcoroll}.  Therefore, to every $r>0$
there is  
$\tilde{L}(r)\in \N_+$ such that 
the final bound is greater than $\tilde{C}(r)/2$ for every 
$L\ge \tilde{L}(r)$.   
Thus we can conclude that, if $r>0$ and $L\ge \tilde{L}(r)$, then 
$\re F(\ci \alpha;k_0,L) \ge \tilde{C}(r)/2$ for all $|\alpha|\ge r$ 
and $k_0\in \Lambda^*_L$.

If $\beta$ satisfies $0\le \beta <\delta_0$, then by (\ref{eq:Fshift1}) we
have $|F(-\beta + \ci \alpha)- F(\ci \alpha)|\le \beta C_1\delta_0^{-2}$ for
all real $\alpha$. 
Therefore, if we set 
$\tilde{\beta}(r):=\tfrac{1}{2}\delta_0 \min(1,\tilde{C}(r)\delta_0/(2 C_1))$
for $r>0$, then  
we have found strictly positive, $L$-independent constants such that for any
$r>0$ and  
$L\ge\tilde{L}(r)$
\begin{align}
  \re F(-\beta + \ci \alpha; k_0,L) \ge \frac{1}{4} \tilde{C}(r) >0\, ,
\end{align}
for all $k_0\in \Lambda_L^*$, $|\alpha|\ge r$, and 
$0\le \beta\le\tilde{\beta}(r)\le \delta_0/2$. 

It is now possible to conclude the estimates for the case when
$|k_0|\le \vep_1$.  Recall the earlier definition of $r_0$ and set
$C_5 := \tilde{C}(r_0/2)$, $L_5 := \tilde{L}(r_0/2)$ and 
$\beta_1 := \tilde{\beta}(r_0/2)$.  Assume that $L\ge L_5$.
We use Cauchy's theorem and, when necessary, the residue
theorem to change the integration contour from $\vep'+\ci \R$, $\vep'>0$, to
$-\beta+\ci \R$ with some $\beta>0$.  If there are no zeroes of $F$ in the
closed ball of 
radius $r_0$, we choose $\beta=\beta_0$ with $\beta_0:=\min(\beta_1,r_0/2)$ and
the above results imply that the integrand in (\ref{eq:defAkernel}) is analytic
for $\re \lambda\ge -\beta_0$ and we have $|1/F|\le \max(4/C_5,8/(r_0 b_0))$
on the 
integration contour.  If there are zeroes in the ball, then the zero is unique
and lies at $-R_0$ with $0\le R_0\le r_0$ and 
$1/F$ has a first order pole at $-R_0$.  If $R_0> \beta_0/2$, we choose
$\beta=\beta_0/4<\beta_1$ when the integrand is analytic to the right of the
final 
contour and hence the pole does not contribute.  Then also $|1/F|\le
\max(4/C_5,16/(\beta_0 b_0))$ on the integration contour.  If $R_0\le
\beta_0/2$, we choose $\beta=\beta_0$.  Then the residue theorem can be used to
evaluate the contribution from the pole, and the remaining integral over
$-\beta+\ci \R$ can be bounded using $|1/F|\le \max(4/C_5,8/(\beta_0 b_0))$.

Assume then that $L\ge L_5$ and $|k_0|\le \vep_1$.
Following the above steps, we find that, if
there is $0\le R_0\le \beta_0/2$ such that
\begin{align}
  1 = \int_0^\infty\! \rmd s \, \rme^{s R_0} \sum_{y\in \Lambda_L} p_{s,y}
\cos(2 \pi k_0 \cdot y)\, ,
\end{align}
then $\FT{p}(-R_0,k_0)=1$ and 
\begin{align}
 \int_{\vep'-\ci \infty}^{\vep'+\ci \infty}\! \frac{\rmd \lambda}{2\pi \ci} \,
\rme^{\lambda t} \frac{\FT{p}(\lambda,k_0)^2}{1-\FT{p}(\lambda,k_0)}
 = \frac{1}{m(k_0)} \rme^{- t R_0} + \Delta
\, .
\end{align}
Here $m(k_0)=F'(-R_0)$, implying
\begin{align}
  m(k_0) := \int_0^\infty\! \rmd s \, s \rme^{s R_0} \sum_{y\in \Lambda_L}
p_{s,y} \cos(2 \pi k_0 \cdot y) \ge \frac{b_0}{4}>0\, ,
\end{align}
and there is a constant $C_6>0$, independent of $L$, such that 
\begin{align}
  |\Delta|\le C_6 \rme^{-\beta_0 t}\, .
\end{align}
(Recall that $|\FT{p}(\lambda,k_0)|\le C'/(1+|\lambda|)$ for 
$\re\lambda\ge -\delta_0/2$.) 
If no such $R_0$ can be found, then $\FT{p}(-r,k_0)<1$ for all 
$r\le \beta_0/2$ and one 
of the remaining cases is realized.  Hence, then 
\begin{align}
  \left|\int_{\vep'-\ci \infty}^{\vep'+\ci \infty}\! 
\frac{\rmd \lambda}{2\pi \ci} \,
\rme^{\lambda t} \frac{\FT{p}(\lambda,k_0)^2}{1-\FT{p}(\lambda,k_0)}
\right|\le C'_6
\rme^{-\beta_0 t/4}\, ,
\end{align}
with some $L$-independent $C'_6>0$.

The following Lemma will be used to study the remaining values of $k_0$.
\begin{lemma}\label{th:largeklemma}
  Suppose  Assumption \ref{th:gammaassump} holds.  Then for every 
$\vep>0$ we can find $C(\vep)>0$,
  $L(\vep)\in \N_+$ and $\beta(\vep)\in(0,\delta_0/2]$ such that, if 
$L\ge L(\vep)$ and $k_0\in \Lambda_L^*$ with $|k_0|\ge \vep$, then
  $F(0;k_0,L)\ge C(\vep)$ and $\re  F(\lambda;k_0,L)\ge C(\vep)/2$ 
for all $\lambda$ with $0\ge \re \lambda\ge -\beta(\vep)$.
\end{lemma}
\begin{proof}
Fix $\vep>0$ and let $C_\vep>0$ denote the corresponding constant in
Assumption \ref{th:gammaassump}.   
As proven above, if $L\ge 1$ and $k_0\in \Lambda_L^*$ is such that $L k_0$ is
a even and nonnegative, then  
\begin{align}
F(0;k_0,L)
 =\gamma \int_0^\infty\! 
\rmd s \, \int_{\Lambda_L^*}\! \rmd k \, f_s(k,k_0)\, ,\quad
  f_s(k,k_0):=\left(\FT{A}^2_s(k-k_0/2)-\FT{A}^2_s(k+k_0/2)\right)^2\, .
\end{align}
If $h\in C^{(1)}(\T)$, then $|\int_\T\!\rmd k\, h(k)-\int_{\Lambda_L^*}\! \rmd
k \, h(k)|\le  
\sum_{n\in\Lambda_L} \norm{h'}_\infty \int_{|k-n/L|\le (2 L)^{-1}}\!\rmd k\, 
|k-n/L|\le \norm{h'}_\infty /(2L)$.  
By Lemma \ref{th:Amainlemma}, we can apply this in the above with
$\norm{h'}_\infty\le 8 c_0^2 \rme^{-\delta_0 s}$.  Therefore, 
\begin{align}
\left|F(0;k_0,L)
 -\gamma \int_0^\infty\! \rmd s \, \int_{\T}\! \rmd k \, f_s(k,k_0)\right|
  \le \frac{4 \gamma c_0^2}{L\delta_0} 
  \, .
\end{align}
If $L k_0$ is odd and nonnegative, we have by (\ref{eq:cosdiff})
\begin{align}
|F(0;k_0,L)-F(0;k_0{-}1/L,L)|\le \frac{4\pi C_0}{L\delta_0}  \sum_{n=1}^\infty n \rme^{-\gamma_2 n}  
\end{align}
and also
$|f_s(k,k_0)-f_s(k,k_0{-}1/L)|\le 4 c_0^2 \rme^{-\delta_0 s}L^{-1}$, by Lemma
\ref{th:Amainlemma}.  Therefore, there is an $L$-independent constant $C>0$
such that $\left|F(0;k_0,L) 
 -\gamma \int_0^\infty\! \rmd s \, \int_{\T}\! \rmd k \, f_s(k,k_0)\right|
  \le C/L$
for all $k_0\in \Lambda_L^*$.  If $|k_0|\ge \vep$, then by assumption $\gamma
\int_0^\infty\! \rmd s \, \int_{\T}\! \rmd k \, f_s(k,k_0)\ge \gamma C_\vep $
and hence $F(0;k_0,L)\ge \gamma C_\vep-C/L$.  Thus by choosing $L'(\vep)$ such
that $ L'(\vep) \ge 2 C/(\gamma C_\vep)$ we have $F(0;k_0,L)\ge \gamma
C_\vep/2$ whenever $L\ge L'(\vep)$ and $|k_0|\ge \vep$.  

Consider then some fixed $L\ge L'(\vep)$ and $|k_0|\ge \vep$.
By the earlier results, $|F(\lambda)-F(0)|\le r C_1 \delta_0^{-2}$ if
$|\lambda|\le r<\delta_0$. 
(The constant $C_1$ here should not be confused with $C_\vep$ at $\vep=1$.)
Thus if  
$r_1:= \min(\delta_0/2,\gamma C_\vep\delta_0^2/(4C_1))>0$, then 
$\re F(\lambda)\ge  \gamma C_\vep/4$ for all $|\lambda|\le r_1$.  On the other
hand, if also $L\ge \tilde{L}(r_1/2)$, then we have $\re F(-\beta+\ci
\alpha)\ge \tilde{C}(r_1/2)/4$ for all $0\le\beta \le \tilde{\beta}(r_1/2)$
and $|\alpha|\ge r_1/2$.  Combining the above estimates yields constants such
that the Lemma holds for all $L\ge L(\vep):=
\max(L'(\vep),\tilde{L}(r_1/2))$. 
\end{proof}

We next apply Lemma \ref{th:largeklemma} with $\vep=\vep_1/2>0$.  Set thus
$L_7:=L(\vep_1/2)$, $C_7:=C(\vep_1/2)$, and $\beta_2:=\beta(\vep_1/2)$.
Assume $L\ge L_7$ and $k_0\in \Lambda_L^*$ with $|k_0|\ge \vep_1/2$.  Then by
the Lemma, 
for any $\lambda= -\beta + \ci \alpha $, with $0\le \beta\le \beta_2$ and
$\alpha\in \R$  
we have $|1/F(\lambda)|\le 2/C_7$.  
Therefore, 
we can change the contour to $-\beta_2+\ci \R$
without encountering any singularities.  This proves that there is an
$L$-independent constant $C_8$ such that for $|k_0|\ge \vep_1$ and $L\ge L_7$
\begin{align}
  \left|\int_{\vep'-\ci \infty}^{\vep'+\ci \infty}\! 
\frac{\rmd \lambda}{2\pi \ci} \,
\rme^{\lambda t} \frac{\FT{p}(\lambda,k_0)^2}{1-\FT{p}(\lambda,k_0)}
\right|\le C_8
\rme^{-\beta_2 t}\, .
\end{align}

Collecting the above estimates together proves that there are
constants $c_0'>0$, $\beta'>0$, $L'\in \N_+$,
such that if $L\ge L'$, then for all $k_0\in \Lambda_L^*$ either 
\begin{align}
  \left|\int_{\vep'-\ci \infty}^{\vep'+\ci \infty}\! 
\frac{\rmd \lambda}{2\pi \ci} \,
\rme^{\lambda t} \frac{\FT{p}(\lambda,k_0)^2}{1-\FT{p}(\lambda,k_0)}
\right|\le c_0'
\rme^{-\beta' t}\, ,
\end{align}
or $|k_0|\le \vep_1$ and there are $R(k_0)$ and $a(k_0):=1/m_0(k_0)$ satisfying
(\ref{eq:defRk}) and (\ref{eq:defak}) such that
\begin{align}
  \left|\int_{\vep'-\ci \infty}^{\vep'+\ci \infty}\! 
\frac{\rmd \lambda}{2\pi \ci} \, \rme^{\lambda t} 
\frac{\FT{p}(\lambda,k_0)^2}{1-\FT{p}(\lambda,k_0)}- a(k_0)
\rme^{-t R(k_0)}\right|\le c_0' \rme^{-\beta' t}\, .
\end{align}

We still need to make sure that all the claimed bounds will hold.   From now on
we assume that $L\ge L'$ so that all of the earlier derived
bounds can be used.  Suppose then that $|k_0|\le \vep_1$.
Since $F(-R)\in \R$, for $0\le R\le r_0$,  (\ref{eq:implicitbnd}) implies that
$F(-R)\le F(0)-R b_0/4$ for these $R$.  Therefore, if $F(-R(k_0))=0$, we have
$0\le R(k_0)\le 4 F(0)/b_0$.  Since 
\begin{align}\label{eq:Fzeroestimates}
&  F(0) = 2 \int_0^\infty\! \rmd s \,  \sum_{y\in \Lambda_L} p_{s,y} \sin^2(\pi
k_0 \cdot y)
\le 2 \pi^2 k_0^2 \int_0^\infty\! \rmd s \,  \sum_{y\in \Lambda_L} p_{s,y} y^2
\le k_0^2 \frac{4 \pi^2 C_0}{\delta_0} \sum_{n=1}^\infty n^2 \rme^{-\gamma_2
n}\, ,
\end{align}
we can conclude that with $c'_2:=(4\pi)^2 C_0/(b_0 \delta_0) \sum_{n=1}^\infty
n^2 \rme^{-\gamma_2 n}$ we have $R(k_0)\le  c'_2 k_0^2$. 
Also, whenever $L \ge L_8:=\max(L',L_7,1/\vep_1)$, there exists 
$k_\vep \in [\vep_1/2,\vep_1]\cap \Lambda_L^*$, and by Lemma
\ref{th:largeklemma}, then 
$F(0;k_\vep,L)\ge C_7>0$.  Therefore, for $L\ge L_8$ also
$\int_0^\infty\! \rmd s \,  \sum_{y\in \Lambda_L} 
p_{s,y} y^2\ge C_7/(2 \pi^2 \vep_1^2)$.  Thus if we set 
$\kappa':=C_7/(2 \pi^2 \vep_1^2)$, then item \ref{it:kappapbound} holds.

To get a lower bound, we assume $L\ge L_8$, and use the fact that $|\sin x|\ge
|x|2/\pi$ for all $|x|\le \pi /2$.  This shows that if $0<\vep_0\le \vep_1$,
then for all $|k_0|\le \vep_0$
\begin{align}
&  F(0) \ge k_0^2 8 \int_0^\infty\! \rmd s \,  \sum_{y\in \Lambda_L} y^2 
p_{s,y} \cf\!\left(|y|\le \frac{1}{2 \vep_0}\right)
%\nonumber \\ & \quad
\ge k_0^2 8 \Bigl[ \kappa' - \frac{2 C_0}{\delta_0} \sum_{n> 1/(2\vep_0)} n^2
\rme^{-\gamma_2 n}
\Bigr] \, .
\end{align}
Therefore, by choosing any $L$-independent $\vep_0\in \left(0,\sqrt{r_0/c'_2}\right]$ such that
$\vep_0\le \vep_1$ and for which $\sum_{n> 1/(2\vep_0)} n^2 \rme^{-\gamma_2 n} \le \kappa'\delta_0/(4
C_0)$, we have 
$F(0)\ge k_0^2 4 \kappa'$ for all $|k_0|\le \vep_0$.   It follows from
(\ref{eq:Fshift1}) that $F(0)-F(-R)=|F(0)-F(-R)|\le R C_1 \delta_0^{-2}$ for
$0\le R\le r_0$.  
Therefore, if $F(-R(k_0))=0$, we have $R(k_0)\ge \delta_0^2 F(0)/C_1\ge c'_1
k_0^2$ with $c'_1:= 4 \kappa' \delta_0^2/C_1>0$ for all $|k_0|\le \vep_0$. 

Set then $\beta:=c'_2 \vep_0^2 \in (0,r_0]$.
Collecting the above estimates together, we can now conclude that if $|k_0|\le
\vep_0$, then $F(-\beta)\le (c'_2 k_0^2-\beta) b_0/4\le 0$, and thus there is a
unique $R(k_0)\in [0,\beta]$ such that $F(-R(k_0))=0$, and then also $c'_1 k_0^2
\le R(k_0)\le c'_2 k_0^2$.  As $|k_0|\le \vep_1$, then $0<a(k_0)\le
b_0/2$
for $a(k_0):=1/m_0(k_0)$.  If $\vep_0<|k_0|\le \vep_1$, then either $R(k_0)\ge
\delta_0^2 F(0;k_0)/C_1$, or there is no zero of $F$ in $[0,\beta_0/2]$. 
In the first case, we have $F(0;k_0,L)\ge C(\vep_0)>0$ for all $L\ge L(\vep_0)$
and thus then $|\rme^{-t R(k_0)}/m_0(k_0)|\le b_0/2 \rme^{-t
  C(\vep_0)\delta_0^2/C_1}$.   
In the second case, the previous estimates apply.
Thus by setting
$\delta:=\min(C(\vep_0) \delta_0^2/C_1,\beta_2,\beta_0/4)>0$ we have also
proven the 
exponential upper bound for the correction.  (Note that $|p_{t,x}|\le C_0
\rme^{-\delta_0 t}$ decays always faster than $\rme^{-\delta t}$.)

Finally, define $\FT{D}(k')$ by (\ref{eq:defDk}) for all $k' \in \Lambda_L^*$.  
Comparing the definition to (\ref{eq:Fzeroestimates}) shows that then in fact
$\FT{D}(k') = \gamma F(0;k')$.  Therefore, if $L$ is large enough, then by the
above estimates, we have $\FT{D}(k')\ge \gamma C(\vep_0)$ for $|k'|> \vep_0$,
and 
$\FT{D}(k')\ge 4 \gamma \kappa' (k')^2$ for $|k'|\le \vep_0$.  Hence,  we can
arrange that $\FT{D}(k')\ge c'_3 \min (|k'|,\vep_0)^2$ for some $c'_3>0$,
independent of $L$,
as claimed in the Proposition.  Using item \ref{it:bint} in Corollary
\ref{th:ptxcoroll} and the above estimates
then shows that whenever $|k|\le \vep_0$
\begin{align}
 &  \gamma^{-1}\left(R(k)-\FT{D}(k)\right) = -1+\int_0^\infty\! \rmd s \, 
  \sum_{y\in \Lambda_L} p_{s,y} \left(s R(k)+\cos(2 \pi k \cdot y) \right)
\nonumber \\ & \quad
 = \int_0^\infty\! \rmd s \, 
  \sum_{y\in \Lambda_L} p_{s,y} s R(k) \left(1-\cos(2 \pi k \cdot y)\right)
\nonumber \\ & \qquad
  +\int_0^\infty\! \rmd s \, \left(1+s R(k) - \rme^{sR(k)}\right)
   \sum_{y\in \Lambda_L} p_{s,y} \cos(2 \pi k \cdot y)\, ,
\end{align}
where in the second step we used the defining relation of $R(k)$,
equation (\ref{eq:defRk}).  The first term in the sum is bounded by $k^4 c'_2
C_0 (2\pi /\delta_0)^2 \sum_{n=1}^\infty n^2 \rme^{-\gamma_2 n}$, 
and the second one is
bounded by $R(k)^2\int_0^\infty\!\rmd s \rho_s s^2\rme^{s R(k)}\le k^4 2 C_1
(c'_2)^2 \delta_0^{-3}$.  Choosing the sum of the two factors multiplying $k^4$
as $c'_4/\gamma$ then implies $|R(k)-\FT{D}(k)|\le c'_4 k^4$.  This concludes
the proof of  the Proposition.
\end{proof}

The following observation will provide a convenient estimate for the proof of
the main theorem.
\begin{lemma}\label{th:nsqsum}
  $\displaystyle\sum_{n=1}^\infty n^2 \rme^{-\vep n^2}\le 2 \vep^{-\frac{3}{2}}$
for all $\vep>0$.
\end{lemma}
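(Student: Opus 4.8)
The plan is to bound each summand by isolating one factor that controls the polynomial weight $n^2$ uniformly in $n$, and to keep the rest as an integrable Gaussian. Start from the elementary inequality $t\rme^{-at}\le \frac{1}{\rme\, a}$, valid for $t\ge 0$ and $a>0$ and proved by maximizing $t\mapsto t\rme^{-at}$ over $t\ge 0$ (the maximum is attained at $t=1/a$ and equals $\frac1a\rme^{-1}$). Applying this with $t=n^2$ and $a=\vep/2$ gives $n^2\rme^{-\vep n^2/2}\le \frac{2}{\rme\,\vep}$ for every $n\ge 1$. Writing $\rme^{-\vep n^2}=\rme^{-\vep n^2/2}\,\rme^{-\vep n^2/2}$, this yields $n^2\rme^{-\vep n^2}\le \frac{2}{\rme\,\vep}\,\rme^{-\vep n^2/2}$ for all $n\ge 1$.

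The second step is to compare the remaining sum with a Gaussian integral. Since $x\mapsto \rme^{-\vep x^2/2}$ is nonincreasing on $[0,\infty)$, for each $n\ge 1$ we have $\rme^{-\vep n^2/2}\le \int_{n-1}^{n}\rme^{-\vep x^2/2}\,\rmd x$, and summing over $n\ge 1$ gives $\sum_{n=1}^\infty \rme^{-\vep n^2/2}\le \int_0^\infty \rme^{-\vep x^2/2}\,\rmd x=\sqrt{\pi/(2\vep)}$, using the standard Gaussian integral $\int_0^\infty \rme^{-ax^2}\,\rmd x=\frac12\sqrt{\pi/a}$ with $a=\vep/2$.

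Combining the two steps, $\sum_{n=1}^\infty n^2\rme^{-\vep n^2}\le \frac{2}{\rme\,\vep}\sqrt{\pi/(2\vep)}=\frac{\sqrt{2\pi}}{\rme}\,\vep^{-3/2}$, and since $\sqrt{2\pi}/\rme<2$ the claimed bound follows with room to spare. There is essentially no obstacle here; the only point requiring a choice is the symmetric split $\rme^{-\vep n^2}=\rme^{-\vep n^2/2}\cdot\rme^{-\vep n^2/2}$, which lets one factor absorb the weight $n^2$ uniformly in $n$ while the other stays an integrable Gaussian whose sum is dominated by its integral, so that no case distinction on the size of $\vep$ is needed.
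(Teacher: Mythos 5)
Your proof is correct, and it takes a genuinely different route from the paper's. The paper compares the sum directly with $\int_0^\infty x^2\rme^{-\vep x^2}\,\rmd x=\tfrac{\sqrt{\pi}}{4}\vep^{-3/2}$; since the integrand $x^2\rme^{-\vep x^2}$ is not monotone, this forces a discussion of where its maximum $x_\vep=\vep^{-1/2}$ sits relative to the integers, and hence a case distinction between $\vep<1$ and $\vep\ge 1$, with the maximal value $f(x_\vep)=(\vep\rme)^{-1}\le \vep^{-3/2}$ added as a correction term in each case. You instead split $\rme^{-\vep n^2}=\rme^{-\vep n^2/2}\cdot\rme^{-\vep n^2/2}$, absorb the weight $n^2$ uniformly via the elementary bound $t\rme^{-at}\le (\rme a)^{-1}$, and are left with a genuinely monotone Gaussian tail whose sum is dominated by $\int_0^\infty\rme^{-\vep x^2/2}\,\rmd x=\sqrt{\pi/(2\vep)}$ with no case analysis; all steps check out and the resulting constant $\sqrt{2\pi}/\rme\approx 0.92$ is in fact smaller than the paper's $1+\sqrt{\pi}/4$. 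What the paper's approach buys is that its leading term $\tfrac{\sqrt{\pi}}{4}\vep^{-3/2}$ is the sharp asymptotic constant as $\vep\to 0$ (the loss is confined to the additive correction), whereas your symmetric split sacrifices a constant factor in exchange for a shorter, case-free argument; for the purpose of Lemma \ref{th:nsqsum}, where any $L$- and $\vep$-independent constant below $2$ suffices, either works equally well.
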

\begin{proof}
Fix $\vep>0$ and consider the function $f(x)=x^2 \rme^{-\vep x^2}$ for $x\ge 0$.
 It is strictly increasing on $[0,x_\vep]$ and strictly decreasing for $x\ge
x_\vep$, with $x_\vep:= \vep^{-\frac{1}{2}}$.
If $\vep<1$, we have $x_\vep>1$, and we set $n_\vep\ge 1$ as the integer part of
$x_\vep$.
We estimate the sum as an integral over a step function containing values of
$f$. 
This shows that $\sum_{n=1}^{n_\vep-1} n^2 \rme^{-\vep n^2} \le \int_0^{x_\vep}
\!\rmd x f(x)$ and 
$\sum_{n=n_\vep+2}^\infty n^2 \rme^{-\vep n^2} \le \int_{x_\vep}^\infty \!\rmd x
f(x)$.  Hence, $\sum_{n=1}^\infty n^2 \rme^{-\vep 
n^2}\le \int_{0}^\infty \!\rmd x\, f(x) + 2 f(x_\vep) \le \vep^{-\frac{3}{2}}
\Bigl(\int_{0}^\infty \!\rmd y\, y^2 \rme^{-y^2}
+ 1\Bigr)$.  The constant is equal to $1+\sqrt{\pi}/4<2$.

If $\vep\ge 1$, we have $x_\vep\le 1$ and thus $f$ is strictly decreasing for
$x\ge 1$.  As above, this implies $\sum_{n=1}^\infty n^2 \rme^{-\vep n^2}\le
\int_{1}^\infty \!\rmd x\, f(x) + \rme^{-\vep} \le \vep^{-\frac{3}{2}}
\Bigl(\int_{0}^\infty \!\rmd y\, y^2 \rme^{-y^2}
+ (3/(2 \rme))^{3/2}\Bigr)<2 \vep^{-\frac{3}{2}}$.  
\end{proof}

\begin{proofof}{Theorem \ref{th:maintprof}}
Suppose now that $L\ge L_0$ which together with the  assumptions of the Theorem
allows using the formulae and constants given in Proposition \ref{th:Greenprop},
with $\delta_0:=\omega_0^2/\gamma$.  
  In particular, let $\FT{D}(k)$ be defined by the formula (\ref{eq:defDk}),
$a(k)$ for $|k|\le \vep_0$ by (\ref{eq:defak}),  and set
 \begin{align}
   \tau_x := \sum_{y\in \Lambda_L}  \int_{\Lambda_L^*} \!\rmd k\, \rme^{\ci
2\pi k\cdot (x-y)} a(k) \int_0^\infty\!\rmd s \,g_{s,y}\, ,
 \end{align}
 and 
\begin{align}
  (D \tau)_x := \sum_{y\in \Lambda_L} \tilde{p}_y  \left(2 \tau_x- \tau_{x+y} -
\tau_{x-y} \right)\, ,
  \quad \tilde{p}_x :=  \frac{\gamma}{2} \int_0^\infty \!\rmd s \, p_{s,x} \, .
 \end{align}
 Clearly, $ \tilde{p}_x\ge 0$, and since $a(-k)=a(k)$, $\tau_x\in \R$.
By the discussion before Proposition \ref{th:Greenprop} we can now conclude that
the there is a unique continuous solution $T_{t,x}$ to (\ref{eq:Tevoleq}).  
It satisfies
  \begin{align}\label{eq:Tsol}
    T_{t,x} = g_{t,x} + \int_0^t\!\rmd s\, \sum_{y\in \Lambda_L} \Delta(t-s,x-y)
g_{s,y} +
    \sum_{y\in \Lambda_L} \int_{\Lambda_L^*} \!\rmd k\, 
    \rme^{\ci 2\pi k\cdot (x-y)} a(k) \int_0^t\!\rmd s\, \rme^{-(t-s) R(k)}
g_{s,y}\, .
  \end{align}
  By Proposition \ref{th:gtxprop}, the first term is bounded by $C'
\rme^{-\delta_0 t} E_L$, where $E_L=L\mathcal{E}$ denotes the average total
energy.  Applying also Proposition \ref{th:Greenprop} to the second term shows
that it is bounded by $
   \int_0^t\!\rmd s\, \sum_{y\in \Lambda_L}  g_{s,y} c'_0 \rme^{-\delta (t-s)}
\le 2 C' c'_0 \delta_0^{-1}\rme^{-\delta t} E_L$, since $\delta \le \delta_0/2$.
 In the third term, we separate the term with $k=0$, for which $R(0)=0$.  
By item \ref{it:bint} in Corollary \ref{th:ptxcoroll}  then $1/a(0) =
\int_0^\infty\!\rmd s\, s \rho_s =\gamma^{-1}$, and thus  the $k=0$ term is
equal to $L^{-1} \gamma \int_0^t\!\rmd s\, \sum_{y}g_{s,y}$.  By Proposition
\ref{th:gtxprop} this differs from $\mathcal{E}$ maximally by $\mathcal{E} C'
\gamma/\delta_0 \rme^{-\delta_0 t}$. Therefore, with $c:=C'\max(1+2c'_0\delta_0^{-1},\gamma\delta_0^{-1})$,
  \begin{align}
 &   |T_{t,x}-\mathcal{E}|\le c \mathcal{E} (1+L) \rme^{-\delta t} + 
    c'_0 C' \mathcal{E} \sum_{1\le |n|\le L \vep_0} \rme^{-t c'_1 n^2 L^{-2}}
\int_0^t\!\rmd s\,  \rme^{-\delta_0 s/2}
    \nonumber \\ & \quad
 \le c \mathcal{E} (1+L) \rme^{-\delta t} + 
    4 c'_0 C' \delta_0^{-1} \mathcal{E} \sum_{n=1}^\infty \rme^{-t c'_1 n^2
L^{-2}} 
 \, .
  \end{align}
Now for any $\vep>0$, $\sum_{n=1}^\infty \rme^{-\vep n^2}\le \rme^{-\vep}
\sum_{n'=0}^\infty \rme^{-2 \vep n'}=\rme^{-\vep}/(1-\rme^{-2 \vep})$.  Then for
$\vep = t c'_1 L^{-2}$ we also have $L^2\le 2 t c'_1/(1-\rme^{-2 \vep})$.
Therefore, we can find a constant $C$ such that (\ref{eq:TdiffersfromE}) holds
for $d:=\min(c'_1,\delta L_0^2/2)>0$.

In order to prove the lattice diffusion equation, we come back to
(\ref{eq:Tsol}).
For the third term we now apply the estimates
  \begin{align}
   & \left| \int_0^t\!\rmd s\, g_{s,y} \rme^{-(t-s) R(k)}- \rme^{-t \FT{D}(k)}
    \int_0^\infty\!\rmd s\, g_{s,y}\right|
 \nonumber \\ & \quad
 \le \left| \rme^{-t R(k)}- \rme^{-t \FT{D}(k)}\right| \int_0^\infty\!\rmd s\,
g_{s,y}
 + \int_t^\infty\!\rmd s\, g_{s,y}
 %\nonumber \\ & \qquad
 + \int_0^t\!\rmd s\, g_{s,y} \rme^{-(t-s) R(k)} \left|1 - \rme^{-s R(k)}\right|
\, .
  \end{align}
Splitting the final integral into two parts at $s=t/2$ then yields 
  \begin{align}
   & \sum_{y\in \Lambda_L}  \left| \int_0^t\!\rmd s\, g_{s,y} \rme^{-(t-s)
R(k)}- \rme^{-t \FT{D}(k)}
    \int_0^\infty\!\rmd s\, g_{s,y}\right|
 \nonumber \\ & \quad
 \le C'\delta_0^{-1} E_L \left[  t \left|\FT{D}(k)-R(k)\right| \rme^{-t
\min(R(k),\FT{D}(k))} + \rme^{-\delta_0 t}
  %\nonumber \\ & \qquad
 + \delta_0^{-1}  R(k) \rme^{-\frac{t}{2} R(k)} + \delta_0 t\rme^{-\frac{t}{2}
\delta_0} \right] \, .
  \end{align}
  Using the known properties of $R$ and $\FT{D}$, we can now conclude that there
are constants $c,m>0$, independent of $L$ and the initial state, such that 
  \begin{align}\label{eq:Tdiffest}
  &  \left|  T_{t,x} -
    \int_{\Lambda_L^*} \!\rmd k\, \rme^{-t \FT{D}(k)} 
   \sum_{y\in \Lambda_L} \rme^{\ci 2\pi k\cdot (x-y)} a(k) \int_0^\infty\!\rmd
s\, g_{s,y} \right|
  \nonumber \\ & \quad
 \le c  E_L \left[ \rme^{-\frac{\delta}{4} t} + \int_{\Lambda_L^*} \!\rmd k\,
\cf(|k|\le \vep_0)  k^2 \rme^{-t m k^2}
 \right]
   \, .
  \end{align}
To arrive at the above bound, we choose $m:=\min(c'_3,c'_1)/2$ and
estimate $t k^2 \rme^{-2 m t k^2}\le m^{-1} \rme^{-m t k^2}$.  Here, by Lemma
\ref{th:nsqsum}, 
$\int_{\Lambda_L^*} \!\rmd k\, \cf(|k|\le \vep_0)  k^2 \rme^{-t m k^2}\le L^{-3}
2 \sum_{n=1}^\infty n^2 \rme^{-t m L^{-2} n^2}
\le 4 m^{-3/2} t^{-3/2}$.  Thus for $C:=4 (\delta^{-3/2}+ m^{-3/2}) C'$, the right
hand side of (\ref{eq:Tdiffest}) is bounded by $C E_L t^{-3/2}$.  On the other
hand, since the Fourier-transform of the operator $D$ is equal to multiplication
by $\FT{D}(k)$, we can now conclude that (\ref{eq:Tlatticediff}) holds.
\end{proofof}

\begin{proofof}{Corollary \ref{th:maincoroll}}
  Fix an allowed $L\ge L_0$, and some $t_0>0$ and the function $\varphi$.  For
any initial data $f_0\in C^{(2)}(L\T)$
  the solution of the heat equation (\ref{eq:heateq}) on the circle is standard
and can be done using Fourier series.  Explicitly, then
   \begin{align}\label{eq:diffsolFT}
f(t,\xi) = \sum_{n\in \Z} \rme^{\ci 2 \pi n\cdot \xi/L} \rme^{-t \kappa_L (2 \pi
n/L)^2} \FT{f}_0(n)
\end{align}
where  
  \begin{align}
     \FT{f}_0(n) := \frac{1}{L}\int_{L\T}\!\rmd \xi\, \rme^{-\ci 2 \pi n\cdot
\xi/L} f_0(\xi) \, .
  \end{align}
On the other hand, a solution to the periodic heat equation coincides with the
solution to the heat equation on $\R$ with periodic initial data, and thus also
$|f(t,x)|\le \norm{f}_\infty$ for all $t,x$.

As intermediate approximations, set $\tau_{t,x}:=(\rme^{-t D} \tau)_x$, for
$t\ge 0$, $x\in\Lambda_L$, and, for $t\ge 0$, $\xi\in L\T$,  set
$\tilde{T}(t,\xi) := \sum_{y\in\Z} \varphi(\xi-y) \tau_{t+t_0,y\bmod
\Lambda_L}$ 
and define $f(t,\xi)$ as the solution to the  heat equation (\ref{eq:heateq})
with initial data
$f(0,\xi):=\tilde{T}(0,\xi)$.  Theorem \ref{th:maintprof} implies the following
bound for the error:
$|\Tobs(t,\xi)- \tilde{T}(t,\xi)|\le C\mathcal{E} L t_0^{-3/2}\sum_{y\in\Z}
|\varphi(\xi-y)|$.
Since the difference $\Tpred-f$ is a solution to the heat equation, we also
obtain
\begin{align}
  |\Tpred(t,\xi) -f(t,\xi)|\le \sup_{\xi'} |\Tpred(0,\xi') -f(0,\xi')| \le
C\mathcal{E} L t_0^{-3/2}
  \sup_{\xi'} \sum_{y\in\Z} |\varphi(\xi'-y)| \, .
\end{align}

Hence, it suffices to study the difference $f(t,\xi)-\tilde{T}(t,\xi)$.  For any
vector $\psi\in \C^{\Lambda_L}$, the Fourier transform of $h(\xi):=\sum_{y\in
\Z} \varphi(\xi-y) \psi_{y \bmod \Lambda_L}$ satisfies
  \begin{align}
  &  \FT{h}(n) = \frac{1}{L}\int_{L\T}\!\rmd \xi\, \rme^{-\ci 2 \pi n\cdot
\xi/L} 
    \sum_{x\in \Lambda_L} \psi_x \sum_{m\in \Z} \varphi(\xi-x+m L)
     \nonumber \\ & \quad
   =  \sum_{x\in \Lambda_L} \psi_x 
   \sum_{m\in \Z}   \int_{\T}\!\rmd q\, \rme^{-\ci 2 \pi n\cdot q} 
     \varphi(L(q+m-x/L))
     \nonumber \\ & \quad
   =  \sum_{x\in \Lambda_L} \psi_x 
 \int_{\R}\!\rmd q\, \rme^{-\ci 2 \pi n\cdot q} 
     \varphi(L(q-x/L))
     \nonumber \\ & \quad
   =  \frac{1}{L} \FT{\varphi}(n/L) \sum_{x\in \Lambda_L} \psi_x \rme^{-\ci 2
\pi n\cdot x/L}\, .
  \end{align}
Since $h\in C^{(2)}$, its Fourier transform is pointwise invertible, and thus at
every $\xi$ we then have 
  \begin{align}\label{eq:FTh}
   h(\xi)= \sum_{n\in \Z} \rme^{\ci 2 \pi n\cdot \xi/L}  \frac{1}{L}
\FT{\varphi}\!\left(\frac{n}{L}\right)
    \FT{\psi}\!\left(\frac{n \bmod \Lambda_L}{L}\right)\, .
  \end{align}
Therefore, using the definition of $\tau_0$ and (\ref{eq:diffsolFT}) to
represent $f$, we have
 \begin{align}
& f(t,\xi)-\tilde{T}(t,\xi) 
     \nonumber \\ & \quad
= \frac{1}{L}
\sum_{n\in \Z} \rme^{\ci 2 \pi n\cdot \xi/L}
\FT{\varphi}\!\left(\frac{n}{L}\right)
\left(\rme^{-t \kappa_L (2 \pi n/L)^2}-\rme^{-t \FT{D}(k)}\right) \rme^{-t_0
\FT{D}(k)} a(k) \sum_{y\in \Lambda_L} \rme^{-\ci 2 \pi k\cdot
y}\int_0^\infty\!\rmd s\, g_{s,y}\, ,   
 \end{align}
where $k$ is a shorthand for $(n \bmod \Lambda_L)/L$.  Here $\FT{D}(k) = 4 
\sum_{y\in \Lambda_L} \tilde{p}_y \sin^2(\pi k\cdot y)$ and thus
\begin{align}
  \kappa_L (2\pi k)^2-\FT{D}(k) = 4  \sum_{y\in \Lambda_L} \tilde{p}_y ((\pi
k\cdot y)^2-\sin^2(\pi k\cdot y))
  \le k^4 \frac{4 \pi^4}{3}\sum_{y\in \Lambda_L} \tilde{p}_y y^4\, ,
\end{align}
since $0\le x^2-\sin^2 x \le x^4/3 $ for any $x \in \R$.  Here $\sum_{y\in
\Lambda_L} \tilde{p}_y y^4$ is bounded by the $L$-independent constant $\gamma
C_0 \delta_0^{-1} \sum_{m=1}^\infty m^4 \rme^{-\gamma_2 m}<\infty$ and 
thus we can now conclude that there is a constant $c'>0$, independent of $L$ and
the initial state, such that 
$0\le \kappa_L (2\pi k)^2-\FT{D}(k) \le c' k^4$.  On the other hand, by
Proposition \ref{th:Greenprop}
for any $k\in \Lambda_L^*$ either $a(k)=0$ or $\FT{D}(k)\ge c'_3 k^2$ and
$0<a(k)\le c'_0$.  
Together with Proposition \ref{th:gtxprop} it follows that
 \begin{align}
& |f(t,\xi)-\tilde{T}(t,\xi)|\le \frac{c'_0}{\gamma} L \mathcal{E} \biggl[ c'
\norm{\varphi}_1  \int_{\Lambda_L^*} \!\rmd k\, 
t k^4 \rme^{-(t+t_0) c'_3 k^2} + \frac{1}{L}\sum_{|n| \ge L/2}
\left|\FT{\varphi}(n/L)\right|
\biggr]\, ,
\end{align}
where we have first treated separately the sum over $n\in\Lambda_L$ for which
$n/L=k$.  Here $\int_{\Lambda_L^*} \!\rmd k\, 
t k^4 \rme^{-(t+t_0) c'_3 k^2}$ can be estimated as in the proof of Theorem
\ref{th:maintprof}, which implies that it is bounded by a constant times
$t_0^{-3/2}$.  Therefore, collecting the above three estimates together, and
readjusting the constant $C$,
we find that (\ref{eq:Taccuracy}) holds.

To prove the final statement, assume additionally that $\varphi\ge 0$,
$\int_\R\!\rmd x\, \varphi(x)=1$, and $\FT{\varphi}(p)=0$ for all $|p|\ge
\frac{1}{2}$.  Then for any $\xi\in L\T$ we can apply (\ref{eq:FTh}) with
$\psi_x=1$ and conclude 
\begin{align}
 & \sum_{y\in \Z} |\varphi(\xi-y)|  = \sum_{y\in \Z} \varphi(\xi-y) = 
  \sum_{n\in \Z} \rme^{\ci 2 \pi n\cdot \xi/L}  \frac{1}{L}
\FT{\varphi}\!\left(\frac{n}{L}\right)
  \sum_{x\in \Lambda_L} \rme^{-\ci 2 \pi n\cdot x/L} 
     \nonumber \\ & \quad
  = \sum_{m\in \Z} \rme^{\ci 2 \pi m\cdot \xi}  \FT{\varphi}(m) =
\FT{\varphi}(0) = 1\, .
\end{align}
As now $\norm{\varphi}_1=1$ and  $\sum_{|n| \ge L/2}
|\FT{\varphi}(n/L)|=0$, the second bound  (\ref{eq:Taccuracy2}) follows.
\end{proofof}

\section{Proving complete thermalization via local dynamic replicas?}
\label{sec:LTE}

As a conclusion, let us present a local version of the dynamic replica method
introduced in Sec.~\ref{sec:global} and comment on how this might be used to
extend the results derived in Sec.~\ref{sec:Tprofile} into a proof of complete
thermalization.  For the present homogeneous system with periodic boundary
conditions, the benefits of using a local version of the replicas are perhaps not
immediately apparent.  However, for any system which is not totally
translation invariant, either due to boundary effects or to inhomogeneities, the
use of local replicas should be helpful since it allows using different local
dynamics for different lattice sites.  For instance, near the boundary it will
be necessary to incorporate the correct boundary conditions to properly account
for reflection and absorption at the boundary, whereas in the ``bulk'' one could
use the simpler periodic dynamics.   At
present, this is mere speculation, and it  remains to be seen how well such a 
division can be implemented in practice.

For simplicity, let us again assume that the harmonic interactions have a finite
range $\rphi$.  The lattice size is still assumed to be $L\gg 1$, but the local
dynamics will live on a smaller lattice $\Lambda_R$. For convenience, we assume
that $R$ is odd and satisfies $ \rphi\le R \le L/2$ which implies that $R+
\rphi \le L$.  The replicated generating functional is then defined as
\begin{align}
  h_t(\zeta;R) := \biggl\langle\exp\biggl[\ci \sum_{x\in \Lambda_L,y\in
\Lambda_R,i=1,2} \zeta^i_{x,y} X(t)^i_{[x+y]_L}\biggr]\biggr\rangle\, ,\quad
t\ge 0,\ \zeta\in \R^{2\times\Lambda_L\times \Lambda_R}\, ,
\end{align}
where we recall the notation $[x]_L$ used for projection onto $\Lambda_L$.  Let
us also drop ``$R$'' from the notation from now on.  If $y\in \Lambda_R$ and $z\in \Lambda_L$ are such that 
$[y-z]_L \ne y-z$, then $L/2 \le |y-z|\le L$ and thus $|[y-z]_L |\ge L-|y-z|\ge (L-R)/2 \ge \rphi/2$ implying $\Phi([y-z]_L )=0$.
Hence, 
$\sum_{z\in \Lambda_L}\Phi([y-z]_L ) X^1_{[x+z]_L} = \sum_{z\in \Lambda_L}\Phi(y-z) X^1_{[x+z]_L} 
=  \sum_{y'\in \Lambda_R} \sum_{\tau\in \set{-1,0,1}} 
\Phi(y-y'-\tau R) X^1_{[x+y'+\tau R]_L}$
for any $y\in \Lambda_R$.  Since for any $y,y'\in \Lambda_R$ also $\sum_{\tau\in \set{-1,0,1}} 
\Phi(y-y'-\tau R) = \Phi([y-y']_R )$, we then obtain as in
Sec.~\ref{sec:global} the following evolution equation
\begin{align}
  & \partial_t h_t(\zeta) = \frac{\gamma}{2} \sum_{x_0\in \Lambda_L} 
  \left( h_t(\sigma_{x_0} \zeta )-h_t(\zeta )-(\sigma_{x_0}-1) \zeta\cdot
\nabla_\zeta h_t  (\zeta )\right)
  - (\mathcal{M}_\gamma \zeta)\cdot \nabla_\zeta h_t  (\zeta ) 
      %\nonumber \\ & \quad    
    \nonumber \\ & \quad 
- \sum_{x\in \Lambda_L} \sum_{y',y\in \Lambda_R} \sum_{\tau=\pm 1}
\Phi(y'-y+\tau R) \zeta^2_{xy}
\mean{\ci (X^1_{[x+\tau R+y']}-X^1_{[x+y']}) \rme^{\ci \sum_{x_0,y_0,i}
\zeta^i_{x_0,y_0} X(t)^i_{x_0+y_0}}} \,,
\end{align}
where the last term collects the terms left over from the periodic harmonic evolution
in the replicated direction.  The matrix operations in the equation are defined
analogously to those appearing in Sec.~\ref{sec:global} as
\begin{align}
  (\sigma_{x_0} \zeta)_{xy}^i := \begin{cases}
                 -\zeta_{xy}^i\, , & \text{if }i=2 \text{ and }[x+y]_L=x_0\, ,
\\
                 \zeta_{xy}^i\, , & \text{otherwise}\, ,
               \end{cases}
\end{align}
and, similarly to (\ref{eq:defMM}),
\begin{align}
  \mathcal{M}_\gamma := \bigoplus_{x_0\in \Lambda_L} M_\gamma^{(x_0)}\, , \quad
  (M_\gamma^{(x_0)}\zeta)_{xy}^i  := \cf(x{=}x_0)\,
(M_\gamma\zeta_{x_0\cdot}^{\cdot})_{y}^i\, , \quad
  M_\gamma := \begin{pmatrix}
                0 & \Phi_R \\
                -1 & \gamma 1
              \end{pmatrix}\, .
\end{align}
The correction term can be expressed via derivatives of $h_t$ using the matrices
\begin{align}
  (D_R\zeta^2)_{xy} :=  \sum_{y'\in \Lambda_R} \sum_{\tau=\pm 1} \Phi(y-y'+\tau
R) (\zeta^2_{[x-\tau R],y'}-\zeta^2_{xy'})\, ,
 \qquad  \mathcal{D} := \begin{pmatrix}
                0 & D_R\\
                0&0
              \end{pmatrix}\, .
\end{align}
As in Sec.~\ref{sec:global}, we denote $\zeta_s := \rme^{-s  \mathcal{M}_\gamma}
\zeta$ and obtain the equality
\begin{align}
  & h_t(\zeta)=h_0(\zeta_t)-\int_0^t\! \rmd s\,(\mathcal{D} \zeta_s)\cdot
\nabla_\zeta h_{t-s}  (\zeta_s) 
    \nonumber \\ & \quad
  + 
  \int_0^t\! \rmd s\,
  \frac{\gamma}{2} \sum_{x_0\in \Lambda_L} 
  \left( h_{t-s}(\sigma_{x_0} \zeta_s)-h_{t-s}(\zeta_s)-(\sigma_{x_0}-1)
\zeta_s\cdot \nabla_{\zeta} h_{t-s}  (\zeta_s)\right)
 \,.
\end{align}

We recall the definition of the local statistics generating function and choose
$R_0=R$ here.  Explicitly, $f_{t,x_0}(\xi):= h_t(\zeta[\xi,x_0])$ with
$\zeta[\xi,x_0]_{xy}^i := \cf(x=x_0) \xi^i_y$ for
$\xi \in \R^{2\times \Lambda_{R}}$, $x_0\in \Lambda_L$.  We denote $\xi_s :=
\rme^{-s M_\gamma} \xi$, for which clearly $\zeta[\xi,x_0]_s =
\zeta[\xi_s,x_0]$.  Similarly, setting $(S_{y_0}\xi)^i_y := (-1)^{\cf(i=2,
y=y_0)} \xi^i_y$ implies that $(1-\sigma_{x'})\zeta[\xi,x_0]=\sum_{y_0\in
\Lambda_R}\cf(y_0=[x'{-}x_0]_L)$ $\times\zeta[(1{-}S_{y_0})\xi,x_0]$.  Then it
is straightforward to check that $f_{t,x_0}$ satisfies
\begin{align}
  & ((1-\mathcal{C})f)_t(\xi)= f_0(\xi_t)-\int_0^t\! \rmd s \left.(\mathcal{D}
\zeta_s)\cdot \nabla_\zeta h_{t-s}  (\zeta_s)\right|_{\zeta_s=\zeta[\xi_s,x_0]}
\, , \quad\text{with } 
    \nonumber \\ &
    (\mathcal{C}f)_t(\xi) :=
  \int_0^t\! \rmd s\,
  \frac{\gamma}{2} \sum_{y_0\in \Lambda_R} 
  \left( f_{t-s}(S_{y_0} \xi_s)-f_{t-s}(\xi_s)-(S_{y_0}-1) \xi_s\cdot
\nabla_{\xi} f_{t-s}  (\xi_s)\right)
 \,.
\end{align}
The operator $\mathcal{C}$ is essentially the same as the one appearing in
Sec.~\ref{sec:global}.   For the sake of argument, suppose that we could extend
the strong estimate in (\ref{eq:TdiffersfromE}) and show that, if $g$ is
exponentially decaying in time, then $(1-\mathcal{C})^{-1} g$ is always a sum of
an equilibrium generating function 
and a term which is exponentially small on the $R$-diffusive time scale, in $t
R^{-2}$.  Then the above formula would imply that also  $f$ has this property,
i.e., that strong local equilibrium holds.  The main additional hurdle for such
analysis is to find a separate control for the current term, $\int_0^t\! \rmd
s\,(\mathcal{D} \zeta_s)\cdot \nabla_\zeta h_{t-s}  (\zeta_s)$, since the term
needs to be sufficiently small and slowly varying for the local equilibrium
approximation to hold.  Proving this might be possible with an iterative
argument. However, 
further developments are needed to put any such scheme on solid ground. 

\section{On applications to other dynamical systems}
\label{sec:anharm}

Although the result in Theorem \ref{th:maintprof} requires sufficiently large
lattice sizes and times, it does not involve taking
any direct scaling limits.  This is somewhat surprising considering that up to
know nearly all mathematically rigorous work on energy diffusion in particle
systems of the present kind has relied on control of either a hydrodynamic
scaling limit, or as a middle step, a kinetic scaling limit leading to
a Boltzmann equation, see for instance \cite{erdyau99,erdyau05a,erdyau05b}.  

It is thus fair to ask if the present results are just particular properties of
the harmonic particle chain with velocity flips.  Indeed, it is unlikely
that the above computations for the velocity flip model can directly be
carried over to other models, such as purely Hamiltonian dynamical systems.
In the present case the evolution equation of second moments is closed, in the
sense that it does not depend on any higher order moments, and hence we do not
need to invert the full evolution equation of the characteristic function,
(\ref{eq:htiter}), but instead we can use an equation derived for its second
derivatives evaluated at zero, (\ref{eq:Tevoleq}).  In addition, for the
present system it
is straightforward to separate a part of the generator related to
the ``perturbation'' (the flips), and use this to produce a spectral gap for
the exponentiated linear term: this results in the change of
$\rme^{-t\mathcal{M}_0}$ to $\rme^{-t\mathcal{M}_\gamma}$ and is
responsible for the exponential decay in time of the memory kernel
$p_{t,x}$ in (\ref{eq:Tevoleq}).

To give an indication of the structure for other dynamical systems, we give below
an outline of an application of the dynamic replica method for an anharmonic
particle chain.  The application mimics the standard perturbation theory, and
should be taken with a grain of salt: there could well exist another way of
organizing the interactions so that the replica evolution semigroup
has better decay properties.  For this reason, we skip
all technical details of the computations below.

We consider the Hamiltonian system obtained by replacing the earlier velocity flips
by an anharmonic pinning potential $\gamma q_x^4$.  Explicitly, fix $L$
and let $\Phi_L$ and $H_L$ be defined as before, by
(\ref{eq:defphiL}) and (\ref{eq:defHLandGL}).
Define then for $X\in \R^{\Lambda_L}\times
\R^{\Lambda_L}$
\begin{align}
  & V_L(X)  := \gamma
  \sum_{x\in \Lambda_L} \frac{1}{4} (X_x^1)^4\, , 
\end{align}
choose some coupling $\gamma>0$, and set
\begin{align}
\anhH(X) := H_L(X) + \gamma V_L(X)
= \sum_{x\in \Lambda_L} \frac{1}{2} p_x^2
  + \sum_{x',x\in \Lambda_L} \frac{1}{2} (\Phi_L)_{x'x} q_{x'} q_x 
  + \gamma  \sum_{x\in \Lambda_L} \frac{1}{4} q_x^4 
\, .
\end{align}
The corresponding Hamiltonian evolution equations are 
\begin{align}
 \partial_t q(t)_x = p(t)_x\, , \qquad
 \partial_t p(t)_x = -(\Phi_L q(t))_x - \gamma q(t)_x^3\, .
\end{align}
For any initial data $q(0),p(0)$ there is a unique differentiable solution to
these equations, and this defines $X(t;X(0))$. 
We choose some random distribution 
$\mu_0$ for the initial data $X(0)$ and use this to define a random
vector $X(t)$.
The corresponding replicated generating function is
\begin{align}
h_t(\zeta) := \left\langle \rme^{\ci Y_t}\right\rangle \, ,\qquad
Y_t:=\sum_{x,y,i} \zeta^i_{x,y} X(t)^i_{x+y} \, ,
\end{align}
and it satisfies an evolution equation
\begin{align}
  & \partial_t h_t(\zeta) = 
  \sum_{x,y\in \Lambda_L} \zeta^1_{xy} \mean{\ci X^2_{x+y} \rme^{\ci
Y_t}}-\sum_{x,y,z\in \Lambda_L} \zeta^2_{xy} 
 (\Phi_L)_{x+y,x+z}\mean{\ci X^1_{x+z} \rme^{\ci Y_t}} 
   \nonumber \\ & \quad
      - \gamma \sum_{x,y\in \Lambda_L} \zeta^2_{xy} \mean{\ci (X^1_{x+y})^3
\rme^{\ci Y_t}}
\, .
\end{align}

We close the equation by using the previous computations for the first two terms
and set the anharmonic term to act at the origin of the replicated
direction.  This yields
\begin{align}
  & \partial_t h_t(\zeta) = - (\mathcal{M}_0 \zeta)\cdot \nabla h_t  (\zeta)    
 + \gamma \sum_{x_0\in \Lambda_L} \Bigl(\sum_{y\in \Lambda_L}
\zeta^2_{x_0-y,y}\Bigr) 
 \partial_{\zeta^1_{x_0,0}}^3 h_t(\zeta) \, .
\end{align}
Proceeding as in Section \ref{sec:global}, we obtain the following Duhamel
formula for $h_t$: with $\zeta_t := \rme^{-t \mathcal{M}_0} \zeta$ we have
\begin{align}\label{eq:hanhiter}
  & h_t(\zeta) = h_0(\zeta_t)+ \gamma \int_0^t\! \rmd s\,
  \sum_{x\in \Lambda_L} \Bigl(\sum_{y\in
\Lambda_L}(\zeta_s)^2_{x-y,y}\Bigr) 
 \left.\partial_{\zeta^1_{x0}}^3 h_{t-s}\right|_{\zeta_s}
 \,.
\end{align}

The formula (\ref{eq:hanhiter}) could then be used as a starting point for
further analysis.  For instance, if the goal is the study of kinetic scaling
limits, which involve only times $0\le t\le \tau \gamma^{-2}$, $\tau>0$ fixed and $\gamma\ll 1$,
it could be iterated once to obtain
\begin{align}
  & h_t(\zeta) = h_0(\zeta_t)
   \nonumber \\ & \quad
  + \gamma 
  \sum_{z\in \Lambda_L^4,i\in\set{1,2}^4} \sum_x
  \Bigl(\sum_y (\zeta_t)_{x-y,y+z_4}^{i_4}\Bigr)
 \partial_{\zeta^{i_1}_{x z_1}}\!
 \partial_{\zeta^{i_2}_{x z_2}}\!
 \partial_{\zeta^{i_3}_{x z_3}} h_0\bigr|_{\zeta_t}
  \int_0^t\! \rmd s\, \left(\rme^{s M_0}\right)^{2,i_4}_{0,z_4}
   \prod_{j=1}^3\left(\rme^{-s M_0}\right)^{i_j,1}_{z_j,0}
   \nonumber \\ & \quad
  + \gamma^2 \int_0^t\! \rmd t'\,
  \sum_{z\in \Lambda_L^4,i\in\set{1,2}^4} \sum_{x',x}
   \Bigl(\sum_{y'} (\zeta_{t'})_{x'-y',y'}^{2}\Bigr)
   \Bigl(\sum_y (\zeta_{t'})_{x-y,y+z_4}^{i_4}\Bigr)
\partial_{\zeta^{i_1}_{x z_1}}\!
 \partial_{\zeta^{i_2}_{x z_2}}\!
 \partial_{\zeta^{i_3}_{x z_3}} \partial^3_{\zeta^{1}_{x'0}}
  h_{t-t'}\bigr|_{\zeta_{t'}}
   \nonumber \\ & \qquad \times
  \int_0^{t'}\! \rmd s\, \left(\rme^{s M_0}\right)^{2,i_4}_{0,z_4}
   \prod_{j=1}^3\left(\rme^{-s M_0}\right)^{i_j,1}_{z_j,0}
   \nonumber \\ & \quad
  + 3 \gamma^2 \int_0^t\! \rmd t'\,
  \sum_{z\in \Lambda_L^4,i\in\set{1,2}^4} \sum_{x',x}
   \Bigl(\sum_y (\zeta_{t'})_{x-y,y+z_4}^{i_4}\Bigr)
   \partial_{\zeta^{i_1}_{x z_1}}\!
 \partial_{\zeta^{i_2}_{x z_2}}\!
 \partial^3_{\zeta^{1}_{x'0}}
  h_{t-t'}\bigr|_{\zeta_{t'}}
   \nonumber \\ & \qquad \times
 \cf(i_3=2, z_3=x'-x)
  \int_0^{t'}\! \rmd s\, \left(\rme^{s M_0}\right)^{2,i_4}_{0,z_4}
   \prod_{j=1}^3\left(\rme^{-s M_0}\right)^{i_j,1}_{z_j,0}
 \,.
\end{align}
The new terms all have additional decay arising from
the integration over $s$ which involves four oscillating factors.
This should facilitate rigorous analysis
of the kinetic scaling limits for a suitable class of initial states $\mu_0$.

However, let us stress once more that this approach might not be
optimal since it uses the unmodified free evolution in the replicated
directions.  Settling the question will likely require
careful consideration of which initial data to allow and of what function
space to use for the solutions $h_t$.

\appendix

\section{Local dynamics}\label{sec:applocal}

In this section we collect some basic properties of the semigroup $\rme^{-t
M_\gamma}$, for
\begin{align}
  M_\gamma := \begin{pmatrix}
                0 & \Phi_L \\
                -1 & \gamma 1
              \end{pmatrix} \, .
\end{align}
Since $\Phi_L$ is invariant under periodic translations, it can be diagonalized
by Fourier transform.
A discrete Fourier transform then yields, for $k\in \Lambda_L^*$, 
\begin{align}
     \FT{M}_\gamma(k) := \begin{pmatrix}
                0 & \omega(k)^2 \\
                -1 & \gamma
              \end{pmatrix}       \, .                                          
\end{align}
We make a Jordan decomposition of this matrix. The two eigenvalues, labeled by
$\sigma=\pm 1$, are
\begin{align}
\mu_\sigma(k) := \frac{\gamma}{2} + \sigma \sqrt{(\gamma/2)^2-\omega(k)^2}
  =\frac{\gamma}{2} + \ci \sigma \sqrt{\omega(k)^2-(\gamma/2)^2}\, ,
\end{align}
where the first formula is used if $\gamma>2 \omega(k)$, and the second formula
otherwise.  The case $\gamma=2 \omega(k)$ has a nontrivial Jordan block and
needs to be treated separately.  Using the decomposition, it is straightforward
to conclude that
\begin{align}
  (\rme^{-t M_\gamma})^{i'i}_{y'y} = \int_{\Lambda_L^*} \! \rmd k\, \rme^{\ci
2\pi k\cdot (y'-y)} (\rme^{-t \FT{M}_\gamma(k)})^{i'i}
\end{align}
where
\begin{align}
  & \rme^{-t \FT{M}_\gamma(k)} = \sum_{\sigma=\pm 1} \rmê^{-t \mu_\sigma(k)}
P_\sigma(k)\, ,\\
  & P_\sigma(k) :=  \left.\frac{1}{\mu_\sigma - \mu_{-\sigma}}
  \begin{pmatrix}
                -\mu_{-\sigma} & \omega^2 \\
                -1 & \mu_{\sigma}
              \end{pmatrix} \right|_{\mu_\pm := \mu_\pm(k), \omega:= \omega(k)},
\quad\text{if } |\omega(k)|\ne \frac{\gamma}{2}\, ,
  \end{align}
and taking the formal limit $\omega(k)\to \gamma/2$ yields the correct
expressions also for the degenerate case.
The eigenvalues satisfy, for $\omega(k)\ge \gamma/2$  and dropping the variable
$k$ from the notations,
\begin{align}
  & \mu_\sigma =\frac{\gamma}{2} + \ci \sigma \sqrt{\omega^2-(\gamma/2)^2}\,
,\quad
  |\mu_\sigma|=\omega\, ,\quad \mu_\sigma^* =  \mu_{-\sigma}\, ,\quad
\mu_\sigma-\mu_{-\sigma}= 2 \ci \sigma\sqrt{\omega^2-(\gamma/2)^2}\, .
\end{align}
If $\omega(k)< \gamma/2$, then
\begin{align}
  & \mu_-<\omega< \mu_+\, , \quad \mu_+ \in \gamma [\frac{1}{2},1]\, ,\quad
  \mu_- = \frac{\omega^2}{\mu_+}\in \frac{\omega^2}{\gamma} [1,2]
  \, ,\quad \mu_\sigma-\mu_{-\sigma}= 2 \sigma\sqrt{(\gamma/2)^2-\omega^2}\, .
  \end{align}

\section{Solution of the lattice renewal equation}
\label{sec:diffkernel}

In this appendix, we recall some basic mathematical properties of the Green's
function solution of the renewal equations encountered in the main text.  By Corollary \ref{th:ptxcoroll} the
corresponding functions $p_{t,x}$ satisfy all of the assumptions below. 
\begin{proposition}\label{th:renewalprop} 
Consider some $L\ge 1$ and $p_{t,x}$, given for $t\ge 0$, $x\in \Lambda_L$. 
Suppose that there are constants $C_0,\delta_0,\gamma_2>0$ such that all of the
following statements hold:
  \begin{enumerate}
  \setlength{\itemsep}{0 pt}
    \item $t\mapsto p_{t,x}$ belongs to $C^{(1)}([0,\infty))$ for all $x\in
\Lambda_L$.
    \item $p_{t,x}\ge 0$ and $|\partial_t p_{t,x}|, p_{t,x}\le C_0
\rme^{-\gamma_2 |x|-\delta_0 t}$ for all $x\in \Lambda_L$ and $t> 0$.
    \item $\int_0^\infty \!\rmd t \sum_{x\in \Lambda_L} p_{t,x} =1$ and
$\int_0^{t_0} \!\rmd t \sum_{x\in \Lambda_L} p_{t,x} <1$ for all $t_0\ge 0$.
  \end{enumerate}
  Then there is a unique continuous function $G\in C([0,\infty)\times
\Lambda_L)$ for which
  \begin{align}\label{eq:Arenewal}
    G(t,x) = p_{t,x} + \int_0^t \!\rmd s\, \sum_{y\in\Lambda_L} G(t-s,x-y)
p_{s,y} \, ,\quad t\ge 0,\  x\in \Lambda_L\, .
  \end{align}
  In addition, this $G$ is non-negative, bounded, continuously differentiable,
and has the following pointwise integral representation, valid for any $\vep>0$,
  \begin{align}\label{eq:defGreen}
    G(t,x) = p_{t,x} + \int_{\Lambda_L^*} \!\rmd k\, \rme^{\ci 2\pi k\cdot x}
\int_{\vep-\ci \infty}^{\vep+\ci \infty}\! \frac{\rmd \lambda}{2\pi \ci} \,
\rme^{\lambda t} \frac{\FT{p}(\lambda,k)^2}{1-\FT{p}(\lambda,k)} \, .
  \end{align}
  In the integrand, the function $\FT{p}$ denotes the Laplace-Fourier transform
of $p$, defined for all $k\in\Lambda_L^*$ and $\lambda\in \C$ with 
  $\re \lambda>-\delta_0$ by 
  \begin{align}\label{eq:defFTp}
    \FT{p}(\lambda,k) := \int_0^\infty\! \rmd s \, \sum_{y\in \Lambda_L} p_{s,y}
\rme^{-s\lambda}\rme^{-\ci 2 \pi k \cdot y}\, .
  \end{align}
  It satisfies the following properties:
  \begin{enumerate}
  \setlength{\itemsep}{0 pt}
    \item\label{it:FTp1} The map $\lambda\mapsto \FT{p}(\lambda,k)$ is an
analytic function on the half plane $\re \lambda>-\delta_0$ for any $k$, and 
there  all of its $\lambda$-derivatives can be computed by differentiating the
integrand.
   \item\label{it:FTp3} For any $\vep>0$, there is $c_\vep>0$ such that
$|\FT{p}(\vep+\ci\alpha ,k)|\le 1-c_\vep$ for all $\alpha\in \R$ and
$k\in\Lambda_L^*$.
    \item\label{it:FTp2} There is $C'>0$, which depends only on the input
constants $C_0,\delta_0,\gamma_2>0$, such that
    $|\FT{p}(\lambda,k)|\le C' (1+|\lambda|)^{-1}$ whenever $\re \lambda\ge
-\delta_0/2$.
   \end{enumerate} 
  In particular, the integral in (\ref{eq:defGreen}) is always absolutely
convergent.
\end{proposition}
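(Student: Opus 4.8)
The proof splits into four steps. \emph{Step 1 (properties of $\FT{p}$).} The second hypothesis gives $\sum_{y\in\Lambda_L}p_{s,y}\le C_0'\rme^{-\delta_0 s}$ and $\sum_y|\partial_s p_{s,y}|\le C_0'\rme^{-\delta_0 s}$ with $C_0'$ depending only on $C_0,\gamma_2$. Hence the integral (\ref{eq:defFTp}) converges absolutely and locally uniformly on $\re\lambda>-\delta_0$, and analyticity together with the rule for differentiating under the integral follows by dominated convergence (the powers of $s$ being absorbed into $\rme^{-\delta_0 s}$), giving item (i). For item (ii), $|\FT{p}(\vep+\ci\alpha,k)|\le\int_0^\infty\!\rmd s\sum_y p_{s,y}\rme^{-\vep s}=1-c_\vep$ with $c_\vep:=\int_0^\infty\!\rmd s\sum_y p_{s,y}(1-\rme^{-\vep s})>0$, positivity coming from $p\ge0$ and $\int_0^\infty\!\rmd t\sum_x p_{t,x}=1$. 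For item (iii), integrate by parts once in $s$, so that $\FT{p}(\lambda,k)=\lambda^{-1}\bigl(\sum_y p_{0,y}\rme^{-\ci2\pi k\cdot y}+\int_0^\infty\!\rmd s\,\rme^{-s\lambda}\partial_s\!\sum_y p_{s,y}\rme^{-\ci2\pi k\cdot y}\bigr)$; the $C_0'$-bounds control the boundary term and the new integral for $\re\lambda\ge-\delta_0/2$, yielding $|\FT{p}(\lambda,k)|\le C'/(1+|\lambda|)$ after combining with the trivial bound $|\FT{p}(\lambda,k)|\le 2C_0'/\delta_0$, with $C'$ depending only on $C_0,\delta_0,\gamma_2$.

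\emph{Step 2 (existence, uniqueness, regularity).} Write (\ref{eq:Arenewal}) as $G=p+\mathcal{K}G$, $(\mathcal{K}f)(t,x):=\int_0^t\!\rmd s\sum_y f(t-s,x-y)p_{s,y}$. On $C([0,T]\times\Lambda_L)$ with the sup norm, $\|\mathcal{K}\|\le\rho_T:=\int_0^T\!\rmd t\sum_x p_{t,x}<1$ by the third hypothesis, so $G=(1-\mathcal{K})^{-1}p=\sum_{n\ge1}p^{*n}$ is the unique continuous solution on $[0,T]$; as $T$ is arbitrary this yields existence and uniqueness on $[0,\infty)\times\Lambda_L$. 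Each $p^{*n}\ge0$, so $G\ge0$; each $p^{*n}$ is $C^1$ in $t$ and the series of $t$-derivatives converges uniformly on compacts (same exponential bounds applied to $p$ and $\partial_t p$), so $G\in C^1$.

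\emph{Step 3 (integral representation).} For $\re\lambda>0$ one has $|\FT{p}(\lambda,k)|<1$, and since $G\ge0$ with $\int_0^\infty\!\rmd t\sum_x G(t,x)\rme^{-\re\lambda\, t}=\sum_n\FT{p}(\re\lambda,0)^n<\infty$ the Laplace--Fourier transform of $G$ exists and equals $\FT{p}(\lambda,k)/(1-\FT{p}(\lambda,k))=\FT{p}(\lambda,k)+\FT{p}(\lambda,k)^2/(1-\FT{p}(\lambda,k))$. The first summand inverts, via the Bromwich integral along $\re\lambda=\vep$, to $p_{t,x}$ (legitimate because $t\mapsto\sum_y p_{t,y}\rme^{-\ci2\pi k\cdot y}$ is $C^1$ and integrable); by items (ii)--(iii) the second satisfies $|\FT{p}^2/(1-\FT{p})|\le(C')^2c_\vep^{-1}(1+|\lambda|)^{-2}$ on $\re\lambda=\vep$, so its Bromwich integral is absolutely convergent and, by uniqueness of Laplace transforms, equals $G-p$. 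This gives (\ref{eq:defGreen}) for every $\vep>0$, and the $|\lambda|^{-2}$ bound is exactly the asserted absolute convergence.

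\emph{Step 4 (boundedness).} From $|\FT{p}(\lambda,k)|\le C'/(1+|\lambda|)$ we get $|1-\FT{p}(\lambda,k)|\ge\tfrac12$ for $|\lambda|$ large on $\re\lambda\ge-\delta_0/2$, so for each $k\in\Lambda_L^*$ the function $1-\FT{p}(\cdot,k)$ has only finitely many zeros there. It has no zero with $\re\lambda>0$; a zero $\FT{p}(\ci\alpha,k)=1$ would force $\rme^{-\ci\alpha s}$ to be constant on a set of positive measure, impossible for $\alpha\ne0$ since $p$ is continuous with total integral $1$; and the zero at $\lambda=0$ (when $\FT{p}(0,k)=1$) is simple because then $\partial_\lambda\FT{p}(0,k)=-\int_0^\infty\!\rmd s\,s\sum_y p_{s,y}\rme^{-\ci2\pi k\cdot y}$ has modulus $\int_0^\infty\!\rmd s\,s\sum_y p_{s,y}>0$, the support being restricted to $k\cdot y\in\Z$. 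Picking $\delta\in(0,\delta_0/2]$ that avoids the finitely many zero-abscissae over all $k\in\Lambda_L^*$ and shifting the contour in (\ref{eq:defGreen}) to $\re\lambda=-\delta$ produces a finite sum of residues $\propto(\text{polynomial in }t)\,\rme^{\lambda_j t}$ with $\re\lambda_j\le0$ (no polynomial factor at $\lambda_j=0$), all bounded in $t$, plus a remainder of order $\rme^{-\delta t}$; with $|p_{t,x}|\le C_0$ this shows $G$ is bounded. The main obstacle is exactly this step: boundedness cannot be read off the contraction estimate, whose constants $(1-\rho_T)^{-1}$ blow up as $T\to\infty$, and requires instead the representation together with the zero count of $1-\FT{p}$ in the closed right half-plane --- a qualitative analogue of the quantitative argument carried out, under the nondegeneracy hypothesis, in Proposition \ref{th:Greenprop}. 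A lesser technical point is the Laplace inversion of Step 3: this is why (\ref{eq:defGreen}) keeps $p_{t,x}$ separate, since only then does the residual Bromwich integrand decay like $|\lambda|^{-2}$.
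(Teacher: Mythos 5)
Your proposal is correct, and while Steps 1 and 2 coincide with the paper's argument (the same estimates on $\FT{p}$, including the integration-by-parts bound and the $c_\vep>0$ argument, and the same contraction/Dyson-series uniqueness on $[0,T]$ with $\rho_T<1$), the overall construction runs in the opposite direction: the paper \emph{defines} $G$ by the Bromwich formula (\ref{eq:defGreen}), checks absolute convergence and vanishing for $t\le 0$, verifies the renewal equation (\ref{eq:Arenewal}) by a direct Fubini computation with the identity $\FT{p}^3/(1-\FT{p})=\FT{p}^2/(1-\FT{p})-\FT{p}^2$, and only then identifies it with the unique (nonnegative) Neumann-series solution; you instead build $G=\sum_{n\ge 1}p^{*n}$ first, compute its Laplace--Fourier transform $\FT{p}/(1-\FT{p})$, and recover (\ref{eq:defGreen}) by inversion. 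The paper's route avoids any inversion theorem, whereas your Step 3 leans on one: the identification of the absolutely convergent Bromwich integral of $\FT{p}^2/(1-\FT{p})$ with $G-p$ should be backed up either by the Fourier--Mellin inversion theorem for $C^1$, exponentially bounded functions (note the series only gives $\sum_x G(t,x)\le C_\vep\rme^{\vep t}$, which suffices) or by a Fubini-plus-contour computation of the Laplace transform of that integral; this is routine but worth a sentence, as is the term-by-term differentiation of the series in Step 2, which needs the standard factorial bounds on convolution powers. On the other hand, your Step 4 is a genuine addition: the paper's proof never establishes uniform-in-$t$ boundedness of $G$ (the fixed-$\vep$ representation only yields growth $\rme^{\vep t}$), while your contour shift to $\re\lambda=-\delta$, together with the correct classification of the zeros of $1-\FT{p}(\cdot,k)$ in the closed right half-plane (none for $\re\lambda>0$ by item \ref{it:FTp3}, none at $\ci\alpha$ with $\alpha\ne 0$, a simple zero at $0$ when $\FT{p}(0,k)=1$, finitely many in the strip by item \ref{it:FTp2}), gives a clean qualitative proof of boundedness without the nondegeneracy hypothesis that Proposition \ref{th:Greenprop} needs for its quantitative version; an alternative would have been the classical key renewal theorem applied to $g(t)=\sum_x G(t,x)$.
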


\begin{proof}
  Assume that the constants $C_0,\delta_0,\gamma_2>0$, which will be called the
\defem{input parameters}, and $p_{t,x}$ are given as above.  Then the integrand
in the definition (\ref{eq:defFTp}) is bounded by $C_0 \rme^{-\gamma_2
|y|-s(\delta_0+\re \lambda)}$, hence it is bounded by an $L^1$-function for any
compact subset of the half-plane $\re\lambda>-\delta_0$.  Since the integrand is
an entire function of $\lambda$, we can now conclude (for instance via
Morera's theorem) that for any $k$ formula
  (\ref{eq:defFTp}) defines an analytic function  $\FT{p}(\lambda,k)$ on the
half plane.  In addition, we can compute the derivatives by differentiation
inside the integral for these values of $\lambda$; this can be concluded for
instance by relying on Cauchy's integral formula for derivatives of an analytic
function, and then using Fubini's theorem.  In particular, $\FT{p}$ satisfies
the properties in item \ref{it:FTp1}.
  
If $\re \lambda \ge \vep>0$, then 
  $|\FT{p}(\lambda,k)|\le \int_0^\infty\! \rmd s \, \sum_{y\in \Lambda_L}
p_{s,y} \rme^{-s \vep}=:1-c_\vep$.  
  Applying the assumptions, then $c_\vep = \int_0^\infty\! \rmd s \, \sum_{y}
p_{s,y} (1-\rme^{-s\vep})
  =\vep\int_0^\infty\! \rmd s' \, \rme^{-s'\vep} \int_{s'}^\infty\! \rmd s \,
\sum_{y} p_{s,y}\ge 0$.  If $c_\vep=0$, then 
  $\int_{s'}^\infty\! \rmd s \, \sum_{y} p_{s,y}=0$ for almost every $s'>0$,
which contradicts the assumed normalization condition.
  Therefore, $\FT{p}$ satisfies also item \ref{it:FTp3}. In particular,
$1-\FT{p}(\lambda,k)$ has no zeroes on the positive right half-plane.
  
The assumed differentiability of $p$ allows to use partial integration to
conclude that for any $\lambda\ne 0$ with $\re\lambda>-\delta_0$ we have
\begin{align}
  \FT{p}(\lambda,k)
  = \frac{1}{\lambda} \sum_{y\in \Lambda_L} \rme^{-\ci 2 \pi k \cdot y}\left[
p_{0,y} 
 + \int_0^\infty\! \rmd s \,  \rme^{-s\lambda} \partial_s p_{s,y} \right] \, ,
\end{align}
which is bounded by $|\lambda|^{-1} C_0 
(1+1/(\delta_0+\re \lambda)) \sum_{y\in \Lambda_L} \rme^{-\gamma_2 |y|}$ where $
 \sum_{y} \rme^{-\gamma_2 |y|} \le 2/(1-\rme^{-\gamma_2})$.  If $|\lambda|\le
\delta_0/2$, we similarly obtain from the definition a bound
$|\FT{p}(\lambda,k)|\le 4\delta_0^{-1} C_0 /(1-\rme^{-\gamma_2})$.
Therefore, item \ref{it:FTp2} holds with $C':= 2(1+2\delta_0^{-1})^2 C_0
/(1-\rme^{-\gamma_2})$.

The above estimates imply a bound $\rme^{\vep t} (C')^2 c_\vep^{-1} (1{+}|\lambda|)^{-2}$
for the integrand in (\ref{eq:defGreen}). Thus it is
absolutely integrable, for any $\vep>0$.  In addition, the integrand is analytic
on the whole right half plane, and hence Cauchy's theorem allows to conclude
that its value does not depend on the choice of $\vep$.  We fix some value
$\vep>0$ and then define $G(t,x)$ by (\ref{eq:defGreen}) for all $t\ge 0$,
$x\in\Lambda_L$.  Relying on dominated convergence, we conclude that then $G$ is
continuous.  Also, the above bounds imply that the value of the integral in
(\ref{eq:defGreen}) is zero for any $t\le 0$ since in that case Cauchy's theorem
allows taking $\vep\to \infty$.  In particular, then $G(0,x)=p_{0,x}$ and thus
(\ref{eq:Arenewal}) holds at $t=0$.

In order to check that $G$ satisfies (\ref{eq:Arenewal}) also for $t>0$, we rely
on the following computation, whose steps can be justified by using Fubini's
theorem and the above observation about the vanishing of the integral for
negative $t$:
  \begin{align}
 &   \int_0^t\! \rmd s \, \sum_y p_{s,y} G(t-s,x-y) - \int_0^t\! \rmd s \,
\sum_y p_{s,y} p_{t-s,x-y}
 \nonumber \\ & \quad
    =  \int_0^t\! \rmd s \, \sum_y p_{s,y}
    \int_{\Lambda_L^*} \!\rmd k\, \rme^{\ci 2\pi k\cdot (x-y)}
    \int_{\vep -\ci \infty}^{\vep +\ci \infty}\! \frac{\rmd \lambda}{2\pi \ci}
\, \rme^{\lambda (t-s)} \frac{\FT{p}(\lambda,k)^2}{1-\FT{p}(\lambda,k)}
 \nonumber \\ & \quad
    =  \int_0^\infty\! \rmd s \, \sum_y p_{s,y}
    \int_{\Lambda_L^*} \!\rmd k\, \rme^{\ci 2\pi k\cdot (x-y)}
    \int_{\vep -\ci \infty}^{\vep +\ci \infty}\! \frac{\rmd \lambda}{2\pi \ci}
\, \rme^{\lambda (t-s)} \frac{\FT{p}(\lambda,k)^2}{1-\FT{p}(\lambda,k)}
 \nonumber \\ & \quad
    =  
    \int_{\Lambda_L^*} \!\rmd k\, \rme^{\ci 2\pi k\cdot x}
    \int_{\vep -\ci \infty}^{\vep +\ci \infty}\! \frac{\rmd \lambda}{2\pi \ci}
\, \rme^{\lambda t} \frac{\FT{p}(\lambda,k)^3}{1-\FT{p}(\lambda,k)}
  \nonumber \\ & \quad
    =  G(t,x)-p_{t,x} -    \int_{\Lambda_L^*} \!\rmd k\, \rme^{\ci 2\pi k\cdot
x}
    \int_{\vep -\ci \infty}^{\vep +\ci \infty}\! \frac{\rmd \lambda}{2\pi \ci}
\, \rme^{\lambda t} \FT{p}(\lambda,k)^2
\, .
  \end{align}
Since the final integral is equal to $\int_0^t\! \rmd s \, \sum_y p_{s,y}
p_{t-s,x-y}$, we have proven that (\ref{eq:Arenewal}) holds for all $t\ge 0$ and
$x\in \Lambda_L$.  

Hence, the above function $G$ provides a continuous solution to
(\ref{eq:Arenewal}).  Let us next prove that this solution is unique.  Consider
an arbitrary $t_0>0$ and the Banach space $X=C([0,t_0]\times \Lambda_L,\C)$
endowed with the sup-norm.  For $f\in X$ and $0\le t\le t_0$,
$x\in\Lambda_L$, set $(B f)(t,x) := \int_0^t \!\rmd s\,
\sum_{y\in\Lambda_L} f({t-s},{x-y}) p_{s,y}$.  
Then $B f$ is continuous and bounded by $r_0 \norm{f}$, where
$r_0 := \int_0^{t_0}\!\rmd s\, \sum_{y\in\Lambda_L} p_{s,y}$.  Hence
$\norm{B}\le r_0$ and it follows from the assumptions that $r_0<1$. Therefore,
the inverse of $1-B$ exists and is a bounded operator on $X$ defined by the
convergent Dyson series formula, $(1-B)^{-1} = \sum_{n=0}^\infty B^n$.  Suppose
$f$ is a continuous function which solves (\ref{eq:Arenewal}) and let $g$ denote
the restriction of $f$ to $[0,t_0]$. Then we have $g\in X$ and $(1-B)g$ is equal
to $h$, the restriction of $p$ to $[0,t_0]$.
  Thus necessarily $g=(1-B)^{-1}h$.  Hence for all $0\le t\le t_0$ we have
$f(t,x)=\sum_{n=0}^\infty (B^n h)(t,x)$, and as $t_0$ can be taken arbitrarily
large, this proves the uniqueness of the solution.  Since $B$ is obviously
positivity preserving and $h$ is nonnegative, this also implies that the unique
solution, coinciding with $G$ defined in (\ref{eq:defGreen}), is pointwise
nonnegative.
  As $p$ is assumed to be continuously differentiable, the continuous solution
$G$ to (\ref{eq:Arenewal}) is that, as well.
This concludes the proof of the Proposition.
\end{proof}

\begin{corollary}\label{th:renewcorr}
  Suppose that $p$ and $G$ are given as in Proposition \ref{th:renewalprop}. 
Then for any $h\in C([0,\infty)\times \Lambda_L,\C)$ the formula
  \begin{align}\label{eq:defgensol}
    f(t,x) := h(t,x) + \int_0^t\!\rmd s\, \sum_{y\in\Lambda_L} G(t-s,x-y)
h(s,y)\, , \quad t\ge 0,\ x\in \Lambda_L \, ,
  \end{align}
 defines the unique continuous solution to the equation
  \begin{align}\label{eq:genrenew}
    f(t,x) = h(t,x) + \int_0^t\!\rmd s\, \sum_{y\in\Lambda_L} p_{s,y}
f(t-s,x-y)\, .
  \end{align}
\end{corollary}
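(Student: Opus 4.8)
\begin{proofof}{Corollary \ref{th:renewcorr}}
The plan is to phrase both equations in terms of the space--time convolution already used in the proof of Proposition \ref{th:renewalprop}. Introduce the bilinear operation
\begin{align*}
 (f\star g)(t,x) := \int_0^t\!\rmd s\, \sum_{y\in\Lambda_L} f(t-s,x-y)\, g(s,y)\, ,
\end{align*}
so that the operator $B$ from that proof is $Bf = p\star f$, equation \eqref{eq:genrenew} becomes $(1-B)f=h$, and \eqref{eq:defgensol} reads $f = h + G\star h$. First I would record the two elementary facts that carry the argument: the substitution $(s,y)\mapsto(t-s,x-y)$ shows that $\star$ is commutative, and Fubini shows it is associative; consequently the defining relation \eqref{eq:Arenewal} for the Green's function, $G = p + G\star p$, may be rewritten as $G = p + p\star G$. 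Continuity of the $f$ in \eqref{eq:defgensol} is then immediate from continuity of $h$ and $G$ by dominated convergence, the $s$--integral being over a compact interval and the $y$--sum finite.

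Second, I would verify directly that this $f$ solves \eqref{eq:genrenew}: using $f = h + G\star h$ together with $G = p + p\star G$ and associativity of $\star$,
\begin{align*}
 h + p\star f &= h + p\star h + p\star G\star h \\
 &= h + (p + p\star G)\star h = h + G\star h = f\, ,
\end{align*}
which is precisely \eqref{eq:genrenew}. The only manipulations used are the two algebraic properties of $\star$ recorded above.

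Third, uniqueness is inherited from Proposition \ref{th:renewalprop}: its proof shows that on each Banach space $C([0,t_0]\times\Lambda_L)$, $t_0>0$, the operator $B$ has norm $r_0<1$, so $1-B$ is invertible with $(1-B)^{-1}=\sum_{n\ge 0}B^n$. Hence any continuous solution of \eqref{eq:genrenew}, restricted to $[0,t_0]$, must equal $(1-B)^{-1}h$, and since $t_0$ is arbitrary this determines it on all of $[0,\infty)\times\Lambda_L$; in particular it agrees with the explicit $f$ of \eqref{eq:defgensol}. One could alternatively identify $(1-B)^{-1}h = h + \sum_{n\ge 1}p^{\star n}\star h$ with $h + G\star h$ via the Dyson series $G = \sum_{n\ge 1}p^{\star n}$ for the Green's function, but this is not needed once the direct verification above is in hand.

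I do not expect a genuine obstacle here: the statement is a routine corollary of the renewal machinery, since everything lives on the finite lattice $\Lambda_L$ over finite time horizons, so all interchanges of sums and integrals are automatic. The only point worth stating carefully is the commutativity of $\star$, which is what lets one pass from $G = p + G\star p$ in \eqref{eq:Arenewal} to $G = p + p\star G$ and thereby makes the one-line verification go through.
\end{proofof}
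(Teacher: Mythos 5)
Your proposal is correct and follows essentially the same route as the paper: continuity of \eqref{eq:defgensol} by dominated convergence, direct verification of \eqref{eq:genrenew} via Fubini together with the renewal relation \eqref{eq:Arenewal}, and uniqueness by the same contraction ($\norm{B}\le r_0<1$) argument on $C([0,t_0]\times\Lambda_L)$ used in Proposition \ref{th:renewalprop}. Your convolution-algebra bookkeeping (commutativity and associativity of $\star$) merely makes explicit the change of variables and Fubini steps the paper leaves as ``straightforward''.
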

\begin{proof}
  Since $h$ is continuous, dominated convergence theorem immediately implies
that (\ref{eq:defgensol}) defines a continuous function. Then a straightforward
application of Fubini's theorem and (\ref{eq:Arenewal}) proves that
(\ref{eq:genrenew}) holds for all $t,x$. The uniqueness can be proven via the
same argument which was used in the proof of Proposition \ref{th:renewalprop}.
\end{proof}

\subsection*{Acknowledgments}

The research of J.~Lukkarinen was partially supported by the Academy of
Finland and partially by the European Research Council (ERC) Advanced
Investigator Grant 227772.  I thank Fran\c{c}ois Huveneers,
Wojciech de Roeck, and Herbert Spohn for discussions and suggestions and Janne Junnila for an
introduction to mathematical ergodic theory.  I am also grateful to the
anonymous reviewer for pointing out additional references and improvements of
representation.

\end{document}